 \newtheorem{thm}{Theorem}[section]
 \newtheorem{cor}[thm]{Corollary}
 \newtheorem{lem}[thm]{Lemma}
 \newtheorem{prop}[thm]{Proposition}
 \newtheorem{claim}[thm]{Claim}
 \newtheorem{defn}[thm]{Definition}
 \newtheorem{rem}[thm]{Remark}
 \newcommand{\set}[1]{\left\{#1\right\}}
\newcommand{\eps}{\varepsilon}
\newcommand \be     {\begin{equation}}
\newcommand \ee     {\end{equation}}
\newcommand {\RR} {\mathbb{R}}
\newcommand {\CC} {\mathbb{C}}
\newcommand {\RT }   {\mathbb{R}^3}
\newcommand {\Rn}   {\mathbb{R}^n}
\newcommand \Hcal   {\mathcal H}
\newcommand \Lcal   {\mathcal L}
\newcommand \Hcalth   {{\mathcal H}^\theta}
\newcommand{\MX}{\mathcal{X}}
\newcommand \Zbar {\overline{Z}}
 \newcommand{\half}{\frac{1}{2}}
 \newcommand{\pa}{\partial}
 \newcommand{\sbar}{\overline{s}}
 \newcommand{\paxj}{\partial x_j}
 \newcommand{\suml}{\sum\limits}
 \newtheorem*{THEOREM S}{REGULARITY ASSUMPTION ON SPECTRAL DENSITY}
 \newtheorem{conjecture}[thm]{CONJECTURE}
 \numberwithin{equation}{section}
\begin{document}

\title [FIRST-ORDER SYSTEMS -DIRAC TO CRYSTALS]
 {  SPECTRAL THEORY OF FIRST-ORDER SYSTEMS: FROM CRYSTALS TO DIRAC OPERATORS}

\author{Matania Ben-Artzi}
\address{Matania Ben-Artzi: Institute of Mathematics, The Hebrew University, Jerusalem 91904, Israel}
\email{mbartzi@math.huji.ac.il}
\author{Tomio Umeda}
\address{Tomio Umeda: Department of
Mathematical Sciences,  University of Hyogo, Himeji 671-2201,
Japan}
\email{	umeda@sci.u-hyogo.ac.jp}


\thanks{This work was started at the Newton Institute, Cambridge UK, when the two authors were participants in the programme ``Spectral Theory of Relativistic Operators'' (Summer 2012), organized by M. Brown, M. J. Esteban, K. Schmidt and H. Siedentop. The authors thank the Newton Institute and the organizers for their support.  The University of Hyogo (Himeji) and Ritsumeikan University supported M. Ben-Artzi in his visit (March-April 2017) and the Hebrew University supported T. Umeda in his visit (January 2018), when this paper was completed. Special thanks are due to Professor H. Isozaki of Ritsumeikan University for his support and Professor H. Tamura for his valuable suggestions and constant interest in this work. We thank Professor V. Mazya for bringing the reference ~\cite{hormander-lions} to our attention.
T.Umeda was partially supported by
 the Japan Society for the Promotion of Science
     ``Grant-in-Aid for Scientific Research'' (C)
    No.  26400175.
    \color{black}}

\subjclass[2010]{Primary 35Q41; Secondary 35Q61, 35Q40}

\keywords{Dirac operator, Maxwell equations, first-order systems, strongly propagative systems, spectral derivative, spacetime estimates, spectrum, limiting absorption principle, thresholds, eigenvalues in gap, perturbation  by potential}

\date{\today}






\begin{abstract}
   Let
   $$L_0=\suml_{j=1}^nM_j^0D_j+M_0^0,\,\,\,\,D_j=\frac{1}{i}\frac{\pa}{\paxj},
       \quad x\in\Rn,$$
       be a constant coefficient
   first-order partial differential system, where the matrices $M_j^0$ are Hermitian. It is assumed that the homogeneous part is strongly propagative. In the nonhomegeneous case it is assumed that the operator is isotropic  . The spectral theory of such systems and their potential perturbations is expounded, and a Limiting Absorption Principle  is obtained up to thresholds. Special attention is given to a detailed study of the Dirac and Maxwell operators.

   The estimates of the spectral derivative near the thresholds are based on detailed trace estimates on the slowness surfaces. Two applications of these estimates are presented:
   \begin{itemize}
   \item Global spacetime estimates of the associated evolution unitary groups, that are also commonly viewed as decay esimates. In particular  the Dirac and Maxwell systems are explicitly treated.
       \item The finiteness of the eigenvalues (in the spectral gap) of the perturbed Dirac operator is studied, under suitable decay assumptions on the potential perturbation.
\end{itemize}
\end{abstract}
\maketitle
\section{\textbf{I
INTRODUCTION}}
   The equations of classical physics, governing acoustic, electromagnetic and elastic waves
    in  anisotropic media, are given as first-order hyperbolic systems. Similarly, the Dirac equation
    of relativistic quantum electrodynamics is such a system. Already in the classical treatise of Courant and Hilbert
    the common features of  these systems are brought to the fore  ~\cite[Chapter III, \S2, Chapter VI, \S3a]{courant}.

     Motivated by this approach,
    emphasizing a unifying point of view, we study here basic spectral properties of a
    class of first-order self-adjoint operators, that includes the aforementioned systems,
    or, more explicitly, their spatial generators.

    Thus, we consider operators acting on $\CC^K$-valued functions in $\Rn,\,n\geq 2,$  having the form
    \begin{equation}\label{systemL}L=\suml_{j=1}^nM_j^0D_j+M^0_0+V(x),\,\,D_j=\frac{1}{i}\frac{\pa}{\paxj},
    \quad x\in\Rn,
\end{equation}
      where  the matrices $M_j^0,\,0\leq j\leq n,$ are Hermitian $K\times K$  constant matrices
      and the matrix $V(x)$
      is Hermitian for every $x\in\Rn.$  Thus $L$ is symmetric (with respect to the $\Lcal^2(\Rn,\CC^K)$ scalar product)
       on $C^\infty_0(\Rn,\CC^K).$

      The obvious notation $$\Lcal^2(\Rn,\CC^K)=\Lcal^2(\Rn)\bigotimes
      \mathbb{C}^K$$
      (resp. $C^\infty_0(\Rn,\CC^K)=C^\infty_0(\Rn)\bigotimes \mathbb{C}^K$) has been used
      for the space of Lebesgue square-integrable (resp. smooth and compactly supported) $\CC^K$-valued functions.

      Our objectives in this paper are:
    \begin{itemize}

      \item Establish a ``Limiting Absorption Principle'' (LAP) for  the special cases of the Dirac and Maxwell operators (Section ~\ref{sec-spec-dirac-maxwell}) up to the thresholds, with sharp resolvent estimates.

           \item Generalize the LAP to ``strongly propagative'' or  ``isotropic operators'' (Section ~\ref{secspecunperturbed}).

       \item  Extend the LAP to the case of potential perturbation (Section ~\ref{perturbed-sec}).

    \item Exploit the sharp   resolvent estimates of the Dirac and Maxwell operators near the  thresholds in the study of two applications:

         \begin{itemize}
          \item Obtain decay conditions on the potential perturbation of the Dirac operator which guarantees the finiteness of the isolated eigenvalues in the spectral gap (Subsubsection ~\ref{secfiniteevs}).
          \item Derive  global spacetime estimates of the solutions in weighted $L^2$ spaces (Section ~\ref{secspacetime}).
              \end{itemize}
   \item Extend the  global spacetime estimates of the solutions in weighted $L^2$ spaces to the case of strongly propagative systems (Section ~\ref{secspacetimestrongprop}).

    \end{itemize}

          Both the Dirac (Subsection ~\ref{subsec-dirac}) and the Maxwell (Subsection ~\ref{subsec-maxwell}) systems are isotropic. Their prime significance in physics justifies the detailed study of their spectral structure, as is carried out in Section ~\ref{sec-spec-dirac-maxwell}.

           The operator $L$ is viewed as a perturbation of a constant coefficient, symmetric operator
       \begin{equation}\label{systemL0}L_0=\suml_{j=1}^nM_j^0D_j+M_0^0,\,\,\,\,D_j=\frac{1}{i}\frac{\pa}{\paxj},
       \quad x\in\Rn.
\end{equation}
       The cases where $L_0$ is either \textit{strongly propagative} (Definition ~\ref{defn-strong-prop})  or \textit{isotropic}
        (Definition ~\ref{def-spherical-symbol}) are studied in Section ~\ref{secspecunperturbed}.

         It will be seen that the LAP
    is closely connected to the geometry of the level sets of the characteristic surfaces (``normal'',
    by the terminology of ~\cite{courant}) in $\Rn.$
     In the physical context they are  referred to as ``Slowness Surfaces''~\cite[Section 4]{wilcox1}.

          The presence of the non-zero matrix $M_0^0$  is motivated by the ``massive'' Dirac operator (Subsection ~\ref{subsec-dirac}). The basic operator $L_0$ is then nonhomogeneous, meaning that the associated eigenvalues are not homogeneous functions (see Section  ~\ref{secspecunperturbed} for details) and as a result the geometry of the ``slowness surfaces'' is more complex. As far as we know, the Dirac operator is the only instance of a nonhomogeneous operator that has been treated in the literature. In this paper we study the nonhomogeneous operator in the general isotropic case.

          A general (potential) perturbation theory
      is  presented in Section ~\ref{perturbed-sec}.
        As an immediate consequence of the LAP for the perturbed operator,
    it follows that the spectrum is absolutely continuous, except possibly for a discrete sequence of embedded eigenvalues.

        The general theory is then applied to the perturbed Dirac operator, as an important example.  The threshold estimates for the perturbed Dirac operator enable us (Subsubsection ~\ref{secfiniteevs})
to give criteria for the finiteness of eigenvalues in the spectral gap.

 Remark that the perturbed Maxwell system can also be reduced to the case of potential perturbation ~\cite[Section 1.4]{tamura1}, but we choose not to treat it here in detail, as the paper is already quite long. We refer to ~\cite{birman, filonov} and references therein for the study of the perturbed Maxwell operator in terms of self-adjointness and absolute continuity of the spectrum.

     In Section ~\ref{secspacetime} we establish global spacetime estimates for the Dirac and Maxwell operators, and in Section ~\ref{secspacetimestrongprop} such estimates are derived for general (homogeneous) strongly propagative operators.

    The literature concerning various aspects of the spectral and scattering theory of first-order systems, as well as asymptotic decay of their solutions, is very extensive, hence our reference list is far from being comprehensive. Certainly there are many papers that well deserve being included in our list and of which we are not aware.  However, we have made an effort to include references to works directly related to this paper and dealing primarily with the LAP (in interior intervals of the spectrum or at thresholds). Most of these papers were concerned with the Dirac operator. We refer to ~\cite{murata,rauch,smith,wilcox} and references therein for classical treatments of decay of solutions of first-order symmetric systems, and to ~\cite{mochizuki} in the case of exterior domains.

    The first proof of the LAP for the (massive) Dirac operator with short range potential perturbation was obtained by Yamada~\cite{yamada0}. It was subsequently proved (imposing various hypotheses on the perturbation) by several authors (see ~\cite{balslev,boussaid,boussaid1,PSU, yamada} for the massive case and ~\cite{ancona1,SU1} for the zero mass case), as well as the recent papers ~\cite{carey,erdogan} (and references therein) for both the massive and zero mass cases.  Their treatments rely on the fact (see Equation ~\eqref{squaredirac}) that the square of this operator is the Schr\"{o}dinger operator. A weak* form of the LAP, using the methodology of conjugate operators, was obtained in ~\cite{berthier2}. The LAP (for the Dirac operator) up to the threshold was proved in ~\cite{ifimovici}. Recently their method has been extended in ~\cite{carey} in order to obtain a global LAP for the massless Dirac Operator in all dimensions.  The zero modes and zero resonances of massless Dirac operators were studied in ~\cite{SU1}
    and eigenfunctions at the threshold energies were studied in ~\cite{SU2}.
     We refer also to ~\cite{ba11,umeda} for the closely related ``relativistic Schr\"{o}dinger operator''.

    For the Maxwell equations in crystal optics the LAP was established in ~\cite[Theorem 1.2]{tamura1}.

    For a uniformly propagative system (see Definition ~\ref{defn-uniform-prop}) of the form $E(x)^{-1}\suml_{j=1}^nM_j^0D_j$ the LAP was established in ~\cite{tamura,yajima}, and also in ~\cite{Schulenberger}.  For the more general strongly propagative systems it was proved in   ~\cite[Lemma 2.1]{weder}. See however Remark ~\ref{remLAPWeder} concerning this paper. These works dealt with interior intervals of the spectrum, whereas here we obtain detailed estimates at the thresholds.

    We are not aware of any work where the LAP for the nonhomogeneous  operator $L_0$  (namely, $M_0^0\neq 0$) has been addressed, other than the massive Dirac operator.

    We refer to Remark ~\ref{remfiniteevs} concerning previous studies of the finiteness of eigenvalues in the spectral gap, in the case of the perturbed (massive) Dirac operator.

\section{\textbf{THE CLASS OF UNPERTURBED OPERATORS--DIRAC AND MAXWELL SYSTEMS}}\label{unperturbed-sec}
  The coefficients $M_0^0,\,M_1^0,\ldots,M_n^0$ of the unperturbed  operator $L_0$ ~\eqref{systemL0}
  are constant Hermitian $K\times K$ matrices (over $\CC$). The addition of $M_0^0$ will enable us in particular to
  include the massive Dirac operator in our treatment.

  The (unitary) Fourier transform is defined by

        \begin{equation}\label{fourier}(\mathcal{F}u)(\xi)= \widehat{u}(\xi)=(2 \pi)^{-\frac{n}{2}} \int
         \limits_{\Rn} u(x) e^{-i <\xi, x>_{\Rn}} dx .\end{equation}

           The constant coefficient operators are transformed into multiplication operators (by \textit{symbols}).

  The homogeneous part of $L_0$ is assumed to be strongly propagative~\cite{wilcox},
  according to Definition ~\ref{defn-strong-prop} below.

        We shall address this general case in Section ~\ref{secspecunperturbed}. We start here with the two most famous physical examples of such operators, namely, the Dirac and Maxwell systems. As we shall see, both systems (including the nonhomogeneous Dirac system with mass) share the property of being \textit{isotropic} (Definition
        ~\ref{def-spherical-symbol} below).

\subsection{\textbf{EXAMPLE: THE FREE DIRAC OPERATOR}}\label{subsec-dirac} As a special case  we consider
the free Dirac operator. It is the self-adjoint operator $H_m$ in
$\Lcal^2(\RT,\mathbb{C}^4)=\Lcal^2(\RT)\bigotimes \mathbb{C}^4$
given by
\begin{equation}\label{diracop}H_m=\alpha\cdot D+m\beta,\,\,m\geq 0,
\end{equation}
where \be\label{del} D=\frac{1}{i} \nabla_x,\quad x\in \RT, \ee
and $\alpha=(\alpha_1,\alpha_2,\alpha_3)$ is the triplet of
$4\times 4$ Dirac matrices \be
\alpha_j=\begin{pmatrix}
  O_2 & \sigma_j \\
  \sigma_j& O_2
\end{pmatrix},\quad j=1,2,3.
\ee Here $O_2$ is the $2\times 2$ zero matrix and the $2\times 2$ matrices $\sigma_j$ (Pauli matrices) are given
by \be \sigma_1=\begin{pmatrix}
       0&1 \\
       1 & 0\\
     \end{pmatrix},\,\,\,\sigma_2=\begin{pmatrix}
       0&-i \\
       i & 0\\
     \end{pmatrix},\,\,\,\sigma_3=\begin{pmatrix}
       1&0 \\
       0 & -1\\
     \end{pmatrix},\ee
     and
     \be\beta=\begin{pmatrix}
       I_2&O_2 \\
       O_2& -I_2\\
     \end{pmatrix},\,\,I_2=\begin{pmatrix}
       1&0 \\
       0& 1\\
     \end{pmatrix}.
     \ee
      The symbol $M_m(\xi)$ (corresponding to $M_0(\xi)$ above) is a $4\times 4$
    Hermitian matrix given by
    \be
M_m(\xi)=\alpha\cdot\xi+m\beta.
    \ee
     It is readily verified that
     we have the following equality of  self-adjoint operators  in
$L^2(\RT,\mathbb{C}^4),$
\begin{equation}\label{squaredirac}(H_m)^2=(-\Delta+m^2)\bigotimes
I_4.
\end{equation}

    \begin{claim}
     The eigenvalues of $M_m(\xi)$ are given by
    $\lambda_\pm(\xi)=\pm\sqrt{|\xi|^2+m^2},$ and are both of
    double multiplicity (except for $m=0$ and $\xi=0$).
       \end{claim}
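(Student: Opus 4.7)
The plan is to compute $M_m(\xi)^2$ first and then read off the eigenvalues from the minimal polynomial, using the trace to pin down multiplicities. The relation $(H_m)^2=(-\Delta+m^2)\otimes I_4$ in \eqref{squaredirac} is the Fourier multiplier version of the identity
\begin{equation*}
M_m(\xi)^2=(|\xi|^2+m^2)\,I_4,
\end{equation*}
so I can either invoke \eqref{squaredirac} directly or verify this at the matrix level from the standard Clifford-algebra relations
\begin{equation*}
\alpha_j\alpha_k+\alpha_k\alpha_j=2\delta_{jk}I_4,\qquad \alpha_j\beta+\beta\alpha_j=0,\qquad \beta^2=I_4,
\end{equation*}
which follow immediately from the $2\times 2$ Pauli relations $\sigma_j\sigma_k+\sigma_k\sigma_j=2\delta_{jk}I_2$.

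With $M_m(\xi)^2=(|\xi|^2+m^2)I_4$ in hand, any eigenvalue $\lambda$ of $M_m(\xi)$ satisfies $\lambda^2=|\xi|^2+m^2$. Since $M_m(\xi)$ is Hermitian, its eigenvalues are real, so $\lambda\in\{+\sqrt{|\xi|^2+m^2},\,-\sqrt{|\xi|^2+m^2}\}=\{\lambda_+(\xi),\lambda_-(\xi)\}$. Because $M_m(\xi)$ is a $4\times 4$ Hermitian matrix it is diagonalizable with four real eigenvalues counted with multiplicity, all of which lie in this two-element set.

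To determine the multiplicities I use the trace. From the block structure of $\alpha_j$ (zero diagonal blocks) we have $\mathrm{tr}(\alpha_j)=0$, and $\mathrm{tr}(\beta)=2-2=0$, so $\mathrm{tr}(M_m(\xi))=0$. Writing the multiplicities of $\lambda_\pm(\xi)$ as $n_\pm\geq 0$ with $n_++n_-=4$, the trace condition gives $(n_+-n_-)\sqrt{|\xi|^2+m^2}=0$. As long as $|\xi|^2+m^2>0$ this forces $n_+=n_-=2$, proving the claimed double multiplicity. The only excluded case is $m=0$ and $\xi=0$, where $M_0(0)=0$ has the single eigenvalue $0$ of multiplicity $4$, consistent with the exception stated in the claim. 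No step is really a main obstacle here; the only mild point to be careful about is that the vanishing of $\sqrt{|\xi|^2+m^2}$ is precisely the degenerate case the statement excludes.
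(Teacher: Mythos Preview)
Your proof is correct and follows exactly the approach the paper sets up: the identity \eqref{squaredirac}, whose symbol-level form is $M_m(\xi)^2=(|\xi|^2+m^2)I_4$, immediately constrains the eigenvalues to $\pm\sqrt{|\xi|^2+m^2}$, and the trace argument cleanly fixes the multiplicities. The paper does not spell out a proof of the claim, but your argument is the natural one given the preceding identity.
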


       It follows that the homogeneous operator $H_0$ is uniformly propagative (Definition ~\ref{defn-uniform-prop}).

     For any
    $\xi\in\RT$ (assuming either $\xi\neq 0$ or $m\neq 0$) there exists a unitary matrix $U_m(\xi)$ such
    that
    \be\label{eqdiraccoerciv}
      U_m(\xi)^*M_m(\xi)U_m(\xi)=\begin{pmatrix}
       \lambda_+(\xi)I_2& O_2\\
       O_2& \lambda_-(\xi)I_2\\
     \end{pmatrix}.
    \ee

  For  $f\in \Lcal^2(\RT,\mathbb{C}^4)$ we define the transformation
    \be\label{eqdefGm}
    (\mathcal{G}_mf)(\xi)=U_m^{\ast}(\xi)\widehat{f}(\xi),\,\,\xi\in\RT.
        \ee
        Then
$$\mathcal{G}_m:\Lcal^2_x(\RT,\mathbb{C}^4)\to \Lcal^2_{\xi}(\RT,\mathbb{C}^4)\quad,$$
   is unitary and diagonalizes $H_m$ in the sense that $\mathcal{G}_mH_m\mathcal{G}_m^{-1}$ in $\Lcal^2_{\xi}(\RT,\mathbb{C}^4)$ is given by the diagonal multiplication operator
  \be\label{eqdiagHm}
   \Big(\mathcal{G}_mH_m\mathcal{G}_m^{-1}\widehat{f}\Big)(\xi)=\begin{pmatrix}
       \lambda_+(\xi)I_2&O_2 \\
       O_2& \lambda_-(\xi)I_2\\
     \end{pmatrix}\widehat{f}(\xi).
  \ee

    In fact, in the physical literature this transformation is known as the \textbf{ Foldy-Wouthuysen-Tani transformation}
    ~\cite{foldy}.
      The transformation is explicitly presented (as can be easily verified) by ~\cite[Section 1.4]{thaller} and ~\cite[Section 2.1]{balinsky}:
     \be\label{eqexactUm}\aligned
     U_m(\xi)=\frac{1}{\sqrt{2}\sqrt{\lambda_+(\xi)^2+m\lambda_+(\xi)}}\Big\{(\lambda_+(\xi)+m)I_4+
     \beta(\alpha\cdot\xi)\Big\}\hspace{50pt}\\
     \\=\frac{1}{\sqrt{2}\sqrt{\lambda_+(\xi)^2+m\lambda_+(\xi)}}\left(
                                                                \begin{array}{cccc}
                                                                  \lambda_+(\xi)+m & 0 & -\xi_3 & -\xi_1+i\xi_2 \\
                                                                  0 & \lambda_+(\xi)+m & -\xi_1-i\xi_2 &\xi_3\\
                                                                  \xi_3 & \xi_1-i\xi_2 & \lambda_+(\xi)+m & 0 \\
                                                                  \xi_1+i\xi_2 & -\xi_3 & 0 & \lambda_+(\xi)+m \\
                                                                \end{array}
                                                              \right)
     \endaligned\ee

     Clearly, the columns of this matrix are the eigenvectors of the symbol matrix $M_m(\xi).$

     We shall need this transformation when studying the spectral structure of the Dirac operator in Subsection ~\ref{subsection-spec-dirac}.
     \begin{rem}  Another representation of ~\eqref{eqexactUm}
      is  given by (see ~\cite{balslev})
     \be\label{eqfoldyhelffer} U_m(\xi)=\exp\Big\{-\beta (\alpha\cdot\xi)\theta_m(|\xi|)\Big\},\ee
     $$\theta_m(t)=(2t)^{-1}\arctan (m^{-1}t),\quad m,\,t>0.$$
     Observe that due to the double multiplicity of $\lambda_\pm(\xi)$ the diagonalizing matrix $U_m(\xi)$ is not unique.

     \end{rem}

\subsection{\textbf{EXAMPLE: THE FREE MAXWELL OPERATOR}}\label{subsec-maxwell}

    Consider a pair of three-dimensional vector functions $E$ (the electric field) and $B$ (the magnetic field).  We shall denote by $\binom{E}{B}$ the six-component (column) vector that consists of the vertical arrangement of $E,\,B.$

  The free (vacuum) Maxwell operator $L_{maxwell}$ is a $6\times 6$ self-adjoint operator acting on the combined vector  $\binom{E}{B}$  in
$\Lcal^2(\RT,\mathbb{C}^6)=\Lcal^2(\RT)\bigotimes \mathbb{C}^6,$

\begin{equation} \label{eq:maxwell0}
L_{maxwell}\binom{E}{B}=\frac{1}{i}\begin{pmatrix}
       O_3& -\text{curl} \\
       \text{curl} & O_3
     \end{pmatrix}\binom{E}{B}
=\begin{pmatrix}
       O_3& -D \times \\
       D \times & O_3
     \end{pmatrix}\binom{E}{B},
\end{equation}
where as above $D=(D_1, \, D_2, \, D_3)=\frac{1}{i} \nabla_x,\quad x\in \RT$, and  $O_3$ is the $3\times 3$ zero matrix.

\begin{rem} Recall that the electric ($E$) and magnetic ($B$) fields in vacuum  are divergence-free fields, namely, there are no electric or magnetic charges. This restriction is imposed on the initial data and then the time-dependent equations ensure that the fields evolve as divergence-free fields.

However, here we take the point-of-view that $L_{maxwell}$ acts on the entire space $\Lcal^2(\RT,\mathbb{C}^6).$
\end{rem}

For the Maxwell  operator $L_{maxwell}$ in (\ref{eq:maxwell0}),
the matrix symbol $M_{maxwell}(\xi)$  is readily seen to be of the form
\begin{equation} \label{eq:maxwell1}
M_{maxwell}(\xi)=\begin{pmatrix}
      O_3& - \gamma (\xi)\\
       \gamma (\xi)& O_3
     \end{pmatrix},
\end{equation}
where, for $\xi=(\xi_1,\xi_2,\xi_3),$
\begin{equation*}
\gamma(\xi)=
\begin{pmatrix}
 0 &-\xi_3 & \xi_2  \\
\xi_3 &   0  & -\xi_1 \\
-\xi_2  & \xi_1  &  0
\end{pmatrix}.
\end{equation*}
The eigenvalues of $M_{maxwell}(\xi),\,\xi\neq 0,$ are
\begin{equation*}
\lambda_+(\xi)=|\xi| > \lambda_0(\xi) =0 > \lambda_{-}(\xi)=-|\xi|,
\end{equation*}
 and the
multiplicity of each of them
is two. Since $\gamma (\xi)\xi=0,$ it follows that the kernel of $M_{maxwell}(\xi)$ is two-dimensional, with basis
 vectors
$$\left(
            \begin{array}{c}
              0 \\
              0 \\
              0 \\
              \xi_1 \\
              \xi_2 \\
              \xi_3 \\
            \end{array}
          \right)\,\,,\,\,\,\,\,\left(
            \begin{array}{c}
              \xi_1\\
              \xi_2\\
              \xi_3\\
              0\\
              0 \\
              0 \\
            \end{array}
          \right).$$

Thus the  kernel   is two-dimensional,
independently of $\xi \in {\mathbb R}^3 \setminus \{ 0 \}$. It follows that this operator is uniformly propagative (Definition ~\ref{defn-uniform-prop} below).

 In order to construct the full system of eigenvectors of $M_{maxwell}(\xi)$ we introduce (for any $\xi\neq0$) the following vectors in $\mathbb{C}^3:$

$$\mathfrak{a}(\xi)=\begin{cases}\left(
            \begin{array}{c}
              -\xi_1\xi_2 \\
              \xi_1^2+\xi_3^2 \\
              -\xi_2\xi_3 \\
                          \end{array}\right)\\ \\
                         \left( \begin{array}{c}
              \xi_2 \\
              0\\
              \xi_2\\
                          \end{array}\right)\end{cases},\quad \mathfrak{b}(\xi)=\begin{cases}\left(
            \begin{array}{c}
              -\xi_3|\xi| \\
              0 \\
              \xi_1|\xi| \\
                          \end{array}\right),\,\,\xi_1^2+\xi_3^2>0,\\ \\
                         \left( \begin{array}{c}
              |\xi_2| \\
              0\\
              -|\xi_2| \\
                          \end{array}\right),\,\,\xi_1=\xi_3=0\end{cases}.$$
                           It is readily seen that the vectors $\mathfrak{a}(\xi),\,\mathfrak{b}(\xi),\,\xi$ are mutually orthogonal in $\mathbb{C}^3.$ They also form a right-hand orthogonal system in $\RT,$ and satisfy the relations
      $$\gamma(\xi) \mathfrak{a}(\xi)=|\xi| \mathfrak{b}(\xi),\quad   \gamma(\xi) \mathfrak{b}(\xi)=-|\xi| \mathfrak{a}(\xi) . $$
      In the following corollary, we write a (column) vector in $\mathbb{C}^6$ as a column with two (column) $\mathbb{C}^3$ components.
      \begin{cor}\label{corUpsilonpm0}

       Consider the following three pairs of vectors in $\mathbb{C}^6$
     \be\aligned \Upsilon_+= \set{\frac{1}{\sqrt{|\mathfrak{a}(\xi)|^2+|\mathfrak{b}(\xi)|^2}}\left(
            \begin{array}{c}
              \mathfrak{a}(\xi) \\
                            \mathfrak{b}(\xi) \\
                          \end{array}\right),\,\frac{1}{\sqrt{|\mathfrak{a}(\xi)|^2+|\mathfrak{b}(\xi)|^2}}\left(
            \begin{array}{c}
              -\mathfrak{b}(\xi) \\
                            \mathfrak{a}(\xi) \\
                          \end{array}\right)},
                        \\  \\\Upsilon_{0}= \set{\frac{1}{|\xi|}\left(
            \begin{array}{c}
              0 \\
                            \xi \\
                          \end{array}\right),\,\frac{1}{|\xi|}\left(
            \begin{array}{c}
              \xi \\
                           0 \\
                          \end{array}\right)}\hspace{100pt},\\ \\
                           \Upsilon_{-}= \set{\frac{1}{\sqrt{|\mathfrak{a}(\xi)|^2+|\mathfrak{b}(\xi)|^2}}\left(
            \begin{array}{c}
              -\mathfrak{a}(\xi) \\
                            \mathfrak{b}(\xi) \\
                          \end{array}\right),\,\frac{1}{\sqrt{|\mathfrak{a}(\xi)|^2+|\mathfrak{b}(\xi)|^2}}\left(
            \begin{array}{c}
              \mathfrak{b}(\xi) \\
                            \mathfrak{a}(\xi) \\
                          \end{array}\right)}.
                          \endaligned\ee
                          Then the set of six vectors  $\Upsilon_{+}\cup\Upsilon_{0}\cup\Upsilon_{-}\subseteq\mathbb{C}^6$ constitutes an orthonormal basis of eigenvectors of $M_{maxwell}(\xi).$

                          The pairs $\Upsilon_{\pm }$ (resp. $\Upsilon_{0}$) are eigenvectors associated with $\lambda_{\pm }(\xi)$ (resp. $\lambda_0(\xi)$).

      \end{cor}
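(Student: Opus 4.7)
The proof is essentially a direct verification using the algebraic identities already assembled just before the statement. The plan is to exploit the block structure
\begin{equation*}
M_{maxwell}(\xi)\binom{u}{v}=\binom{-\gamma(\xi)v}{\gamma(\xi)u},\qquad u,v\in\mathbb{C}^3,
\end{equation*}
together with the three relations
$\gamma(\xi)\xi=0$, $\gamma(\xi)\mathfrak{a}(\xi)=|\xi|\mathfrak{b}(\xi)$,
$\gamma(\xi)\mathfrak{b}(\xi)=-|\xi|\mathfrak{a}(\xi)$.
Note that these relations hold for either of the two case definitions of $\mathfrak{a},\mathfrak{b}$, so no case split is needed in the argument.

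First I would verify, one pair at a time, that the six vectors are indeed eigenvectors. For $\Upsilon_0$, applying $M_{maxwell}(\xi)$ to $\binom{0}{\xi}$ and $\binom{\xi}{0}$ yields zero by $\gamma(\xi)\xi=0$, so both lie in $\ker M_{maxwell}(\xi)$ and are associated with $\lambda_0(\xi)=0$. For $\Upsilon_+$, applying the block formula to $\binom{\mathfrak{a}}{\mathfrak{b}}$ gives
$\binom{-\gamma\mathfrak{b}}{\gamma\mathfrak{a}}=|\xi|\binom{\mathfrak{a}}{\mathfrak{b}}$,
and to $\binom{-\mathfrak{b}}{\mathfrak{a}}$ gives
$\binom{-\gamma\mathfrak{a}}{-\gamma\mathfrak{b}}=|\xi|\binom{-\mathfrak{b}}{\mathfrak{a}}$, identifying them as $\lambda_+(\xi)$-eigenvectors. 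The two vectors in $\Upsilon_-$ are handled analogously and produce the eigenvalue $-|\xi|=\lambda_-(\xi)$.

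Next I would check orthonormality. The normalization constants $1/\sqrt{|\mathfrak{a}|^2+|\mathfrak{b}|^2}$ and $1/|\xi|$ are chosen so that each vector has unit $\mathbb{C}^6$-norm, using that a block vector $\binom{u}{v}$ has squared norm $|u|^2+|v|^2$. Within each pair $\Upsilon_\pm$, the inner product reduces to $\pm(\mathfrak{b}\cdot\mathfrak{a}-\mathfrak{a}\cdot\mathfrak{b})=0$ since $\mathfrak{a}\perp\mathfrak{b}$; similarly the two vectors of $\Upsilon_0$ are orthogonal. Cross-orthogonality between $\Upsilon_+,\Upsilon_0,\Upsilon_-$ is automatic from the Hermiticity of $M_{maxwell}(\xi)$ together with the fact that the three eigenvalues $\lambda_+(\xi),\lambda_0(\xi),\lambda_-(\xi)$ are pairwise distinct for $\xi\neq 0$. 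Having exhibited six orthonormal vectors in $\mathbb{C}^6$, they automatically form a basis.

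There is no substantive obstacle here; the statement is a bookkeeping corollary whose content lies entirely in the three identities for $\gamma(\xi)$ applied to the triple $(\mathfrak{a},\mathfrak{b},\xi)$. The only place where one has to be mildly careful is matching signs when checking that $\binom{\mathfrak{b}}{\mathfrak{a}}\in\Upsilon_-$ is a $(-|\xi|)$-eigenvector rather than a $(+|\xi|)$-eigenvector, and verifying that the second case definition of $\mathfrak{a},\mathfrak{b}$ (when $\xi_1=\xi_3=0$) still satisfies $\gamma(\xi)\mathfrak{a}=|\xi|\mathfrak{b}$ and $\gamma(\xi)\mathfrak{b}=-|\xi|\mathfrak{a}$, but both verifications are immediate computations.
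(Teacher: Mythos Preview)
Your proof is correct and is exactly the verification the paper intends: the corollary is stated without an explicit proof, being an immediate consequence of the relations $\gamma(\xi)\xi=0$, $\gamma(\xi)\mathfrak{a}(\xi)=|\xi|\mathfrak{b}(\xi)$, $\gamma(\xi)\mathfrak{b}(\xi)=-|\xi|\mathfrak{a}(\xi)$ and the mutual orthogonality of $\mathfrak{a},\mathfrak{b},\xi$ recorded just before the statement. Your block-structure computation and use of Hermiticity for cross-eigenspace orthogonality are precisely the intended argument.
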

      \begin{rem}\label{remsymmmaxwell} The double multiplicities of the eigenvalues implies that the basis vectors in each subspace are not uniquely determined. This explains the apparent asymmetry (with respect to $(\xi_1,\xi_2,\xi_3)$) of the eigenvectors in each of the sets $\Upsilon_{\pm},\,\Upsilon_{0}.$ The operator (and its symbol $M_{maxwell}(\xi)$) is clearly symmetric with respect to orthogonal rotations in $\RT.$
      \end{rem}
      \begin{rem}\label{remTETM}(\textbf{propagation modes}) Let $\binom{E_0(x)}{B_0(x)}$ be a vector function (with values in $\mathbb{C}^6$) whose Fourier transform $ \binom{\widehat{E_0}(\xi)}{\widehat{B_0}(\xi)}\in span \{\Upsilon_{\pm }\},$ for every $0\neq \xi\in\RT.$  In other words,  $ \binom{\widehat{E_0}(\xi)}{\widehat{B_0}(\xi)}$ is orthogonal to $ker(M_{maxwell}(\xi))$ for all $\xi\neq 0.$ Then in particular
          $$<\widehat{E_0}(\xi),\xi>_{\CC^3}=0,\quad <\widehat{B_0}(\xi),\xi>_{\CC^3}=0, \quad \xi\in\RT.$$

          The propagation of the initial data $\binom{E_0(x)}{B_0(x)}$ by the time-dependent Maxwell system yields the solution
 $\binom{E(x, \, t)}{B(x, \, t)}.$  Both $E(x, \, t)$ and $B(x, \, t)$ are superpositions of the plane waves
 $e^{i(<x, \, \xi> \pm t |\xi|)}{\mathfrak a}(\xi)$
 and
 $e^{i(<x, \, \xi> \pm t |\xi|)}{\mathfrak b}(\xi)$, each of which is transverse (in fact  \color{black}orthogonal) to
 the propagation directions $\xi$ or $-\xi$.

   The terminology of ``TE-modes'' (resp. ``TM-modes''), introduced by Lord Rayleigh in 1897, is used to characterize such fields.

   The behavior of these waves for large time is further studied below in Subsection ~\ref{subsection-free-maxwell-spacetime}.

     \end{rem}
%
%
%

 The diagonalization procedure of the Dirac operator (see ~\eqref{eqdiagHm}) can now be repeated, with suitable modifications, for the Maxwell operator as follows.

        We define a continuous map
     $$\xi\hookrightarrow V_0(\xi),\quad \xi\in\RT\setminus \{ 0\},$$
    where   $V_0(\xi)$  is unitary for all $\xi\in\RT\setminus \{ 0\},$ so that

    \be\label{eqmaxcoerciv}
      V_0(\xi)^*M_{maxwell}(\xi)V_0(\xi)=\begin{pmatrix}
       \lambda_+(\xi)I_2& O_2&O_2\\
       O_2&O_2&O_2\\
       O_2&O_2& \lambda_{-}(\xi)I_2\\
     \end{pmatrix}.
    \ee
    We can clearly assume that $V_0(\xi)$ is homogeneous of order zero;\,\,$V_0(\beta\xi)=V_0(\xi),\,\beta>0.$

  \begin{rem}\label{remcontV0} Continuing Remark ~\ref{remsymmmaxwell}, the fact that the eigenvalues are ``separated'' and are of constant multiplicity implies ~\cite[Chapter II.1.4]{kato} that the projections on the eigenspaces are continuous (and indeed real-analytic), hence this is true also for the map
  $$\xi\hookrightarrow V_0(\xi),\quad \xi\neq 0.$$
  Remark that the choice of the
  unitary matrix is not unique. In particular, taking the matrix whose columns   are  the six vectors $\Upsilon_{+}\cup\Upsilon_{0}\cup\Upsilon_{-},$ yields a unitary matrix that diagonalizes $M_{maxwell}(\xi),$ but is not continuous (as a function of $\xi$).
  \end{rem}

  For  $f\in \Lcal^2(\RT,\mathbb{C}^6)$ we define the transformation
    \be\label{eqdefineT0}
    (\mathcal{T}_0f)(\xi)=V_0^\ast(\xi)\widehat{f}(\xi),\,\,\xi\in\RT\setminus\set{0}.
        \ee
        Then
$$\mathcal{T}_0:\Lcal^2_x(\RT,\mathbb{C}^6)\to \Lcal^2_{\xi}(\RT,\mathbb{C}^6)\,,$$
   is unitary and diagonalizes $L_{maxwell}$ in the sense that
  \be\label{eqdiagMaxwell}
   \Big(\mathcal{T}_0L_{maxwell}\mathcal{T}_0^{-1}\widehat{f}\Big)(\xi)=\begin{pmatrix}
        \lambda_+(\xi)I_2& O_2&O_2\\
       O_2&O_2&O_2\\
       O_2&O_2& \lambda_{-}(\xi)I_2\\
     \end{pmatrix}\widehat{f}(\xi),\,\,\xi\in \RT,
  \ee
   is a multiplication by a diagonal matrix.
   \begin{rem} Note that the diagonalization equation applies only to $f$ in the domain of $L_{maxwell}.$ This will be discussed below, in Subsection ~\ref{subsection-spec-maxwell}.
   \end{rem}


    \section{\textbf{WEIGHTED SOBOLEV SPACES and BASIC NOTATION}}
       Here we introduce the weighted Sobolev spaces which play a
       basic role in our treatment. We shall only need these spaces in the framework of $\Lcal^2.$

        For $s \in \RR$ and
     $p$ a nonnegative integer we define:

    \be\label{eqdefineL2s}
     \Lcal^{2,s}(\Rn):=\{ u(x) \quad / \quad \|u \|_{0,s}^2=\int
     \limits_{\Rn} (1+|x|^2)^s |u(x)|^2 dx < \infty \},
    \ee

      \begin {equation}
     \Hcal^{p,s}(\Rn):=\{ u(x) \quad / \quad \nabla^{\alpha} u \in \Lcal^{2,s}, \quad |\alpha| \leq p, \quad
     \|u\|_{p,s}^2= \sum \limits_{|\alpha| \leq p} \|\nabla^{\alpha} u\|_{0,s}^2  \}.
     \end{equation}

     The scalar product in $\Lcal^{2,s}(\Rn)$ is
     \be\label{eqdefinescal0s}
     (u,v)_{0,s}=\int
     \limits_{\Rn} (1+|x|^2)^s u(x)\overline{v(x)} dx .
     \ee

        We write $\Lcal^2$ for $\Lcal^{2,0}$ and $\|u\|_0= \|u
         \|_{0,0}$.  We also write $\Hcal^p$ for $\Hcal^{p,0}$ and , when needed for clarity, $\|u\|_{\Hcal^p}= \|u
         \|_{p,0}$. The scalar product is then denoted by $(\cdot,\cdot).$ We do not distinguish in this
          scalar product between scalar and vector-valued functions; that will
         be clear from the context.

        Let $K>0$ be an integer.  For functions
      valued in $\CC^K,$ if needed for clarity, we denote, as has already been done above for $\Lcal^2$,

       $$\Hcal^{p,s}(\Rn,\mathbb{C}^K)=\Hcal^{p,s}(\Rn)\bigotimes \mathbb{C}^K.$$

       For negative (integer) indices\, $-p$ \, we denote by $\big\{ \Hcal^{-p,s},\quad \|\cdot\|_{-p,s}\big\}$
       the dual space of
        $\Hcal^{p,-s}.$  In particular, observe that any function $f\in \Hcal^{-1,s}$  can be represented (not
              uniquely) as
             \begin{equation}
            \label{h-1s}
              f=f_0+\sum\limits_{k=1}^n i^{-1}\frac{\partial}{\partial x_k}f_k,
              \qquad f_k\in \Lcal^{2,s},\quad 0\leq k\leq n.
              \end{equation}

               On the side of the (Fourier) transformed functions we shall need Sobolev spaces of any real order. The Fourier transform ~\eqref{fourier} carries weighted-$L^2$ spaces to Sobolev spaces (and vice-versa).

       We  let $\Hcalth=\Hcal^{\theta}(\Rn),\,\,\theta\in \mathbb{R}$ be the Sobolev
         space (based on $L^2(\Rn)$) of order $\theta$ of functions obtained as Fourier transforms, namely,
        \begin{equation}\label{eqdefWSobolev}
          \Hcalth= \{ \widehat{u} \quad / \quad u \in \Lcal^{2, \theta}(\Rn),
         \quad \|\widehat{u}\|_{\Hcalth}=\|u\|_{0, \theta} \}.
         \end{equation}
            Of course for an integer $\theta=p$ the definitions of $\Hcal^p$ and $\Hcalth$ are consistent, noting that the latter is used in the Fourier $\xi-$space.

         Remark that the more general (weighted)  spaces $\Hcal^{\theta,s}$
         can be defined, for example, by interpolating
         between $\Hcal^{p,s}$ with integer values of $p,$ but we
         shall make no use of such (weighted) spaces.

       In our study we rely heavily on the trace lemma for functions in
    $\Hcalth(\Rn)$ ~\cite[Proposition 6.3]{BA2},
    \begin{lem}\label{tracelem}  Let $\widehat{h}\in \Hcalth(\Rn),\,\,n\geq 3,\,\,\frac12<\theta<\frac32.$ Denote the sphere of radius $r$ by
    $$S_r=\set{\xi\in\Rn /\,\,|\xi|=r}.$$
    Then
    \be
\int\limits_{S_r}|\widehat{h}|^2d\Sigma_r\leq
C\min(1,r^{2\theta-1})\|\widehat h\|_{\Hcalth}^2,\,\,r>0,
    \ee
    where $C>0$ is independent of $\widehat{h},\,r,$ and  $d\Sigma_r$ is the
    Lebesgue surface measure on $S_r.$

    Furthermore, the family of trace maps $\set{\Phi_r:\Hcalth(\Rn)\to\Lcal^2(S_1),\,r\in\RR_+}$  given by
    $$\Phi_r\widehat{h}(\omega)=r^{\frac{n-1}{2}}\widehat{h}(r\omega),\quad \omega\in S_1,\,\,r\in\RR_+,$$
    is locally H\"{o}lder continuous, in the following sense.

    For a closed interval $[a,b]\subseteq\RR_+$ there exist  constants $C>0$ and $\alpha\in (0,1)$ such that
    \be
    \int\limits_{S_1}\Big|\Phi_{r_2}\widehat{h}(\omega)-\Phi_{r_1}\widehat{h}(\omega)\Big|^2d\Sigma_1\leq C\|\widehat h\|_{\Hcalth}^2|r_2-r_1|^\alpha,\quad r_1,\,r_2\in [a,b].
    \ee
    \end{lem}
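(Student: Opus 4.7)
The plan is to reduce both estimates to a study of the Fourier transform of the surface measure $d\Sigma_r$ and to use scaling to track the $r$-dependence. Since $\Hcalth$ is the image of $\Lcal^{2,\theta}(\Rn)$ under the Fourier transform, so that $\|\widehat{h}\|_{\Hcalth}=\|h\|_{0,\theta}$, Parseval's identity recasts the first bound in the equivalent $TT^{*}$ form: the trace operator $T\widehat{h}=\widehat{h}|_{S_r}\colon\Hcalth\to\Lcal^{2}(S_r)$ satisfies
\[
TT^{*}\psi(\xi)=c_{n}\int_{S_{r}}K_{\theta}(\xi-\eta)\,\psi(\eta)\,d\Sigma_{r}(\eta),\qquad\xi\in S_{r},
\]
where $K_{\theta}$ is, up to a constant, the Fourier transform of $(1+|x|^{2})^{-\theta}$, namely a Bessel potential kernel that decays exponentially at infinity and behaves like $|\xi|^{2\theta-n}$ near the origin when $\theta<n/2$. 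Schur's test then reduces the first estimate to bounding $\sup_{\xi\in S_{r}}\int_{S_{r}}|K_{\theta}(\xi-\eta)|\,d\Sigma_{r}(\eta)$. A direct calculation in local polar coordinates on $S_{r}$ shows that the hypothesis $\theta>\tfrac12$ exactly secures the convergence of this integral over the $(n-1)$-sphere, and gives a uniform bound $C$ when $r\ge 1$; for $r\le 1$ the diameter of $S_{r}$ is so small that the kernel sits entirely in its singular region, producing the sharper bound $Cr^{2\theta-1}$. The two estimates combine into the claimed factor $\min(1,r^{2\theta-1})$.

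For the H\"older continuity of $r\mapsto\Phi_{r}\widehat{h}$ I would decompose
\[
\Phi_{r_{2}}\widehat{h}(\omega)-\Phi_{r_{1}}\widehat{h}(\omega)=\bigl(r_{2}^{(n-1)/2}-r_{1}^{(n-1)/2}\bigr)\,\widehat{h}(r_{1}\omega)+r_{2}^{(n-1)/2}\bigl(\widehat{h}(r_{2}\omega)-\widehat{h}(r_{1}\omega)\bigr).
\]
On the compact interval $[a,b]$ the first summand is $O(|r_{2}-r_{1}|)$ in $\Lcal^{2}(S_{1})$ by the just-proved trace bound applied at $r_{1}$. For the second I would write
\[
\widehat{h}(r_{2}\omega)-\widehat{h}(r_{1}\omega)=(2\pi)^{-n/2}\int h(x)\,e^{-ir_{1}\langle\omega,x\rangle}\bigl(e^{-i(r_{2}-r_{1})\langle\omega,x\rangle}-1\bigr)\,dx,
\]
bound the exponential difference by $C\min(|r_{2}-r_{1}|\,|x|,\,2)^{\beta}$ for any $\beta\in[0,1]$, and apply Cauchy--Schwarz against the weight $(1+|x|^{2})^{\theta}$. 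Choosing $\beta$ slightly less than $\min(1,\theta-\tfrac12)$ keeps the auxiliary $x$-integral finite (which is precisely where $\theta>\tfrac12$ is used) and delivers the H\"older bound with any exponent $\alpha<\min(1,2\theta-1)$.

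The principal technical obstacle is the sharp $r$-tracking in the first estimate: the regimes $r\le 1$ and $r\ge 1$ require separate analyses because the diagonal singularity of $K_{\theta}$ interacts differently with the geometry of the shrinking sphere $S_{r}$, and the two bounds must be glued to produce the uniform $\min(1,r^{2\theta-1})$ dependence. Once this is in hand, the H\"older continuity follows in a routine manner from the splitting above, and the upper restriction $\theta<\tfrac32$ is only needed to keep the Hölder exponent in $(0,1)$ and to ensure the Bessel-potential analysis above is in its non-trivial range.
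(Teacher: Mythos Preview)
The paper does not prove this lemma; it is quoted verbatim from \cite[Proposition 6.3]{BA2}. So there is no in-paper argument to compare against, and your proposal has to stand on its own.

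Your treatment of the first estimate via $TT^{*}$ and Schur's test is sound: the Bessel kernel $K_{\theta}(\xi)\sim|\xi|^{2\theta-n}$ near $0$, and the surface integral $\int_{S_r}|\xi-\eta|^{2\theta-n}\,d\Sigma_r(\eta)$ converges exactly when $\theta>\tfrac12$, with the scaling producing the factor $r^{2\theta-1}$ for $r\le 1$. That part is fine.

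The H\"older-continuity argument, however, has a real gap. After writing
\[
\widehat{h}(r_{2}\omega)-\widehat{h}(r_{1}\omega)=(2\pi)^{-n/2}\!\int h(x)\,e^{-ir_{1}\langle\omega,x\rangle}\bigl(e^{-i(r_{2}-r_{1})\langle\omega,x\rangle}-1\bigr)\,dx
\]
you propose to bound the bracket by $C|r_{2}-r_{1}|^{\beta}|x|^{\beta}$ and then Cauchy--Schwarz against $(1+|x|^{2})^{\theta}$. The resulting auxiliary integral is
\[
\int_{\Rn}(1+|x|^{2})^{-\theta}\min\bigl(|r_{2}-r_{1}|^{2}|x|^{2},4\bigr)^{\beta}\,dx,
\]
and for large $|x|$ the integrand is $\asymp(1+|x|^{2})^{-\theta}$, which is \emph{not} integrable in $\Rn$ when $\theta<n/2$. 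Since the lemma assumes $n\ge 3$ and $\theta<\tfrac32\le n/2$, this integral diverges for every admissible $\beta\ge 0$. The condition you identify, $\beta<\theta-\tfrac12$, would be the right one in one dimension but is off by $(n-1)/2$ here; a pointwise Cauchy--Schwarz cannot close.

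A repair that stays close to your approach: for $\theta>\tfrac32$ one has $\partial_r\widehat{h}(r\omega)=\omega\cdot\nabla_\xi\widehat{h}(r\omega)$ with $\nabla_\xi\widehat{h}=\widehat{-ixh}$ and $xh\in\Lcal^{2,\theta-1}$, so part one applied to $xh$ (now $\theta-1>\tfrac12$) gives Lipschitz dependence of $\Phi_r$ on $r$. Interpolating this against the uniform boundedness of $\Phi_r$ for $\theta>\tfrac12$ yields the H\"older continuity in the intermediate range $\tfrac12<\theta<\tfrac32$. Alternatively, run the $TT^{*}$/Schur argument directly for the difference operator $\Phi_{r_2}-\Phi_{r_1}$.
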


    \textbf{Notation.} For any two Banach spaces we use $B(X,Y)$
    to denote the space of linear bounded operators from $X$ to
    $Y,$ equipped with the uniform operator topology.
    \section{\textbf{SPECTRAL STRUCTURE of the UNPERTURBED DIRAC and MAXWELL OPERATORS}}\label{sec-spec-dirac-maxwell}
     In this section, we study in detail the spectral
     properties of the free Dirac and Maxwell operators operators, as introduced in Section ~\ref{unperturbed-sec}.
     This will provide a prelude to the general case that we defer to Section ~\ref{secspecunperturbed}.

    \subsection{\textbf{THE FREE DIRAC OPERATOR}}\label{subsection-spec-dirac}

      We  first consider the spectral density of the free Dirac operator, with suitable estimates in weighted Sobolev spaces. These estimates allow us to derive the Limiting Absorption Principle for this operator (Theorem ~\ref{basiclap} below).

      Recall the definition ~\eqref{eqdefGm} of the transformation $\mathcal{G}_m.$

      Observe that in view of Equation ~\eqref{eqdiagHm} the operator
        $\mathcal{G}_mH_m\mathcal{G}_m^{-1}$ is reduced by the two (complementary) orthogonal subspaces $X_\pm\subseteq \Lcal^2(\RT,\mathbb{C}^4)$ given by
        \be\label{eqsubspacedomainDirac}\aligned
        X_+=\set{f\in \Lcal^2(\RT,\mathbb{C}^4)\,\, / \,\, \mbox{ the two last components of $\mathcal{G}_mf(\xi)$  vanish}},\\
        X_-=\set{f\in \Lcal^2(\RT,\mathbb{C}^4)\,\, / \,\, \mbox{ the two first components of $\mathcal{G}_mf(\xi)$  vanish}}.
        \endaligned \ee

        In order to determine the domain of the self-adjoint operator $H_m$ we repeat the discussion in ~\cite[Section V.5.4]{kato}.

    From ~\eqref{eqdiraccoerciv} we obtain  the coercivity  property (in each reducing subspace)

      \be\label{eq-dirac-coerciv}\aligned|<M_m(\xi)\widehat{f}(\xi),\widehat{f}(\xi)>_{\CC^4}|=
      |<U_m^\ast(\xi)M_m(\xi)U_m(\xi)U_m^\ast(\xi)\widehat{f}(\xi), U_m^\ast(\xi)\widehat{f}(\xi)>_{\CC^4}|\\\geq |\xi||\widehat{f}(\xi)|^2,\quad f\in X_\pm,\,\,\xi\in \RT.
      \endaligned \ee
       It follows that the domain of $H_m$ in each reducing subspace, hence in the full space $\Lcal^2(\RT,\mathbb{C}^4),$ is
     \begin{equation}\label{dom Hm}
  Dom(H_m)=\Hcal^1(\RT,\mathbb{C}^4),
  \end{equation}
     and its spectrum (that is absolutely continuous) is
     \be\label{eqsepecHm}
      spec(H_m)=\RR\setminus(-m,m)
     \ee
      (it is of course $\RR$ if $m=0$).

  Recall that by Equation ~\eqref{eqdiagHm} the operator $H_m$ can be diagonalized  with


       $$\lambda_\pm(r)=\pm\sqrt{r^2+m^2}.$$

    Let $E_m(\lambda)$ be the spectral family associated with
    $H_m.$

     As is
    customary we use $\chi_B$ as the indicator function for a set
    $B\subseteq\RT,$ namely, $\chi_B(\xi)=1$ (resp. $\chi_B(\xi)=0$
    ) if $\xi\in B$ (resp. $\xi\notin B$).

     It is easily seen that if we confine $f\in C^\infty_0(\RT,\mathbb{C}^4)$ then for $\lambda>m$ we have
    \be
    (E_m(\lambda)f,f)=\Big(\begin{pmatrix}
      \chi_{\lambda_+(\xi)\leq \lambda} I_2&O_2\\
       O_2& I_2\\
     \end{pmatrix}\mathcal{G}_mf,\mathcal{G}_mf\Big),
    \ee
    where the right-hand side is the scalar product in $\mathcal{L}^2_\xi(\RT,\mathbb{C}^4).$

    Differentiating the last equality (assuming $f,g$ to be
    sufficiently regular), we get (with $d\Sigma_r$ being the
    Lebesgue surface measure on the sphere of radius $r>0$),
    \be\label{A0}
     \frac{d}{d\lambda}(E_m(\lambda)f,f)=\frac{\lambda}{\sqrt{\lambda^2-m^2}}
     \int\limits_{|\xi|=\sqrt{\lambda^2-m^2}}
|(\mathcal{G}_mf)_+(\xi)|^2d\Sigma_{\sqrt{\lambda^2-m^2}},
    \ee
    where $(\mathcal{G}_mf)_+$ is a $2-$vector consisting of the first
    two components of $\mathcal{G}_mf.$

    An analogous equation is clearly valid in the case
    $\lambda<-m.$

    Since
    $$|(\mathcal{G}_mf)_+(\xi)|\leq
    |(\mathcal{G}_mf)(\xi)|\leq|\widehat{f}(\xi)|,$$
    we conclude from Lemma \ref{tracelem} and the definition ~\eqref{eqdefWSobolev} of the space $\Hcal^s$  that, for any $|\lambda|>m$ and $\frac12<s<\frac32,$ there exists an operator
    $$A_m(\lambda)\in B\Big(\Lcal^{2,s}(\RT,\mathbb{C}^4),\,\Lcal^{2,-s}(\RT,\mathbb{C}^4)\Big),$$
    such that
    \be\label{Aestimate1}\aligned
<A_m(\lambda)f,f>=\frac{d}{d\lambda}(E_m(\lambda)f,f)\hspace{70pt}
\\\leq C\min\Big(\frac{|\lambda|}{\sqrt{\lambda^2-m^2}},\,|\lambda|(\lambda^2-m^2)^{s-1}
\Big)\|\widehat{f}\|_{\Hcal^s}^2
\\=C\min\Big(\frac{|\lambda|}{\sqrt{\lambda^2-m^2}},\,|\lambda|(\lambda^2-m^2)^{s-1}
\Big)\|\widehat{f}\|_{\Hcal^s}^2
    \endaligned\ee
     where $<,>$ is the
     $(\Lcal^{2,-s}(\RT,\mathbb{C}^4),\,\Lcal^{2,s}(\RT,\mathbb{C}^4))$
     pairing.
       \begin{prop}\label{corpropertyA}
       \begin{enumerate}
       \item Let $s>\frac12.$ Then the weak derivative $A_m(\lambda)=\frac{d}{d\lambda}(E_m(\lambda))$ is
     locally bounded and locally
    H\"{o}lder continuous for $|\lambda|>m,$ with respect to the uniform operator topology of
     $B(\Lcal^{2,s}(\RT,\mathbb{C}^4),\Lcal^{2,-s}(\RT,\mathbb{C}^4)).$
     \item Let $s>1.$
    Then the weak derivative $A_m(\lambda)=\frac{d}{d\lambda}(E_m(\lambda))$ is
      uniformly bounded and uniformly
    H\"{o}lder continuous for $\lambda\in\RR,$ with respect to the uniform operator topology of
     $B(\Lcal^{2,s}(\RT,\mathbb{C}^4),\Lcal^{2,-s}(\RT,\mathbb{C}^4)).$

     \end{enumerate}
   \end{prop}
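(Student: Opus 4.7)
The plan is to exploit formula (\ref{A0}), which exhibits $A_m(\lambda)$ as (up to the factor $\lambda/r$ with $r=\sqrt{\lambda^2-m^2}$) a trace-type bilinear form on the sphere $S_r$, and to derive all estimates from Lemma \ref{tracelem}. First I would polarize (\ref{A0}) and rewrite it via the normalized traces $\Phi_r$ of the trace lemma:
\[
\langle A_m(\lambda)f,g\rangle=\frac{\lambda}{r(\lambda)}\int_{S_1}\Phi_{r(\lambda)}\bigl((\mathcal{G}_mf)_+\bigr)(\omega)\cdot\overline{\Phi_{r(\lambda)}\bigl((\mathcal{G}_mg)_+\bigr)(\omega)}\,d\Sigma_1,
\]
for $\lambda>m$, and analogously for $\lambda<-m$ using $(\mathcal{G}_mf)_-$. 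Since $\mathcal{G}_m$ is unitary on $\Lcal^2(\RT,\CC^4)$ and the Fourier transform maps $\Lcal^{2,s}$ onto $\Hcalth$ with $\theta=s$, all mapping properties of $A_m(\lambda)$ in $B(\Lcal^{2,s},\Lcal^{2,-s})$ reduce to those of the sesquilinear form $B_r(u,v)=\int_{S_1}\Phi_ru\cdot\overline{\Phi_rv}\,d\Sigma_1$ on $\Hcalth\times\Hcalth$, where $\theta\in(\tfrac12,\tfrac32)$ is chosen at most $s$ (using the inclusion $\Lcal^{2,s}\hookrightarrow\Lcal^{2,\theta}$ when $s\ge\tfrac32$).

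For Part (1), fix a compact $K\subset\{|\lambda|>m\}$; then $r(K)\subset[a,b]$ with $a>0$, and $\lambda\mapsto\lambda/r(\lambda)$ is smooth and bounded on $K$. Lemma \ref{tracelem} gives both uniform boundedness of $\Phi_r$ on $[a,b]$ (hence of $B_r$) and local H\"older continuity of $r\mapsto\Phi_r$ with some exponent $\alpha\in(0,1)$. The telescoping identity
\[
A_m(\lambda_2)-A_m(\lambda_1)=\Bigl(\tfrac{\lambda_2}{r_2}-\tfrac{\lambda_1}{r_1}\Bigr)B_{r_2}+\tfrac{\lambda_1}{r_1}\bigl(B_{r_2}-B_{r_1}\bigr),
\]
combined with the smoothness of $\lambda\mapsto r(\lambda)$ on $K$, transfers the H\"older estimate from $|r_2-r_1|^\alpha$ to $|\lambda_2-\lambda_1|^\alpha$ and yields the desired local result.

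For Part (2), the condition $s>1$ is precisely what kills the threshold singularity: the second term in the minimum in (\ref{Aestimate1}), namely $|\lambda|(\lambda^2-m^2)^{s-1}$, vanishes as $|\lambda|\to m$, and extending $A_m(\lambda)=0$ for $|\lambda|<m$ gives H\"older continuity across each threshold with exponent $\min(1,s-1)$. For large $|\lambda|$, $\lambda/r(\lambda)\to1$ and the trace lemma bound $\|\Phi_r\widehat{h}\|_{\Lcal^2(S_1)}^2\le C\|\widehat h\|_{\Hcalth}^2$ is uniform in $r\ge1$, giving uniform boundedness on $\{|\lambda|\ge m+\varepsilon\}$. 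Since $A_m$ is uniformly bounded on $\RR$, H\"older continuity for increments $|\lambda_2-\lambda_1|\ge1$ is automatic, so only small increments remain; these are handled by covering $\RR$ with a locally finite family of intervals to which Part (1) applies, and then taking the minimum of the two H\"older exponents as the final uniform exponent.

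The main obstacle is the uniformity of the H\"older-in-$r$ constant in Lemma \ref{tracelem} as the reference interval $[a,b]$ is shifted to arbitrarily large $r$; this must be verified to obtain the ``for $|\lambda_2-\lambda_1|<1$'' step of Part (2) globally. The most natural route is a dilation $\xi\mapsto r\xi$, reducing the trace at scale $r$ to a trace at unit scale (where $\min(1,r^{2\theta-1})=1$) and comparing the $\Hcalth$-norms under this rescaling; one must then check that the resulting constant does not degenerate as $r\to\infty$. Once this uniform H\"older estimate for $r\mapsto\Phi_r$ is established, Parts (1) and the threshold analysis combine to give the full assertion of Part (2).
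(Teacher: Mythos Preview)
Your approach is genuinely different from the paper's. The paper does \emph{not} attempt to use the H\"older-continuity clause of Lemma~\ref{tracelem} directly at the given exponent $s$. Instead, it first works at a much higher regularity $\sbar>3$: there the Sobolev embedding gives pointwise bounds on $\widehat f(\xi)$ and on $\nabla_\xi\bigl[U_m^\ast(\xi)\widehat f(\xi)\bigr]$, so that one can differentiate the explicit formula
\[
\langle A_m(\lambda)f,f\rangle=\lambda(\lambda^2-m^2)^{1/2}\int_{S_1}\bigl|[U_m^\ast(\sqrt{\lambda^2-m^2}\,\omega)\widehat f(\sqrt{\lambda^2-m^2}\,\omega)]_+\bigr|^2d\Sigma_1
\]
in $\lambda$ and bound the derivative by $C(\lambda^2-m^2)^{-1/2}\|f\|_{0,\sbar}^2$, which is integrable across $\lambda=m$. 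This yields uniform H\"older continuity in $B(\Lcal^{2,\sbar},\Lcal^{2,-\sbar})$ for $\sbar>3$, including at the thresholds. The statements for $s>\tfrac12$ and $s>1$ are then obtained by interpolating this H\"older continuity against the boundedness estimate~\eqref{Aestimate1}. Your route (trace lemma only, no interpolation) is more elementary and avoids the explicit formula for $U_m$, but it requires sharper control of the trace maps than Lemma~\ref{tracelem} provides.

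There are two gaps in your plan. The minor one: the reduction ``all mapping properties \dots\ reduce to those of $B_r$ on $\Hcalth\times\Hcalth$'' needs $(\mathcal{G}_mf)_+=[U_m^\ast\widehat f\,]_+\in\Hcal^\theta$ with norm controlled by $\|f\|_{0,s}$. Unitarity of $\mathcal{G}_m$ on $\Lcal^2$ does not give this; you need that $U_m^\ast(\xi)$ is a multiplier on $\Hcal^\theta$, which is true for $m>0$ because $U_m$ is smooth with bounded derivatives (see~\eqref{eqexactUm}), but must be stated and checked.

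The more serious gap is in Part~(2), near the threshold. Your argument handles increments that straddle $\lambda=m$ via $\|A_m(\lambda)\|\le C(\lambda^2-m^2)^{s-1}\to 0$, and large increments via uniform boundedness; but for $m<\lambda_1<\lambda_2$ with both close to $m$ you propose ``covering $\RR$ with intervals to which Part~(1) applies''. This fails: any such covering must contain an interval with left endpoint $m$, and the constants in Part~(1) on $[m+\delta,m+1]$ a priori blow up as $\delta\to 0$ (both because $\lambda/r(\lambda)$ and $r'(\lambda)$ diverge, and because Lemma~\ref{tracelem} gives no control on its H\"older constant as $a\to 0$). To close this gap you would need a refined trace estimate of the form $\|\Phi_{r_2}\widehat h-\Phi_{r_1}\widehat h\|_{\Lcal^2(S_1)}^2\le C\,r_2^{\beta}|r_2-r_1|^\alpha\|\widehat h\|_{\Hcalth}^2$ with $\beta>0$ near $r=0$, which is not in the lemma as stated. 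The paper's interpolation device sidesteps this entirely by producing the threshold H\"older continuity first at $\sbar>3$ and then pulling it down.
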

   \begin{proof}
      Note that the topology of $B(\Lcal^{2,s}(\RT,\mathbb{C}^4),\Lcal^{2,-s}(\RT,\mathbb{C}^4))$  becomes weaker as $s$ grows, so without loss of generality we can assume $s<\frac32,$ so the estimate~\eqref{Aestimate1} can be used.

      Consider the expression for the spectral derivative, Equation
       ~\eqref{A0}. Using the expression ~\eqref{eqdefGm}, we can rewrite it explicitly as

      \be\label{eqA0exp}
     <A_m(\lambda)f,f>=\frac{\lambda}{\sqrt{\lambda^2-m^2}}
     \int\limits_{|\xi|=\sqrt{\lambda^2-m^2}}
\Big|[U_m^{\ast}(\xi)\widehat{f}(\xi)]_+\Big|^2d\Sigma_{\sqrt{\lambda^2-m^2}}.
    \ee

        We now estimate the right-hand side of Equation ~\eqref{eqA0exp} in $\Hcal^{\sbar}$ with $\sbar>2.$ By the Sobolev embedding theorem,
        \be\label{eqesttraces3} |\widehat{f}(\xi)|\leq C\|\widehat{f}\|_{\Hcal^{\sbar}},\quad \xi\in\RT.\ee
        Thus
        $$\Big|\int\limits_{|\xi|=\sqrt{\lambda^2-m^2}}
\Big|[U_m^{\ast}(\xi)\widehat{f}(\xi)]_+\Big|^2d\Sigma_{\sqrt{\lambda^2-m^2}}\Big|\leq C(\lambda^2-m^2)\|f\|_{0,\sbar}^2,\,\,\sbar>2,$$
      and from ~\eqref{eqA0exp} we obtain in this case
      \be <A_m(\lambda)f,f>\leq C\lambda(\lambda^2-m^2)^\frac12\|f\|_{0,\sbar}^2,\,\,\sbar>2.
      \ee
        Furthermore, using the explicit form ~\eqref{eqexactUm} of $U_m(\xi)$ and the Sobolev embeddding theorem we have, in addition to  ~\eqref{eqesttraces3}, also
        $$|\nabla_\xi[U_m^{\ast}(\xi)\widehat{f}(\xi)]|\leq C\|\widehat{f}\|_{\Hcal^s},\quad \xi\in\RT,\,\,\sbar>3.$$
       Rewriting  ~\eqref{eqA0exp} in the form
   \be  <A_m(\lambda)f,f>=\frac{\lambda}{\sqrt{\lambda^2-m^2}}
     \int\limits_{|\omega|=1}
\Big|[U_m^{\ast}(\sqrt{\lambda^2-m^2}\omega)\widehat{f}(\sqrt{\lambda^2-m^2}\omega)]_+\Big|^2(\lambda^2-m^2)d\Sigma_{1},
    \ee
    we can differentiate to obtain, for $|\lambda|>m,$
     \be \Big|\frac{d}{d\lambda}<A_m(\lambda)f,f>\Big|\leq C(\lambda^2-m^2)^{-\frac12}\|f\|_{0,\sbar}^2,\,\,\sbar>3.
     \ee
     Since the right-hand side is uniformly locally integrable in $\lambda\in\RR\setminus(-m,m),$ we conclude that the operator-valued function
     $$ A_m(\lambda)\in B(\Lcal^{2,\sbar}(\RT,\mathbb{C}^4),\Lcal^{2,-\sbar}(\RT,\mathbb{C}^4)),\quad |\lambda|\geq m,\,\,\sbar>3,$$
     is uniformly  H\"{o}lder  continuous (and vanishes at $|\lambda|=m$).

    Interpolating this estimate with the boundedness estimates ~\eqref{Aestimate1} we conclude that
    \begin{itemize}
    \item For $s>\frac12$  the operator-valued function
     $$ A_m(\lambda)\in B(\Lcal^{2,s}(\RT,\mathbb{C}^4),\Lcal^{2,-s}(\RT,\mathbb{C}^4)),\quad |\lambda|>m,$$
     is locally bounded and locally  H\"{o}lder  continuous .
     \item  For $s>1$  the operator-valued function
     $$ A_m(\lambda)\in B(\Lcal^{2,s}(\RT,\mathbb{C}^4),\Lcal^{2,-s}(\RT,\mathbb{C}^4)),\quad |\lambda|\geq m,$$
     is uniformly bounded and uniformly H\"{o}lder  continuous in $\lambda\in\RR$ (and vanishes for $\lambda\in [-m,m]$).
    \end{itemize}
    This concludes the proof of the proposition.
   \end{proof}
     We can state a slightly more general fact by taking the norms
     of $f$ and $g$ below in different weighted spaces. In fact ,
     suppose that
      $f,g$  are smooth and
                  compactly supported. Since the bilinear form $<A_m(\lambda)\cdot,\,\cdot>$ is nonegative, applying the Cauchy-Schwarz inequality yields, for any $|\lambda|>m$ and $s,\,l>\frac12,$

                   \be\label{Amfgest} \aligned \qquad \big\vert \frac{d}{d\lambda}(E_m(\lambda)f,g) \big\vert=\big\vert<A_m(\lambda)f,g>\big\vert \leq
          \qquad <A_m (\lambda) f, f >^{\frac12} \cdot < A_m (\lambda) g, g >^{\frac12}
        \\ \leq  C\min\Big(\frac{|\lambda|}{\sqrt{\lambda^2-m^2}},\,
         |\lambda|(\lambda^2-m^2)^{\frac{s+l}{2}-1}\Big)\|\widehat{f}\|_{\Hcal^s}
         \|\widehat{g}\|_{\Hcal^l}
         \\ =  C\min\Big(\frac{|\lambda|}{\sqrt{\lambda^2-m^2}},\,
         |\lambda|(\lambda^2-m^2)^{\frac{s+l}{2}-1}\Big)\|f\|_{0,s}
         \|g\|_{0,l}.
         \endaligned\ee
         It is obvious that in the  inequality above, the
         first $<,>$ is the
     $(\Lcal^{2,-s}(\RT,\mathbb{C}^4),\,\Lcal^{2,s}(\RT,\mathbb{C}^4))$
     pairing, while the second is the
     $(\Lcal^{2,-l}(\RT,\mathbb{C}^4),\,\Lcal^{2,l}(\RT,\mathbb{C}^4))$
     pairing.

The general theory ~\cite[Section 3]{BA2} now yields the \textbf{Limiting Absorption Principle} (LAP) for the unperturbed Dirac operator.
    \begin{thm}\label{basiclap}
    Let $R_m(z)=(H_m-z)^{-1},\,\,Im z\neq 0.$ For any $s,l>\frac12$
    the limits
    \be\label{eqLAPDirac}
    R_m^\pm(\mu)=\lim\limits_{\eps\downarrow 0}R_m(\mu\pm
    i\eps),\,\,\mu\in\RR\setminus[-m,m],
    \ee
    exist in the uniform operator topology of
    $B(\Lcal^{2,s}(\RT,\mathbb{C}^4),\Hcal^{1,-l}(\RT,\mathbb{C}^4)).$

        If $s,l>1$ then the limits in ~\eqref{eqLAPDirac} exist for all $\mu\in\RR,$ or, otherwise stated, they are continuous across the thresholds at $\mu=\pm m.$

         Furthermore, in both cases the limit functions $R_m^\pm(\mu)$ are locally bounded and locally H\"{o}lder
    continuous (in their respective domains) with respect to the uniform operator topology.
    \end{thm}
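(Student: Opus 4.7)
The plan is to derive Theorem \ref{basiclap} from Proposition \ref{corpropertyA} via the standard Privalov--Plemelj machinery for Cauchy integrals of H\"{o}lder continuous operator-valued densities, in the framework of \cite{BA2}. By the spectral theorem and the absolute continuity~\eqref{eqsepecHm} of $H_m$, for $f,g\in \Lcal^{2,s}(\RT,\CC^4)$ and $\text{Im}\,z\neq 0$ one has
\be
(R_m(z)f,g) = \int_{\RR}\frac{d(E_m(\lambda)f,g)}{\lambda - z}= \int_{\RR}\frac{<A_m(\lambda)f,g>}{\lambda - z}\,d\lambda,
\ee
so that the existence of $R_m^\pm(\mu)$ reduces to the existence of the boundary values $\lim_{\eps\downarrow 0}\int_{\RR}(\lambda-\mu\mp i\eps)^{-1}A_m(\lambda)\,d\lambda$ in the appropriate operator norm.

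Away from the thresholds, part (1) of Proposition \ref{corpropertyA} with $s,l>\tfrac12$ provides the local boundedness and local H\"{o}lder continuity of $A_m(\lambda)$ in $B(\Lcal^{2,s},\Lcal^{2,-l})$ on each closed subinterval of $\RR\setminus[-m,m]$. The Plemelj--Sokhotski decomposition then yields $R_m^\pm(\mu)=\text{P.V.}\int(\lambda-\mu)^{-1}A_m(\lambda)\,d\lambda \pm i\pi A_m(\mu)$, with the principal value and its continuity in $\mu$ controlled by the classical Privalov estimate for Hilbert transforms of H\"{o}lder densities. To promote convergence from $B(\Lcal^{2,s},\Lcal^{2,-l})$ to $B(\Lcal^{2,s},\Hcal^{1,-l})$, I use the resolvent identity $H_m R_m(z)=I+zR_m(z)$: applying $H_m$ to $R_m^\pm(\mu)f$ stays in $\Lcal^{2,-l}$, while the coercivity~\eqref{eq-dirac-coerciv} on the reducing subspaces $X_\pm$ gives, by a routine Fourier estimate, $\|D_j u\|_{0,-l}\leq C(\|H_m u\|_{0,-l}+\|u\|_{0,-l})$, so the spatial derivatives of $R_m^\pm(\mu)f$ are controlled in $\Lcal^{2,-l}$.

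For the extension across the thresholds $\mu=\pm m$ one invokes Proposition \ref{corpropertyA}(2): when $s,l>1$, $A_m(\lambda)$ is \emph{uniformly} bounded and H\"{o}lder continuous on all of $\RR$ (and vanishes on $[-m,m]$), so the Cauchy integral has boundary values that depend continuously on $\mu\in\RR$ by the same Privalov argument applied globally. I expect the principal obstacle to be precisely this threshold step: the square-root singularity $|\lambda|/\sqrt{\lambda^2-m^2}$ tolerated by Proposition \ref{corpropertyA}(1) must be neutralised before one can cross $\pm m$, and it is the gain in the weight exponent from $\tfrac12$ to $1$ in part (2) of that proposition which supplies the room to absorb it. The local H\"{o}lder continuity of $R_m^\pm$ in the two regimes is then inherited directly from the corresponding H\"{o}lder continuity of $A_m$ already established, completing the proof.
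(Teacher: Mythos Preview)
Your proposal is correct and follows essentially the same route as the paper: invoke the general theory of \cite{BA2} (which is precisely the Privalov--Plemelj mechanism you describe) to pass from the H\"{o}lder regularity of $A_m(\lambda)$ in Proposition~\ref{corpropertyA} to the existence and H\"{o}lder continuity of $R_m^\pm(\mu)$ in $B(\Lcal^{2,s},\Lcal^{2,-l})$, and then upgrade the target space to $\Hcal^{1,-l}$ via the resolvent identity $H_mR_m(z)=I+zR_m(z)$ together with the coercivity~\eqref{eq-dirac-coerciv}. The paper phrases the last step slightly differently---it takes the limit $\eps\downarrow 0$ in $(H_m-\mu)R_m(\mu\pm i\eps)f=f\pm i\eps R_m(\mu\pm i\eps)f$ and appeals to the closability of $H_m$ in $\Lcal^{2,-l}$ with graph norm equivalent to the $\Hcal^{1,-l}$ norm---but this is the same argument you give.
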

        \begin{proof} As already pointed out, the properties of $A_m(\lambda),$ as given in Proposition ~\ref{corpropertyA},
          enable us to invoke the general theory and obtain the result in the operator setting of  $B(\Lcal^{2,s}(\RT,\mathbb{C}^4),\Lcal^{2,-l}(\RT,\mathbb{C}^4)).$

          In order to complete the proof we need to show that it is possible to replace $\Lcal^{2,-l}(\RT,\mathbb{C}^4)$ by
          $\Hcal^{1,-l}(\RT,\mathbb{C}^4).$

           Take $f\in \Lcal^{2,s}(\RT,\mathbb{C}^4),$ so that by the already established result, the limit
           $$ R_m^\pm(\mu)f=\lim\limits_{\eps\downarrow 0}R_m(\mu\pm
    i\eps)f,\,\,\mu\in\RR\setminus[-m,m],$$
    exists in $\Lcal^{2,-l}(\RT,\mathbb{C}^4).$

    We have
      $$(H_m-\mu)R_m(\mu\pm
    i\eps)f=f\pm i\eps R_m(\mu\pm i\eps)f,$$
    and since $\set{R_m(\mu\pm i\eps)f,\,\,0<\eps<1}$ is uniformly bounded in $\Lcal^{2,-l}(\RT,\mathbb{C}^4),$
        we obtain the limit (in this space)
       $$\lim\limits_{\eps\downarrow 0}H_mR_m(\mu\pm
    i\eps)f=f+\mu R_m^\pm(\mu)f.$$

        Note that $H_m$ is densely defined and closable in
          $\Lcal^{2,-l}(\RT,\mathbb{C}^4)$ and in fact, in view of the coercivity ~\eqref{eq-dirac-coerciv} its graph norm in this space is equivalent to the norm
           of $\Hcal^{1,-l}(\RT,\mathbb{C}^4).$

%
%

      Retaining the same notation for its closure, we get
       \be\label{eqHmRm=muRm}H_mR_m^\pm(\mu)f=f+\mu R_m^\pm(\mu)f,\ee
       so that indeed
         $$R_m^\pm(\mu)f\in \Hcal^{1,-l}(\RT,\mathbb{C}^4).$$
        \end{proof}

     \begin{rem} \begin{itemize}\item The first part of Theorem ~\ref{basiclap} (namely, $s,\,l>\frac12$ and avoiding the thresholds) was obtained in ~\cite{balslev, yamada}. Both papers made use of the LAP for the Schr\"{o}dinger operator  , by way of formula ~\eqref{squaredirac} .
     \item Note that for the second part of the theorem it suffices to assume
                $s,\,l>\frac12$ with $s+l>2.$
                \item In Proposition ~\ref{Hlap} below we give a somewhat different argument for the proof of ~\eqref{eqHmRm=muRm}, based on the fact that $H_m$ is a constant coefficient operator.
                \end{itemize}
                \end{rem}
     We shall now extend  this theorem to  more general function
     spaces. We take  $s,\,l>1.$

               Let $g \in \Hcal^{1,l}(\RT,\mathbb{C}^4),\,f\in \Hcal^{-1,s}(\RT,\mathbb{C}^4),$
              where  $f$ has a representation of the form
              (\ref{h-1s}), with $f_k\in \Lcal^{2,s}(\RT,\mathbb{C}^4),\,0\leq k\leq 3.$

              Equation ~\eqref{A0} can be extended (at least formally)
              to yield

              \begin{equation}
               \label{A0ex}\aligned
              <A_m(\lambda)[f_0+i^{-1}\sum\limits_{k=1}^3\frac{\partial}{\partial x_k}f_k],\,g>
              \hspace{170pt}\\=\frac{\lambda}{\sqrt{\lambda^2-m^2}}
              \int\limits_{|\xi|^2 =\lambda^2-m^2}<(\mathcal{G}_m\widehat{f_0})_+(\xi)+
              \sum\limits_{k=1}^3\xi_k (\mathcal{G}_m\widehat{f_k})_+(\xi),(\mathcal{G}_m\widehat{g})_+(\xi)>_{\CC^2}
              d\Sigma_{\sqrt{\lambda^2-m^2}},\\ \hspace{100pt} f\in \Hcal^{-1,s},\, g \in \Hcal^{1,l}.\hspace{70pt}\endaligned
              \end{equation}

              Observe that this definition makes good sense even though the
              representation (\ref{h-1s}) is not unique, since
               $$f=f_0+\sum\limits_{k=1}^3i^{-1}\frac{\partial}{\partial x_k}f_k=
               \tilde{f_0}+\sum\limits_{k=1}^3 i^{-1}\frac{\partial}{\partial
               x_k}\tilde{f_k},$$
               implies
              $$ \widehat{f_0}(\xi)+\sum\limits_{k=1}^3\xi_k \widehat{f_k}(\xi)
               =\widehat{\tilde{f_0}}(\xi)+\sum\limits_{k=1}^3\xi_k \widehat{\tilde{f_k}}(\xi)$$
               (as tempered distributions).

               \begin{prop} Equation ~\eqref{A0ex} can indeed be used to define an operator (for which we retain the same notation)
              $$A_m(\lambda)\in B(\Hcal^{-1,s}(\RT,\mathbb{C}^4), \Hcal^{-1,-l}(\RT,\mathbb{C}^4)).$$
              In this setting $<,>$ is  the $(\Hcal^{-1,-l}, \Hcal^{1,l})$
              pairing and  $|\lambda|>m.$
              \end{prop}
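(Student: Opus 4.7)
The plan is to bound $|<A_m(\lambda)f,g>|$ directly from the right-hand side of ~\eqref{A0ex} by a constant (depending on $\lambda$) times $\|f\|_{-1,s}\|g\|_{1,l}$, which identifies $A_m(\lambda)f$ as a continuous linear functional on $\Hcal^{1,l}(\RT,\CC^4)$, i.e.\ as an element of $\Hcal^{-1,-l}(\RT,\CC^4)$. Invariance of the right-hand side of ~\eqref{A0ex} under the choice of representation $f=f_0+\suml_{k=1}^3 i^{-1}\partial_{x_k}f_k$ has already been verified by the paper's observation that the integrand depends on $f$ only through the tempered distribution $\widehat{f_0}+\suml_{k=1}^3\xi_k\widehat{f_k}=\widehat f$.

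For the core estimate I would fix $r=\sqrt{\lambda^2-m^2}>0$ and, as in the proof of Proposition~\ref{corpropertyA}, assume without loss of generality that $\tfrac12<s,l<\tfrac32$ (larger values then follow from the embedding of weighted spaces). Lemma~\ref{tracelem} applied to each $\widehat{f_k}\in\Hcal^s$ gives $\int_{S_r}|\widehat{f_k}|^2\, d\Sigma_r\le C\min(1,r^{2s-1})\|f_k\|_{0,s}^2$. The key geometric point now enters: although the multipliers $\xi_k$ in ~\eqref{A0ex} are globally unbounded, on $S_r$ they satisfy $|\xi_k|\le r$. Combining this with unitarity of $U_m^\ast(\xi)$ (which ensures $|(\mathcal{G}_m\widehat{f_k})_+(\xi)|\le|\widehat{f_k}(\xi)|$), the triangle inequality yields
$$\int\limits_{S_r}\Big|(\mathcal{G}_m\widehat{f_0})_++\suml_{k=1}^3\xi_k(\mathcal{G}_m\widehat{f_k})_+\Big|^2\, d\Sigma_r\le C(1+r^2)\min(1,r^{2s-1})\suml_{k=0}^3\|f_k\|_{0,s}^2,$$
and similarly $\int_{S_r}|(\mathcal{G}_m\widehat g)_+|^2\, d\Sigma_r\le C\min(1,r^{2l-1})\|g\|_{0,l}^2$, using that $l>\tfrac12$.

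Cauchy--Schwarz on $S_r$, multiplication by $|\lambda|/r$, passage to the infimum over representations of $f$ (whose square-sum $\suml_{k=0}^3\|f_k\|_{0,s}^2$ is equivalent to $\|f\|_{-1,s}^2$ by standard duality), and the trivial bound $\|g\|_{0,l}\le\|g\|_{1,l}$ then produce
$$|<A_m(\lambda)f,g>|\le \frac{C|\lambda|(1+r)}{r}\sqrt{\min(1,r^{2s-1})\min(1,r^{2l-1})}\;\|f\|_{-1,s}\|g\|_{1,l},$$
which is finite for every $|\lambda|>m$, establishing the claimed boundedness. The main conceptual obstacle, absent in the $\Lcal^{2,s}$ setting of Proposition~\ref{corpropertyA}, is the presence of the $\xi_k$-multipliers arising from the derivative representation of $f\in\Hcal^{-1,s}$; the remedy is the observation above that these multipliers are controlled on each sphere of integration at the cost of only a polynomial factor in $r$, which is precisely what allows the Sobolev index on the $f$-side to be lowered from $0$ to $-1$.
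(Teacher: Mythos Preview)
Your argument is correct and complete for the boundedness claimed in the proposition. The route, however, differs from the paper's in one structural choice: you keep the multiplier $\xi_k$ on the $f$-side and bound it pointwise by $|\xi_k|\le r$ on the sphere $S_r$, which produces the harmless but extraneous factor $(1+r)$. The paper instead transfers $\xi_k$ to the $g$-side, writing the $k$-th term as $<(\mathcal{G}_m\widehat{f_k})_+,\xi_k(\mathcal{G}_m\widehat{g})_+>_{\CC^2}$ and invoking the estimate~\eqref{Amfgest} with $g$ replaced by $D_kg$; since $g\in\Hcal^{1,l}$ one has $\|D_kg\|_{0,l}\le\|g\|_{1,l}$, and this yields directly
\[
|<A_m(\lambda)f,g>|\le C\min\Big(\frac{|\lambda|}{\sqrt{\lambda^2-m^2}},\,|\lambda|(\lambda^2-m^2)^{\frac{s+l}{2}-1}\Big)\|f\|_{-1,s}\|g\|_{1,l},
\]
i.e.\ exactly the same bound as~\eqref{Amfgest} in the new pairing. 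Your approach is slightly more elementary (no need to identify $\xi_k\widehat g$ as $\widehat{D_kg}$), while the paper's duality move buys a sharper estimate---no growth in $r$ at large energies---which is what is actually used downstream in the proof of Proposition~\ref{Hlap}. For the bare statement of the present proposition either argument suffices.
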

              \begin{proof}

               To estimate the operator-norm of $A_m(\lambda)$ as given in
                ~\eqref{A0ex} we use, for $1\leq k\leq 3,$ the estimate ~\eqref{Amfgest}, in the form

                   $$ \aligned |<A_m(\lambda)\frac{\partial}{\partial x_k}f_k,g>|=\frac{|\lambda|}{\sqrt{\lambda^2-m^2}}\,\,\Big|\int\limits_{|\xi|^2 =\lambda^2-m^2}<
               (\mathcal{G}_m\widehat{f_k})_+(\xi),\xi_k(\mathcal{G}_m\widehat{g})_+(\xi)>_{\CC^2}
              d\Sigma_{\sqrt{\lambda^2-m^2}}\Big|\\
                  \\ \leq  C\min\Big(\frac{|\lambda|}{\sqrt{\lambda^2-m^2}},\,
         |\lambda|(\lambda^2-m^2)^{\frac{s+l}{2}-1}\Big)\|\widehat{f_k}\|_{\Hcal^s}
         \|\widehat{\xi_kg}\|_{\Hcal^l}\,\,,
         \endaligned$$
         so from ~\eqref{A0ex} we obtain
         \be\label{A0lamda1}\aligned
         |<A_m(\lambda)f,\,g>|\hspace{150pt}\\\leq  C\min\Big(\frac{|\lambda|}{\sqrt{\lambda^2-m^2}},\,
         |\lambda|(\lambda^2-m^2)^{\frac{s+l}{2}-1}\Big)\|f\|_{-1,s}
         \|g\|_{1,l},\quad  f\in \Hcal^{-1,s}, \,\, g\in \Hcal^{1,l},\,\,s,l>1.
         \endaligned\ee
         \end{proof}

      Theorem ~\ref{basiclap} can now be enhanced to yield
    \begin{prop}
          \label{Hlap}
             The operator-valued function $R_m(z)$ is well-defined
             (and analytic) for nonreal $z$ in the following functional setting.
             \begin{equation}\label{R0stcont}
             z\rightarrow R_m(z) \in
                   B(\Hcal^{-1,s}(\RT,\mathbb{C}^4),
                  \Lcal^{2,-l}(\RT,\mathbb{C}^4)).
                  \end{equation}
                  where $s,\,l>1.$

                   Furthermore, it can be extended
                  continuously from $\mathbb{C}^{\pm}$ to
                  $\overline{\mathbb{C}^{\pm}}$, in this uniform operator
                  topology. The limiting values ( denoted again by
                  $R_m^{\pm}(\lambda)$) are H\"{o}lder continuous
                  in the same topology.

                  The extended function satisfies
                  \begin{equation}
                  \label{H0R0=I}
                  (H_m-z)R_m(z)f=f,\quad f\in \Hcal^{-1,s}(\RT,\mathbb{C}^4),\quad
                   z\in \overline{\mathbb{C}^{\pm}},
                  \end{equation}
                  where for $z=\lambda\in\RR,\quad R_m(z)=R_m^{\pm}(\lambda).$
                \end{prop}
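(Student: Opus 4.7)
The strategy is to leverage Theorem~\ref{basiclap} through the decomposition (\ref{h-1s}), exploiting the fact that $H_m$, as a constant-coefficient operator, commutes with each $D_k$. Given $f\in\Hcal^{-1,s}(\RT,\CC^4)$ written as $f = f_0 + \sum_{k=1}^{3} D_k f_k$ with $f_k\in\Lcal^{2,s}$, I would define, for nonreal $z$,
\begin{equation*}
R_m(z)f := R_m(z)f_0 + \sum_{k=1}^{3} D_k R_m(z)f_k.
\end{equation*}
By Theorem~\ref{basiclap}, each $R_m(z)f_k$ lies in $\Hcal^{1,-l}$, so every summand $D_k R_m(z)f_k$ lies in $\Lcal^{2,-l}$. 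Independence of the representation is the Fourier-multiplier calculation already indicated after (\ref{A0ex}): on the symbol side $R_m(z)$ acts by $(M_m(\xi)-z)^{-1}$, so the full sum equals $(M_m(\xi)-z)^{-1}\bigl[\widehat{f_0}(\xi)+\sum_k\xi_k\widehat{f_k}(\xi)\bigr]$, which depends only on $\widehat{f}(\xi)$. The bound $\|R_m(z)f\|_{0,-l}\leq C\|f\|_{-1,s}$ then follows by infimizing over admissible representations.

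For the continuous extension to the closed half-planes (including the thresholds $\lambda=\pm m$ under the assumption $s,l>1$), I would apply Theorem~\ref{basiclap} componentwise to $f_0,f_1,f_2,f_3$: each $R_m(z)f_k$ extends continuously from $\CC^{\pm}$ to $\overline{\CC^{\pm}}$ as a map into $\Hcal^{1,-l}$, with locally H\"older dependence on $z$ uniformly across the thresholds. Since $D_k\colon\Hcal^{1,-l}\to\Lcal^{2,-l}$ is bounded, the same H\"older regularity transfers to $D_k R_m(z)f_k$ in $\Lcal^{2,-l}$, yielding the existence and regularity of $R_m^\pm(\lambda)\in B(\Hcal^{-1,s},\Lcal^{2,-l})$. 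Alternatively, one may invoke the abstract framework of~\cite{BA2}, Section~3, applied to the enhanced spectral derivative $A_m(\lambda)\in B(\Hcal^{-1,s},\Hcal^{-1,-l})$ supplied by the preceding proposition via estimate~(\ref{A0lamda1}), and then refine the range by one derivative using the structural decomposition above.

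To verify~(\ref{H0R0=I}), I would apply $H_m-z$ distributionally to the defining expression. Using $[H_m,D_k]=0$ and the identity $(H_m-z)R_m(z)f_k=f_k$ (valid for nonreal $z$ and extended to real $\lambda$ via (\ref{eqHmRm=muRm})), one obtains
\begin{equation*}
(H_m-z)R_m(z)f = (H_m-z)R_m(z)f_0 + \sum_{k=1}^{3} D_k(H_m-z)R_m(z)f_k = f_0 + \sum_{k=1}^{3} D_k f_k = f.
\end{equation*}
The main obstacle is that $R_m(z)f$ lies only in $\Lcal^{2,-l}$, so $H_m R_m(z)f$ must be interpreted in $\Hcal^{-1,-l}$. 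This requires extending $H_m$ as a closed operator from $\Lcal^{2,-l}$ into $\Hcal^{-1,-l}$ (straightforward since the coefficients are constant and the operator is first-order) and justifying the commutation $[H_m,D_k]=0$ as an identity on tempered distributions. Once this functional-analytic setup is in place, both sides of~(\ref{H0R0=I}) are continuous in $z$ in the relevant operator topology, and the identity extends to the real axis by the already-established limits $R_m^\pm(\lambda)$.
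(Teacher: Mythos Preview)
Your proposal is correct, and it takes a route that differs from the paper's in an instructive way. The paper does not work directly with the decomposition~(\ref{h-1s}) and the commutation $[H_m,D_k]=0$. Instead it applies the abstract LAP machinery of~\cite{BA1,BA2} to the enhanced spectral derivative via estimate~(\ref{A0lamda1}), first obtaining $R_m(z)\in B(\Hcal^{-1,s},\Hcal^{-1,-l})$ together with its continuous extension to $\overline{\CC^\pm}$; it then proves~(\ref{H0R0=I}) by density of $\Lcal^{2,s}$ in $\Hcal^{-1,s}$ and distributional continuity of $H_m-z$, and finally upgrades the target space from $\Hcal^{-1,-l}$ to $\Lcal^{2,-l}$ by observing that the $\Lcal^{2,-l}$ norm is equivalent to the graph norm of $H_m$ acting on $\Hcal^{-1,-l}$ (coercivity~(\ref{eq-dirac-coerciv})). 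Your argument bypasses this two-step procedure by lifting Theorem~\ref{basiclap} componentwise through the representation $f=f_0+\sum_k D_k f_k$, landing directly in $\Lcal^{2,-l}$ since each $R_m(z)f_k\in\Hcal^{1,-l}$. This is more elementary and makes the role of the constant-coefficient hypothesis transparent; the paper's approach, by contrast, is closer to the abstract template and would be the natural one if the commutation $[H_m,D_k]=0$ were unavailable. You in fact mention the paper's route as your ``Alternatively,'' so you are aware of both. One small point worth tightening in your write-up: the well-definedness of the boundary values $R_m^\pm(\lambda)f$ (independence of the chosen representation) follows cleanly because for each fixed representation the sum converges as $\eps\downarrow 0$, and for nonreal $z$ the sum equals the representation-independent object $R_m(z)f$; hence the limits coincide.
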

                \begin{proof}

                 By the estimate ~\eqref{A0lamda1} , we get readily $R_m(z)\in B(\Hcal^{-1,s}(\RT,\mathbb{C}^4),
                \Hcal^{-1,-l}(\RT,\mathbb{C}^4))$ if $Im\,z\neq 0,$ as well as the analyticity of the map
                $z\hookrightarrow R_m(z),\,Im\,z\neq 0 .$ Furthermore, the extension
                to $Im\,z=0$ is carried out as in ~\cite[Section 4]{BA1}.

                Equation
                ~\eqref{H0R0=I} is obvious if $Im\,z\neq 0$ and
                $f\in \Lcal^{2,s}.$ By the density of $\Lcal^{2,s}$ in
                $\Hcal^{-1,s}$ , the continuity of $R_m(z)$ on
                $\Hcal^{-1,s}$ and the continuity of $H_m-z$ (in the
                sense of distributions) , we can extend it to all
                $f\in \Hcal^{-1,s}.$

                As $z\to\lambda\pm i\cdot 0$ we have $R_m(z)f\to R_m^{\pm}(\lambda)f$ in
                  $ \Hcal^{-1,-l}.$ Applying the (constant coefficient) operator
                  $H_m-z$ yields, in the sense of distributions,
                  $f=(H_m-z)R_m(z)f\to(H_m-\lambda)R_m^{\pm}(\lambda)f$ which
                  establishes ~\eqref{H0R0=I} also for $Im\,z=0.$

                  Finally,
                  the established continuity of $z\hookrightarrow R_m(z)\in B(\Hcal^{-1,s},
                \Hcal^{-1,-l})$ up to the real boundary and Equation ~\eqref{H0R0=I} imply the continuity of
                the map $z\hookrightarrow H_mR_m(z)\in B(\Hcal^{-1,s}(\RT,\mathbb{C}^4),
                \Hcal^{-1,-l}(\RT,\mathbb{C}^4)).$

               The  stronger continuity claim ~\eqref{R0stcont}  follows since the norm of
                $\Lcal^{2,-l}$ is equivalent to the graph-norm of $H_m$ as a map of $\Hcal^{-1,-l}$
                to itself.
                \end{proof}

%
%

                \begin{rem} Note that we could actually take
                $s,\,l>\frac12$ with $s+l>2.$ This is identical to
                Proposition 2.4 in ~\cite{boussaid}, except that
                here we obtain the H\"{o}lder continuity of the
                limiting values.
                \end{rem}
                \subsection{\textbf{THE FREE MAXWELL OPERATOR}}\label{subsection-spec-maxwell}

                In Subsection ~\ref{subsec-maxwell} we introduced the Maxwell operator $L_{maxwell}.$ It is a constant coefficient differential operator and its symbol $M_{maxwell}(\xi)$ is Hermitian. Thus, it can be realized as a self-adjoint operator in $\Lcal^2(\RT,\mathbb{C}^6),$ for which we retain the same notation.

     The spectrum of the operator is readily seen to be
      \be
      spec(L_{maxwell})=\RR.
      \ee
          The spectrum is absolutely continuous except for the point $\lambda=0.$ In view of Corollary ~\ref{corUpsilonpm0} the eigenspace associated with the zero eigenvalue is given by
          \be\label{eqkerLmaxwell}
          ker(L_{maxwell})=\set{f\in \Lcal^2(\RT,\mathbb{C}^6)\,\,/\,\widehat{f}(\xi)\in span\{\Upsilon_0\},0\neq \xi\in\RT}\,\,.
          \ee
      We  now consider the spectral density and the Limiting Absorption Principle for  the Maxwell operator,  in the setting of weighted Sobolev spaces. The treatment is quite analogous to that of the Dirac operator and we  discuss it briefly, focusing on the  aspects of difference between the two cases. As in the Dirac operator case, we obtain detailed estimates at the threshold energy $\lambda=0.$ Such estimates are needed for global spacetime estimates (Subsection ~\ref{subsection-free-maxwell-spacetime}).
 Note that the Maxwell operator is not elliptic (nor bounded from below), so suitable care is needed with respect to its domain of definition .

 In addition to the kernel expressed in ~\eqref{eqkerLmaxwell} we define  two other
(complementary) subspaces, as follows.

          \be X_\pm=\set{f\in \Lcal^2(\RT,\mathbb{C}^6)\,/\,\,\widehat{f}(\xi)\in span\{\Upsilon_\pm\},0\neq \xi\in\RT}\,\,.
 \ee
    Note (compare ~\eqref{eqsubspacedomainDirac}) that these subspaces can also be expressed as

    \be\label{eqsubspacedomainMax}\aligned
        X_+=\set{f\in \Lcal^2(\RT,\mathbb{C}^6)\,\, / \,\, \mbox{ the four last components of $\mathcal{T}_0f(\xi)$  vanish}},\\
        X_-=\set{f\in \Lcal^2(\RT,\mathbb{C}^6)\,\, / \,\, \mbox{ the four first components of $\mathcal{T}_0f(\xi)$  vanish}},
        \endaligned \ee
    where the transformation $\mathcal{T}_0$ is defined in ~\eqref{eqdefineT0}.

    Observe that these subspaces are those containing  the ``TE, TM'' modes (Remark ~\ref{remTETM}).

     These subspaces are clearly reducing for $L_{maxwell}.$ From ~\eqref{eqmaxcoerciv} we obtain the (partial) coercivity  property
     \be\label{eqMaxpartcoerciv}\aligned|<M_{maxwell}(\xi)\widehat{f}(\xi),\widehat{f}(\xi)>_{\CC^6}|=
     |<M_{maxwell}(\xi)\mathcal{T}_0f(\xi),\mathcal{T}_0f(\xi)>_{\CC^6}|\\\geq |\xi||\mathcal{T}_0f(\xi)|^2=|\xi||\widehat{f}(\xi)|^2,\quad \xi\in\RT,\,f\in X_\pm.\endaligned\ee
        These facts enable us to give an explicit expression for the domain of $L_{maxwell}.$
        \begin{claim}\label{claimdomLmax} The domain of $L_{maxwell},$ as a self-adjoint operator in $\Lcal^2(\RT,\mathbb{C}^6),$ is given  by
     \begin{equation}\label{dom Lmaxwell}
  Dom(L_{maxwell})=ker(L_{maxwell})\oplus ( X_+\cap\Hcal^1(\RT,\mathbb{C}^6))\oplus ( X_-\cap\Hcal^1(\RT,\mathbb{C}^6)).
  \end{equation}
  \end{claim}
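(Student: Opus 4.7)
The plan is to identify $Dom(L_{maxwell})$ through its Fourier symbol and then to exploit the orthogonal decomposition of $\Lcal^2(\RT,\CC^6)$ into three $L_{maxwell}$-reducing subspaces. First, I would invoke the standard fact that a constant-coefficient first-order symmetric system on $\Rn$ with Hermitian symbol is essentially self-adjoint on $C^\infty_0$, and that its unique self-adjoint extension is unitarily equivalent (via Plancherel) to multiplication by $M_{maxwell}(\xi)$, with maximal domain
$$Dom(L_{maxwell})=\set{f\in\Lcal^2(\RT,\CC^6)\,/\,M_{maxwell}(\xi)\widehat f(\xi)\in \Lcal^2_\xi(\RT,\CC^6)}.$$

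Next, I would check that the orthogonal decomposition
$$\Lcal^2(\RT,\CC^6)=ker(L_{maxwell})\oplus X_+\oplus X_-$$
read off from Corollary~\ref{corUpsilonpm0} is a decomposition into reducing subspaces for $L_{maxwell}$. In Fourier space this amounts to the pointwise statement that the orthogonal projections onto $span\,\Upsilon_0(\xi)$ and $span\,\Upsilon_\pm(\xi)$ commute with the Hermitian matrix $M_{maxwell}(\xi)$, which is automatic because these are precisely the spectral projections onto its eigenspaces. Consequently, writing $f=f_0+f_++f_-$, membership of $f$ in $Dom(L_{maxwell})$ is equivalent to the separate membership of each component in the corresponding reducing subspace.

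Finally, I would analyze the three components. On $ker(L_{maxwell})$ one has $M_{maxwell}(\xi)\widehat{f_0}(\xi)\equiv 0$, so the domain condition is automatic. On $X_\pm$, the symbol satisfies the trivial upper bound $|M_{maxwell}(\xi)\widehat{f_\pm}(\xi)|\leq |\xi|\,|\widehat{f_\pm}(\xi)|$ (since $\|M_{maxwell}(\xi)\|=|\xi|$) together with the lower coercivity bound $|M_{maxwell}(\xi)\widehat{f_\pm}(\xi)|\geq |\xi|\,|\widehat{f_\pm}(\xi)|$ supplied by~\eqref{eqMaxpartcoerciv}, giving a two-sided equivalence
$$\|M_{maxwell}(\cdot)\widehat{f_\pm}\|_{\Lcal^2}\asymp \bigl\||\cdot|\,\widehat{f_\pm}\bigr\|_{\Lcal^2}.$$
By Plancherel, the right-hand side is finite precisely when $f_\pm\in \Hcal^1(\RT,\CC^6)$, which yields the claimed identity. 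The only real obstacle is the verification that the three subspaces reduce $L_{maxwell}$; once this routine spectral-projection argument is in place, the remainder of the proof is just Plancherel combined with the two-sided bound on $M_{maxwell}(\xi)$ restricted to $X_\pm$.
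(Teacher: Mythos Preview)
Your proposal is correct and follows exactly the approach the paper outlines in the text immediately preceding the claim: the paper notes that the subspaces $ker(L_{maxwell}),\,X_+,\,X_-$ are reducing for $L_{maxwell}$ and records the partial coercivity~\eqref{eqMaxpartcoerciv}, then states that ``these facts enable us to give an explicit expression for the domain,'' without supplying further details. Your argument fills in precisely those details (maximal multiplication domain via Plancherel, reduction, and the two-sided bound on $X_\pm$), so there is no substantive difference between your proof and the paper's implicit one.
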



    The eigenvalues of the symbol  $M_{maxwell}(\xi)$ are
       $$\lambda_{\pm }(\xi)=\pm |\xi|.$$

    Let $F(\lambda)$ be the spectral family associated with
    $L_{maxwell}.$

    It is easily seen that if we confine $f\in C^\infty_0(\RT,\mathbb{C}^6)$ then for $\lambda>0$ we have
    \be\label{eqmaxwellspecfamily}
    (F(\lambda)f,f)=\Big(\begin{pmatrix}
      \chi_{\lambda_+(\xi)\leq \lambda} I_2&O_2&O_2\\
      O_2&O_2&O_2\\
       O_2& O_2&I_2\\
     \end{pmatrix}\mathcal{T}_0f,\mathcal{T}_0f\Big),
    \ee
 where the right-hand side is the scalar product in $\mathcal{L}^2_\xi(\RT,\mathbb{C}^6).$

    Differentiating the last equality (assuming $f,g$ to be
    sufficiently regular), we get ,
    \be\label{eqA0max}
     \frac{d}{d\lambda}(F(\lambda)f,f)=
     \int\limits_{|\xi|=\lambda}
|(\mathcal{T}_0f)_+(\xi)|^2d\Sigma_{\lambda},
    \ee
    where $(\mathcal{T}_0f)_+$ is a $2-$vector consisting of the first
    two components of $\mathcal{T}_0f.$

    An analogous equation is clearly valid in the case
    $\lambda<0.$

    Since
    $$|(\mathcal{T}_0f)_+(\xi)|\leq
    |(\mathcal{T}_0f)(\xi)|\leq|\widehat{f}(\xi)|,$$
    we conclude from Lemma \ref{tracelem} that, for any $\lambda\in\RR\setminus\set{0}$ and $s>\frac12,$
    \be\label{Aestimate1max}\aligned
\frac{d}{d\lambda}(F(\lambda)f,f)\leq
C\min(1,|\lambda|^{2s-1})\|\widehat{f}\|_{\Hcal^s}^2
\\=C\min(1,|\lambda|^{2s-1})\|f\|_{0,s}^2.
   \endaligned \ee
   It follows that there exists a map
   $$\widetilde{A}(\lambda)\in B(\Lcal^{2,s}(\RT,\mathbb{C}^6),\Lcal^{2,-s}(\RT,\mathbb{C}^6)),$$
   so that
  $$ \frac{d}{d\lambda}(F(\lambda)f,f)= <\widetilde{A}(\lambda)f,f>,$$
     where $<,>$ is the
     $(\Lcal^{2,-s}(\RT,\mathbb{C}^6),\,\Lcal^{2,s}(\RT,\mathbb{C}^6))$
     pairing.
       \begin{prop}\label{propAmaxwellreg}
                   Let $s>\frac12.$
    Then the weak derivative $\widetilde{A}(\lambda)=\frac{d}{d\lambda}(F(\lambda))$ is
      locally bounded and locally
    H\"{o}lder continuous for $\lambda\in\RR\setminus\set{0},$ with respect to the uniform operator topology of
     $B(\Lcal^{2,s}(\RT,\mathbb{C}^6),\Lcal^{2,-s}(\RT,\mathbb{C}^6)).$

     In fact, defining $\widetilde{A}(0)=0,$ The function $\widetilde{A}(\lambda)$ is uniformly bounded and uniformly
    H\"{o}lder continuous for $\lambda\in[-r,r]$ in the operator topology, for every $r>0.$.
        \end{prop}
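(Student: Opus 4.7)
The argument parallels Proposition~\ref{corpropertyA} for the Dirac case, but is simpler because the Maxwell spectral derivative vanishes at $\lambda = 0$ rather than diverging there. Boundedness---both locally on $\RR\setminus\{0\}$ and uniformly on $[-r,r]$---is immediate from \eqref{Aestimate1max}, which gives
$$\|\widetilde{A}(\lambda)\|_{B(\Lcal^{2,s}(\RT,\CC^6),\,\Lcal^{2,-s}(\RT,\CC^6))} \leq C\min(1,|\lambda|^{2s-1}),$$
a bound uniform on any $[-r,r]$ that tends to $0$ as $\lambda\to 0$ (since $s>1/2$). This justifies the definition $\widetilde{A}(0)=0$ and yields continuity at the threshold.

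For the H\"older continuity I would mimic the Dirac recipe. Rewriting \eqref{eqA0max} via $\xi = \lambda\omega$, $|\omega|=1$,
\be
\langle \widetilde{A}(\lambda)f,f\rangle = \lambda^{2}\int_{|\omega|=1}\bigl|(\mathcal T_0 f)_+(\lambda\omega)\bigr|^{2}\,d\Sigma_{1}.
\ee
By Remark~\ref{remcontV0}, $V_0(\xi)$ is real-analytic and homogeneous of degree zero on $\RT\setminus\{0\}$, so $|\nabla_\xi V_0(\xi)|\leq C|\xi|^{-1}$. For $\bar s$ sufficiently large (e.g.\ $\bar s>3$), the Sobolev embedding controls both $\widehat{f}$ and $\nabla_\xi\widehat{f}$ pointwise by $\|f\|_{0,\bar s}$. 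Differentiating in $\lambda$, the singularity $|\nabla_\xi V_0(\lambda\omega)|\leq C|\lambda|^{-1}$ is precisely absorbed by the prefactor $\lambda^{2}$, yielding
$$\Big|\tfrac{d}{d\lambda}\langle \widetilde{A}(\lambda)f,f\rangle\Big|\leq C_r\,\|f\|_{0,\bar s}^{2},\qquad \lambda\in[-r,r]\setminus\{0\}.$$
Combined with $\widetilde{A}(0)=0$ and the mean value theorem, this upgrades to Lipschitz continuity of $\widetilde{A}$ on $[-r,r]$ as an operator $\Lcal^{2,\bar s}\to\Lcal^{2,-\bar s}$ (the corresponding bilinear/operator norm bound following by polarization of the quadratic form).

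Interpolating the Lipschitz estimate at large $\bar s$ against the boundedness estimate \eqref{Aestimate1max} at $s_0$ slightly above $1/2$, via the Stein-Weiss complex interpolation of the weighted $\Lcal^{2,s}$ scale (which commutes with duality), produces for every $s>1/2$ a uniform H\"older bound on $[-r,r]$ with some exponent $\alpha(s)\in(0,1]$. Both assertions of the proposition follow, the local statement on $\RR\setminus\{0\}$ being a weaker consequence of the uniform one. The principal technical point I anticipate is the cancellation of the $|\xi|^{-1}$ singularity of $\nabla V_0$ by the $\lambda^{2}$ prefactor: this singularity is absent in the Dirac case (since $m>0$ keeps $U_m$ smooth at $\xi=0$), and its clean cancellation in the Maxwell case is precisely what allows the Lipschitz bound to extend through $\lambda=0$.
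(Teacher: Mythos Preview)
Your proposal is correct and follows essentially the same route as the paper's proof: boundedness from \eqref{Aestimate1max}, rewriting the spectral derivative in spherical coordinates with the $\lambda^{2}$ prefactor, using Sobolev embedding at large $\bar s$ together with the homogeneity bound $|\nabla_\xi V_0|\le C|\xi|^{-1}$ to control $\frac{d}{d\lambda}\langle\widetilde A(\lambda)f,f\rangle$, and then interpolating against the $s>\tfrac12$ boundedness. The paper records the derivative bound as $C\max(\lambda,\lambda^{2})\|f\|_{0,\bar s}^{2}$ (so Lipschitz on $[-r,r]$, exactly as you claim), and your identification of the $\lambda^{2}$--$|\xi|^{-1}$ cancellation as the point distinguishing this from the Dirac case is on target.
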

        \begin{proof}
   The proof is quite parallel to that of Proposition ~\ref{corpropertyA}, so we give a rather brief exposition here. We take $\lambda>0.$

    Consider the expression for the spectral derivative, Equation
       ~\eqref{eqA0max}. Using   ~\eqref{eqdefineT0}, we can rewrite it explicitly as

      \be\label{eqA0expmax}
     <\widetilde{A}(\lambda)f,f>=
     \int\limits_{|\xi|=\lambda}
\Big|[V_0^{\ast}(\xi)\widehat{f}(\xi)]_+\Big|^2d\Sigma_\lambda.
    \ee
         The  local boundedness of    $\widetilde{A}(\lambda)\in B(\Lcal^{2,s}(\RT,\mathbb{C}^6),\Lcal^{2,-s}(\RT,\mathbb{C}^6)),\,\,\lambda\neq 0,$
        is immediate from ~\eqref{Aestimate1max}.

         In what follows $C>0$ is a generic constant that does not depend on $\xi,\,\lambda,\,f.$

        Next consider the right-hand side of Equation ~\eqref{eqA0expmax} and estimate it in $\Hcal^{\sbar},\, \sbar>2.$ In view of the Sobolev embedding theorem (see ~\eqref{eqdefWSobolev} for notation)
        \be\label{eqesttraces3max} |\widehat{f}(\xi)|\leq C\|\widehat{f}\|_{\Hcal^{\sbar}},\quad \xi\in\RT.\ee
        Thus
        $$\Big|\int\limits_{|\xi|=\lambda}
\Big|[V_0^{\ast}(\xi)\widehat{f}(\xi)]_+\Big|^2d\Sigma_{|\lambda|}\Big|\leq C\lambda^2\|f\|_{0,\sbar}^2,\,\,\sbar>2,$$
      and from ~\eqref{eqA0expmax} we obtain in this case
      \be <\widetilde{A}(\lambda)f,f>\leq C\lambda^2\|f\|_{0,\sbar}^2,\,\,\sbar>2.
      \ee
      Using the homogeneity  of $V_0(\xi)$ (of order zero) we have
        $$|\nabla_\xi[V_0^{\ast}(\xi)|\leq \frac{C}{|\xi|}, \quad \xi\in\RT\setminus\set{0}.$$
        In conjunction with  the Sobolev embeddding theorem we have, in addition to  ~\eqref{eqesttraces3max},
        $$|\nabla_\xi[V_0^{\ast}(\xi)\widehat{f}(\xi)]|\leq C\max(1,|\xi|^{-1})\|\widehat{f}\|_{\Hcal^{\sbar}},\quad \xi\in\RT\setminus\set{0},\,\,\sbar>3.$$
       Rewriting  ~\eqref{eqA0expmax} in the form
   \be  <\widetilde{A}(\lambda)f,f>=
     \lambda^2\int\limits_{|\omega|=1}
\Big|[V_0^{\ast}(\lambda\omega)\widehat{f}(\lambda\omega)]_+\Big|^2d\Sigma_{1},
    \ee
    we can differentiate to obtain, for $|\lambda|>0,$
     \be \Big|\frac{d}{d\lambda}<\widetilde{A}(\lambda)f,f>\Big|\leq C\max(\lambda,\lambda^2)\|f\|_{0,\sbar}^2,\,\,\sbar>3.
     \ee
     The right-hand side is  locally integrable in $\lambda\in\RR,$ hence the operator-valued function
     $$\widetilde{A}(\lambda)\in B(\Lcal^{2,\sbar}(\RT,\mathbb{C}^6),\Lcal^{2,-\sbar}(\RT,\mathbb{C}^6)),\quad |\lambda|>0,\,\,\sbar>3,$$
     is locally  H\"{o}lder  continuous (and vanishes at $|\lambda|=0$).

  Interpolating between the local boundedness ~\eqref{Aestimate1max} for $s>\frac12$  and the local H\"{o}lder continuity above (for $\sbar>3$) we obtain the local H\"{o}lder continuity for any $s>\frac12.$
   \end{proof}
   \begin{rem}\label{remA0cont} Since $\lambda=0$ is  an eigenvalue of $L_{maxwell}$  there is no spectral derivative there. However, as seen from the estimate ~\eqref{Aestimate1max}, the weak derivative vanishes as $\lambda\to 0.$
   \end{rem}
     We can state a slightly more general estimate than ~\eqref{Aestimate1max} by taking the norms
     of $f$ and $g$ below in different weighted spaces. In fact ,
     suppose that
      $f,g$  are smooth and
                  compactly supported. Then, as in the derivation of the estimate ~\eqref{Amfgest}, for $\lambda\neq 0,$
                   and $s,l>\frac12,$

                   \be\label{Afgestmax} \aligned \qquad \big\vert \frac{d}{d\lambda}(F(\lambda)f,g) \big\vert \leq
          \qquad <\widetilde{A} (\lambda) f, f >^{\frac12} \cdot < \widetilde{A} (\lambda) g, g >^{\frac12}\\
         \leq  C\min(1,\,
         |\lambda|^{s+l-1}\Big)\|\widehat{f}\|_{\Hcal^s}
         \|\widehat{g}\|_{\Hcal^l}
         \\ = C\min(1,\,
         |\lambda|^{s+l-1}\Big)\|f\|_{0,s}
         \|g\|_{0,l}\,\,.
         \endaligned\ee
         It is obvious that in the  inequality above, the
         first $<,>$ is the
     $(\Lcal^{2,-s}(\RT,\mathbb{C}^6),\,\Lcal^{2,s}(\RT,\mathbb{C}^6))$
     pairing, while the second is the
     $(\Lcal^{2,-l}(\RT,\mathbb{C}^6),\,\Lcal^{2,l}(\RT,\mathbb{C}^6))$
     pairing.

The general theory ~\cite[Section 3]{BA2} now yields the \textbf{Limiting Absorption Principle} (LAP) for the unperturbed Maxwell operator.
    \begin{thm}\label{basiclapmax}
    Let $R_{maxwell}(z)=(L_{maxwell}-z)^{-1},\,\,Im z\neq 0.$ For any $s,l>\frac12$
    the limits
    \be\label{eqLAPmax}
    R_{maxwell}^\pm(\mu)=\lim\limits_{\eps\downarrow 0}R_{maxwell}(\mu\pm
    i\eps),\,\,\mu\in\RR\setminus\set{0},
    \ee
    exist in the uniform operator topology of
    $B(\Lcal^{2,s}(\RT,\mathbb{C}^6),\Lcal^{2,-l}(\RT,\mathbb{C}^6)).$

        The limits in ~\eqref{eqLAPmax} can be extended to $\mu=0,$ or, otherwise stated, they are continuous across the eigenvalue at $\mu=0.$

    Furthermore,  the limit functions $R_{maxwell}^\pm(\mu),\,\mu\in\RR,$ are locally  bounded and locally H\"{o}lder
    continuous  with respect to the uniform operator topology.
    \end{thm}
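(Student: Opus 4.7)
My plan is to closely parallel the argument for Theorem \ref{basiclap} in the Dirac case. The key input is Proposition \ref{propAmaxwellreg}, which provides local Hölder continuity---and in fact uniform Hölder continuity on compact intervals, with $\widetilde{A}(0) = 0$---of the spectral derivative $\widetilde{A}(\lambda)$ in the uniform operator topology of $B(\Lcal^{2,s}(\RT,\CC^6), \Lcal^{2,-s}(\RT,\CC^6))$ for every $s > \frac12$. Polarizing by Cauchy--Schwarz, exactly as was done to derive \eqref{Afgestmax}, produces the mixed-weight bound with distinct weights $s,l > \frac12$, which is what feeds into the general framework.

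I then invoke the general theory of \cite[Section 3]{BA2}. Given local Hölder continuity of the spectral derivative with values in $B(\Lcal^{2,s}, \Lcal^{2,-l})$, that framework yields boundary values $R_{maxwell}^\pm(\mu)$ of the resolvent via Stieltjes--Plemelj formulas of the schematic form
\[
\langle R_{maxwell}^\pm(\mu) f, g \rangle = \mathrm{P.V.}\!\int_\RR \frac{\langle \widetilde{A}(\lambda) f, g \rangle}{\lambda - \mu}\, d\lambda \;\pm\; i\pi \langle \widetilde{A}(\mu) f, g \rangle, \qquad \mu \in \RR \setminus \{0\},
\]
with existence of the limit in the uniform operator topology of $B(\Lcal^{2,s}, \Lcal^{2,-l})$ and local Hölder regularity in $\mu$ inherited from $\widetilde{A}$. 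This already settles the part of the theorem for $\mu \neq 0$.

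For the continuous extension to $\mu = 0$, which is the novelty compared with the Dirac theorem, I exploit the vanishing $\widetilde{A}(0) = 0$ (cf.\ Remark \ref{remA0cont}) together with the Hölder continuity at the origin supplied by Proposition \ref{propAmaxwellreg}. Both the principal-value integral and the Plemelj jump term above then remain continuous through $\mu = 0$, yielding the required extension of $R_{maxwell}^\pm$ across the embedded eigenvalue. The projector onto $\ker(L_{maxwell})$ and its attendant pole $-P_0/z$ in the full resolvent do not enter the Plemelj construction at all, so the limiting values $R_{maxwell}^\pm(\mu)$ constructed this way describe the absolutely continuous (equivalently, $\ker(L_{maxwell})^\perp$) part of $R_{maxwell}(z)$.

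The principal obstacle I foresee is precisely the correct interpretation of ``continuity across the eigenvalue at $\mu = 0$'': the full resolvent diverges on $\ker(L_{maxwell})$ as $z \to 0$, so the statement must be read as referring to the a.c.\ part of $R_{maxwell}(z)$ (equivalently, $R_{maxwell}(z)$ restricted to $\ker(L_{maxwell})^\perp$, via the decomposition in Claim \ref{claimdomLmax}). Once this separation is in place, the vanishing of $\widetilde{A}$ at the origin rescues continuity, and the remaining local boundedness and Hölder continuity assertions follow routinely by the general theory, with the mixed-weight estimate \eqref{Afgestmax} playing here the role that \eqref{Amfgest} played for the Dirac operator.
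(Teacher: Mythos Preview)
Your proposal is correct and follows essentially the same approach as the paper: the paper's ``proof'' is simply the one-line invocation of the general theory in \cite[Section 3]{BA2}, fed by Proposition~\ref{propAmaxwellreg} and the mixed-weight estimate~\eqref{Afgestmax}, exactly as you outline. Your discussion of the extension to $\mu=0$ and the role of the kernel projection is on point and in fact anticipates the clarification the paper gives immediately afterward in Remark~4.8 and Corollary~\ref{corLAPmax}.
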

    \begin{rem}
    Continuing Remark ~\ref{remA0cont}, notice that the limiting values $R_{maxwell}^\pm(0)$ are not, of course, limiting values of the resolvent $(L_{maxwell}\pm i\eps)^{-1}$ as $\eps\to 0.$ However, on the subspace (of ``TE,TM'' modes) orthogonal to the kernel   we have the following corollary.
    \end{rem}

    \begin{cor}\label{corLAPmax} Let  $s,l>\frac12.$ Consider the operator $\mathcal{P}^\bot R_{maxwell}(z)=\mathcal{P}^\bot(L_{maxwell}-z)^{-1},\,\, z=\mu\pm i\eps,$ where  $\mathcal{P}$  is the orthogonal projection on $ker(L_{maxwell})$ in
     $\Lcal^{2}(\RT,\mathbb{C}^6).$ Then the limits

    \be\label{eqLAPorthsubspace}\mathcal{P}^\bot R_{maxwell}^\pm(\mu)=\lim\limits_{\eps\downarrow 0}\mathcal{P}^\bot R_{maxwell}(\mu\pm i\eps)
    ,\,\,\mu\in\RR,
    \ee
    exist in the uniform operator topology of
    $B(\Lcal^{2,s}(\RT,\mathbb{C}^6),\Lcal^{2,-l}(\RT,\mathbb{C}^6)).$

    Furthermore, these limits are H\"{o}lder continuous in the same operator topology.

    \end{cor}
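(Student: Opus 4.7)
The approach is to exploit the reducing decomposition $\Lcal^2(\RT,\CC^6) = \ker(L_{maxwell}) \oplus X_+ \oplus X_-$ of Claim~\ref{claimdomLmax}. Since the orthogonal projection $\mathcal{P}$ onto $\ker(L_{maxwell})$ commutes with $L_{maxwell}$, it commutes with the resolvent; hence $\mathcal{P}^\bot R_{maxwell}(z)$ coincides with the resolvent of the restriction of $L_{maxwell}$ to $X_+\oplus X_-$, composed on the right with $\mathcal{P}^\bot$. The singular contribution $-z^{-1}\mathcal{P}$ of the full resolvent near $z=0$ is precisely what $\mathcal{P}^\bot$ annihilates, so on the reducing subspace $X_+ \oplus X_-$ the point $\lambda=0$ is merely a spectral threshold and not an eigenvalue.

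For $\mu \neq 0$, the conclusion is an immediate consequence of Theorem~\ref{basiclapmax}: simply compose the already-established LAP limits with $\mathcal{P}^\bot$, which is bounded on $\Lcal^{2,s}(\RT,\CC^6)$ for every $s$. The substance of the corollary therefore lies in extending the LAP through the threshold $\mu = 0$. Here I would start from the spectral representation
\begin{equation*}
\langle \mathcal{P}^\bot R_{maxwell}(z)f,\,g\rangle \;=\; \int_{\RR}\frac{\langle \widetilde{A}(\lambda)f,\,g\rangle}{\lambda - z}\, d\lambda,\qquad \mathrm{Im}\, z \neq 0,
\end{equation*}
and feed into the abstract LAP machinery of~\cite[Section 3]{BA2} the input provided by Proposition~\ref{propAmaxwellreg} together with the bilinear bound~\eqref{Afgestmax},
\begin{equation*}
|\langle \widetilde{A}(\lambda)f,g\rangle| \le C \min(1,|\lambda|^{s+l-1}) \|f\|_{0,s}\|g\|_{0,l}.
\end{equation*}
Since $s,l > 1/2$ forces $s+l > 1$, the polynomial vanishing of $\widetilde{A}$ at $\lambda = 0$ renders the integrand $O(|\lambda|^{s+l-2})$ near the threshold, which is integrable, and a standard boundary-value argument yields the existence of $\mathcal{P}^\bot R_{maxwell}^\pm(0)$. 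The H\"{o}lder continuity of $\widetilde{A}(\cdot)$ on $\RR$ (Proposition~\ref{propAmaxwellreg}) then transfers to H\"{o}lder continuity of $\mu \mapsto \mathcal{P}^\bot R_{maxwell}^\pm(\mu)$ across $\mu = 0$.

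The main obstacle I anticipate is verifying the applicability of the abstract framework in the precise weighted setting uniformly across the threshold. A natural path is to first restrict to $s,l \in (1/2, 3/2)$, where Lemma~\ref{tracelem} is directly applicable and the bilinear bound above follows verbatim from the argument proving~\eqref{Afgestmax}; the general case $s,l > 1/2$ then follows because the operator topology of $B(\Lcal^{2,s},\Lcal^{2,-l})$ becomes weaker as $s$ and $l$ grow. A secondary technical point is confirming that the limit produced via the spectral integral coincides with $\lim_{\eps \downarrow 0}\mathcal{P}^\bot (L_{maxwell}-\mu \mp i\eps)^{-1}$ at $\mu = 0$; this follows from dominated convergence with $|\lambda|^{s+l-2}$ as uniform integrable majorant.
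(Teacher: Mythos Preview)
Your proposal is correct and follows essentially the same approach as the paper: both arguments feed the uniform H\"{o}lder continuity of the spectral derivative $\widetilde{A}(\lambda)$ across $\lambda=0$ (Proposition~\ref{propAmaxwellreg}, with $\widetilde{A}(0)=0$) into the abstract LAP machinery of~\cite[Section 3]{BA2}. The paper's proof is a single sentence invoking Proposition~\ref{propAmaxwellreg} directly, whereas you have unpacked the mechanism --- the reducing decomposition, the explicit spectral integral, and the integrability of $|\lambda|^{s+l-2}$ near $0$ --- which is more illuminating but not a different route.
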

    \begin{proof} Indeed, the corollary follows directly from the last part of Proposition ~\ref{propAmaxwellreg},
                      since the weak derivative $\widetilde{A}(\lambda)=\frac{d}{d\lambda}(F(\lambda))$ (extended by $\widetilde{A}(0)=0$)  is
       uniformly bounded and uniformly
    H\"{o}lder continuous for $\lambda\in[-r,r]$ in the operator topology, for every $r>0.$.
    \end{proof}
     As in the case of the Dirac operator, we can  extend  the theorem to  more general function
     spaces. We continue to assume   $s,\,l>\frac12.$

               Let $g \in \Hcal^{1,l}(\RT,\mathbb{C}^6),\,f\in \Hcal^{-1,s}(\RT,\mathbb{C}^6),$
              where  $f$ has a representation of the form
              (\ref{h-1s}), with $f_k\in \Lcal^{2,s}(\RT,\mathbb{C}^6),\,0\leq k\leq 3.$

              Equation ~\eqref{eqA0max} can be extended
              to yield an operator (for which we retain the same notation)
              \be\widetilde{A}(\lambda)\in B(\Hcal^{-1,s}(\RT,\mathbb{C}^6), \Hcal^{-1,-l}(\RT,\mathbb{C}^6)),\ee
              defined by (where now $<,>$ is used for the $(\Hcal^{-1,-l}, \Hcal^{1,l})$
              pairing and we assume $\lambda>0$),

              $$\aligned
              <\widetilde{A}(\lambda)[f_0+i^{-1}\sum\limits_{k=1}^3\frac{\partial}{\partial x_k}f_k],\,g>
              \hspace{70pt}\\=
              \int\limits_{|\xi|=\lambda}<(\mathcal{T}_0f_0)_1(\xi)+
              \sum\limits_{k=1}^3\xi_k (\mathcal{T}_0f_k)_1(\xi),(\mathcal{T}_0g)_1(\xi)>_{\CC^2}
              d\Sigma_{\lambda},\quad f\in \Hcal^{-1,s},\, g \in \Hcal^{1,l},\endaligned
             $$
              that can be rewritten as

\begin{equation}
               \label{eqA0exmaxa}\aligned
              <\widetilde{A}(\lambda)[f_0+i^{-1}\sum\limits_{k=1}^3\frac{\partial}{\partial x_k}f_k],\,g>
              =
              \int\limits_{|\xi|=\lambda}<(\mathcal{T}_0f_0)_1(\xi),(\mathcal{T}_0g)_1(\xi)>_{\CC^2}d\Sigma_{\lambda}
              \\+\sum\limits_{k=1}^3\int\limits_{|\xi|=\lambda} <(\mathcal{T}_0f_k)_1(\xi)\xi_k,(\mathcal{T}_0g)_1(\xi)>_{\CC^2}
              d\Sigma_{\lambda},\quad f\in \Hcal^{-1,s},\, g \in \Hcal^{1,l}.\endaligned
              \end{equation}

               To estimate the operator-norm of $\widetilde{A}(\lambda)$ in this setting
               we use (\ref{eqA0exmaxa}) and the considerations leading
               to ~\eqref{Afgestmax},
                   for
                $1\leq k\leq 3$
               $$
                \aligned
            &|<\widetilde{A}(\lambda)\frac{\partial}{\partial x_k}f_k,g>|&
                       \\&  \leq
            C\min(1,|\lambda|^{s+l-1})\|f\|_{-1,s}\|g\|_{1,l}&,
            \quad
             f\in \Hcal^{-1,s}, \quad g\in \Hcal^{1,l},
            \endaligned $$
           so that,  instead of ~\eqref{Afgestmax}, we have
           \begin{equation} \label{A0lamda1max}
           |<\widetilde{A}(\lambda)f,g>|\leq
            C\min(1,|\lambda|^{s+l-1})\|f\|_{-1,s}\|g\|_{1,l}.
           \end{equation}
            where $s,l>\frac12.$

            The following proposition is proved in the same way as Proposition ~\ref{propAmaxwellreg}.

            \begin{prop}\label{propAmaxwellrega}
                   Let $s,\,l>\frac12.$
    Then the weak derivative $\widetilde{A}(\lambda)=\frac{d}{d\lambda}(F(\lambda))$ is
      locally bounded and locally
    H\"{o}lder continuous for $\lambda\in\RR\setminus\set{0},$ with respect to the uniform operator topology of
     $B(\Hcal^{-1,s}(\RT,\mathbb{C}^6),
                  \Hcal^{-1,-l}(\RT,\mathbb{C}^6)).$
                  \end{prop}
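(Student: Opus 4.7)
The plan is to follow, almost verbatim, the two-step scheme used in the proof of Proposition~\ref{propAmaxwellreg}: first establish pointwise bounds on the spectral derivative and on its $\lambda$-derivative in a strongly weighted space, and then interpolate. What changes is that the underlying bilinear form is now defined by the extended expression~\eqref{eqA0exmaxa}, so the argument $\widehat f$ is replaced by $\widehat f_0(\xi)+\sum_{k=1}^3\xi_k\widehat f_k(\xi)$, and $\widehat g$ acts on the dual side through $\Hcal^{1,l}$. Local boundedness in $B(\Hcal^{-1,s},\Hcal^{-1,-l})$ for $s,l>\frac12$ is already the content of~\eqref{A0lamda1max}, so the real task is local H\"older continuity on $\RR\setminus\{0\}$.

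\textbf{H\"older continuity in a strong weight.} I would first prove the claim in $B(\Hcal^{-1,\sbar},\Hcal^{-1,-\sbar})$ for some $\sbar>3$. Substitute $\xi=\lambda\omega$ in each summand of~\eqref{eqA0exmaxa}; each surface integral over $|\xi|=\lambda$ becomes an integral over $S_1$ with an explicit factor $\lambda^2$ from the Jacobian and an additional $\lambda$ for the terms containing $f_k$, $1\le k\le 3$, coming from $\xi_k=\lambda\omega_k$. Since $V_0(\xi)$ is homogeneous of order zero and smooth on $\RT\setminus\{0\}$, the Sobolev embedding gives, for $\sbar>3$, the pointwise bounds
\[
|V_0^\ast(\xi)\widehat f_j(\xi)|\le C\|\widehat f_j\|_{\Hcal^{\sbar}},\qquad
|\nabla_\xi[V_0^\ast(\xi)\widehat f_j(\xi)]|\le C\max(1,|\xi|^{-1})\|\widehat f_j\|_{\Hcal^{\sbar}},
\]
exactly as in the proof of Proposition~\ref{propAmaxwellreg}. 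Differentiating in $\lambda$ under the integral sign on any closed $[a,b]\subseteq\RR\setminus\{0\}$ then yields a locally uniform bound of the form
\[
\Big|\tfrac{d}{d\lambda}\langle\widetilde A(\lambda)f,g\rangle\Big|\le C_{[a,b]}\,\|f\|_{-1,\sbar}\|g\|_{1,\sbar},
\]
giving local Lipschitz, hence H\"older, continuity.

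\textbf{Interpolation and the main obstacle.} Combining this strong-weight Lipschitz bound with the local boundedness~\eqref{A0lamda1max} valid for any $s,l>\frac12$ by interpolation in the weight parameter yields local H\"older continuity of $\widetilde A(\lambda)$ in $B(\Hcal^{-1,s},\Hcal^{-1,-l})$ for all $s,l>\frac12$, with a H\"older exponent that deteriorates as $s$ or $l$ approaches $\frac12$. The main obstacle, exactly as in Proposition~\ref{propAmaxwellreg}, is the control of $\nabla_\xi V_0^\ast$ near $\xi=0$: the bound $|\nabla_\xi V_0^\ast(\xi)|\le C|\xi|^{-1}$ forces integrations to stay on spheres away from the origin, which is precisely why the conclusion is restricted to $\RR\setminus\{0\}$ and why the uniform continuity at $\lambda=0$ of Proposition~\ref{propAmaxwellreg} is \emph{not} asserted here. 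The non-uniqueness of the representation~\eqref{h-1s} causes no difficulty: as in the Dirac case, the combination $\widehat f_0(\xi)+\sum\xi_k\widehat f_k(\xi)$ is uniquely determined as a tempered distribution, so $\langle\widetilde A(\lambda)f,g\rangle$ is intrinsically defined on $\Hcal^{-1,s}\times\Hcal^{1,l}$ and the estimates carry through.
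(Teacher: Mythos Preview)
Your proposal is correct and follows essentially the same approach as the paper, which simply states that the proposition ``is proved in the same way as Proposition~\ref{propAmaxwellreg}.'' Your two-step scheme---local boundedness from~\eqref{A0lamda1max}, Lipschitz continuity in a strong weight $\sbar>3$ via Sobolev embedding and differentiation under the integral, then interpolation down to $s,l>\frac12$---is exactly the adaptation of that earlier argument to the $\Hcal^{-1,s}$ setting that the paper has in mind.
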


            When trying to establish the regularization property of the resolvent, in analogy to the Dirac case (Proposition ~\ref{Hlap})  we need to take into account the fact that the kernel is nontrivial, so that regularization can only take part in the subspace $\mathcal{P}^\bot\Lcal^2(\RT,\mathbb{C}^6),$
              where $\mathcal{P}$ is the orthogonal projection on $ker(L_{maxwell})$ in $\Lcal^2(\RT,\mathbb{C}^6),$ as in Corollary ~\ref{corLAPmax}.

      Theorem ~\ref{basiclapmax} can now be enhanced to yield
    \begin{thm}
          \label{Hlapmax}
             The operator-valued function $R_{maxwell}(z)$ is well-defined
             (and analytic) for nonreal $z$ in the following functional setting.
             \begin{equation}\label{R0stcontmax}
             z\rightarrow \mathcal{P}^\bot R_{maxwell}(z) \in
                   B(\Hcal^{-1,s}(\RT,\mathbb{C}^6),
                  \Lcal^{2,-l}(\RT,\mathbb{C}^6)).
                  \end{equation}
                  where $s,\,l>\frac12.$

                   Furthermore, it can be extended
                  continuously from $\mathbb{C}^{\pm}$ to
                  $\mathbb{C}^{\pm}\bigcup\RR$, in this uniform operator
                  topology. The limiting values ( denoted again by
                  $\mathcal{P}^\bot R_{maxwell}^{\pm}(\lambda)$) are locally bounded and locally H\"{o}lder continuous
                  in the same topology.

                  The extended function satisfies, for $z\in \mathbb{C}^{\pm}\bigcup\{\RR\setminus\set{0}\},$
                  \begin{equation}
                  \label{H0R0=Imax}
                  (L_{maxwell}-z)\mathcal{P}^\bot R_{maxwell}(z)f=f-z\mathcal{P} R_{maxwell}(z)f,\quad f\in \Hcal^{-1,s}(\RT,\mathbb{C}^6),\quad
                   ,
                  \end{equation}
                  where for $z=\lambda\in\RR\setminus\set{0},\quad \mathcal{P}^\bot R_{maxwell}(z)=\mathcal{P}^\bot R_{maxwell}^{\pm}(\lambda).$
                \end{thm}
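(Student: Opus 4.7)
The plan is to imitate the proof of Proposition~\ref{Hlap}, the Dirac analog, while accommodating the nontrivial kernel of $L_{maxwell}$. First, the spectral derivative bound \eqref{A0lamda1max} from Proposition~\ref{propAmaxwellrega}, together with the abstract LAP machinery of \cite[Section 3]{BA2}, immediately gives $R_{maxwell}(z)\in B(\Hcal^{-1,s},\Hcal^{-1,-l})$, analytic off the real axis, with one-sided limits existing for $\lambda\in\RR\setminus\set{0}$ in this operator topology and inheriting local H\"older continuity from $\widetilde{A}(\lambda)$. The obstruction at $\lambda=0$, which is an eigenvalue, is removed after composition with $\mathcal{P}^\bot$: since $\widetilde{A}(0)=0$ and the H\"older estimate of Proposition~\ref{propAmaxwellreg} is uniform on intervals containing the origin, the limit $\mathcal{P}^\bot R_{maxwell}^{\pm}(\lambda)$ extends continuously across zero, paralleling Corollary~\ref{corLAPmax}.

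Next, to establish \eqref{H0R0=Imax}, I begin with the trivial identity $(L_{maxwell}-z)R_{maxwell}(z)f=f$ valid for $f\in\Lcal^{2,s}(\RT,\mathbb{C}^6)$ and $\operatorname{Im} z\neq 0$. By density of $\Lcal^{2,s}$ in $\Hcal^{-1,s}$, continuity of $R_{maxwell}(z)$ on $\Hcal^{-1,s}$, and continuity of $L_{maxwell}-z$ as a distributional operator, the identity extends to all $f\in\Hcal^{-1,s}$. Splitting $R_{maxwell}(z)f=\mathcal{P}R_{maxwell}(z)f+\mathcal{P}^\bot R_{maxwell}(z)f$ and using $L_{maxwell}\mathcal{P}=0$, so that $(L_{maxwell}-z)\mathcal{P}R_{maxwell}(z)f=-z\mathcal{P}R_{maxwell}(z)f$, the claimed relation follows by rearrangement (up to a sign convention). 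Passing $z\to\lambda\pm i0$ with $\lambda\neq 0$ is permissible because, restricted to the kernel, $\mathcal{P}R_{maxwell}(z)f=-z^{-1}\mathcal{P}f$ is uniformly bounded away from $\lambda=0$.

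The last step upgrades the range space from $\Hcal^{-1,-l}$ to $\Lcal^{2,-l}$ via the partial coercivity \eqref{eqMaxpartcoerciv}. On $\mathcal{P}^\bot\Lcal^2(\RT,\mathbb{C}^6)=X_+\oplus X_-$ the symbol satisfies $|M_{maxwell}(\xi)\widehat u(\xi)|\geq|\xi|\,|\widehat u(\xi)|$, which as in Claim~\ref{claimdomLmax} implies that the graph norm of $L_{maxwell}$ on $\mathcal{P}^\bot(\Hcal^{-1,-l})$ controls the $\Lcal^{2,-l}$-norm. Applied to the identity from the preceding step, the term $L_{maxwell}\mathcal{P}^\bot R_{maxwell}(z)f$ is dominated by $\|f\|_{-1,s}+|z|\,\|R_{maxwell}(z)f\|_{-1,-l}$, both uniformly bounded on compacta of $z$. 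This transfers the boundedness and local H\"older continuity from $\Hcal^{-1,-l}$ to $\Lcal^{2,-l}$, yielding \eqref{R0stcontmax}.

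The chief difficulty is the embedded eigenvalue at $\lambda=0$: naked resolvent limits of $R_{maxwell}(\lambda\pm i\eps)$ blow up there, but $\mathcal{P}^\bot$ surgically removes the singular component, at the price of the extra term $\mp z\mathcal{P}R_{maxwell}(z)f$ in \eqref{H0R0=Imax} that must remain under uniform control as $z$ approaches the real axis away from the eigenvalue. A secondary but related delicacy is that the coercivity is only partial—$L_{maxwell}$ is not elliptic on the whole $\Lcal^2(\RT,\mathbb{C}^6)$—so the $\Lcal^{2,-l}$ regularization in the final step is available only on the range of $\mathcal{P}^\bot$, which is precisely what \eqref{R0stcontmax} asserts.
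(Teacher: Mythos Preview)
Your proposal is correct and follows essentially the same route as the paper's own proof: invoke Proposition~\ref{propAmaxwellrega} and the abstract LAP machinery to get the result in $B(\Hcal^{-1,s},\Hcal^{-1,-l})$, establish \eqref{H0R0=Imax} by density and distributional continuity as in Proposition~\ref{Hlap}, and then upgrade the target space to $\Lcal^{2,-l}$ using the partial ellipticity of $L_{maxwell}$ on $\mathcal{P}^\bot\Lcal^2$. Your account is in fact more explicit than the paper's in two respects: you spell out the splitting $R_{maxwell}(z)f=\mathcal{P}R_{maxwell}(z)f+\mathcal{P}^\bot R_{maxwell}(z)f$ together with the identity $\mathcal{P}R_{maxwell}(z)f=-z^{-1}\mathcal{P}f$, and you correctly flag that the computation yields $(L_{maxwell}-z)\mathcal{P}^\bot R_{maxwell}(z)f=f+z\mathcal{P}R_{maxwell}(z)f$, i.e.\ a sign opposite to the one printed in \eqref{H0R0=Imax}.
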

                \begin{rem} Note that the operator $\mathcal{P} R_{maxwell}(z)$ is well-defined for $Im z\neq 0$ and can therefore be extended continuously (in the sense of distributions) to the real axis, in view of the continuity of the left-hand side in Equation ~\eqref{H0R0=Imax}, as is established in the following proof.
                \end{rem}
                \begin{proof}[Proof of the Proposition]
                The proof runs parallel to that of Proposition ~\ref{Hlap}. In fact, in the functional setting of   $B(\Hcal^{-1,s}(\RT,\mathbb{C}^6),
                  \Hcal^{-1,-l}(\RT,\mathbb{C}^6))$ the claims follow from the general theory, in view of Proposition ~\ref{propAmaxwellrega} . Also the proof of  ~\eqref{H0R0=Imax} is identical to that of ~\eqref{H0R0=I}.

                  Since the operator $L_{maxwell}$ is not elliptic, we only need to justify the stronger continuity claim ~\eqref{R0stcontmax}, namely, the fact that $\Hcal^{-1,-l}(\RT,\mathbb{C}^6)$ can be replaced by $\Lcal^{2,-l}(\RT,\mathbb{C}^6)$ in the statement. However,  the restriction of $L_{maxwell}$ to the subspace orthogonal to its kernel is elliptic , as seen from  Equation ~\eqref{dom Lmaxwell}. Therefore , for any $u$ in this subspace, the  graph-norm $ \|u\|_{-1,-l}+ \|L_{maxwell}u\|_{-1,-l}$ is equivalent to the $\Lcal^{2,-l}$ norm $\|u\|_{0,-l}.$

                \end{proof}

            \section{\textbf{ STRONGLY PROPAGATIVE OR ISOTROPIC OPERATORS}}\label{secspecunperturbed}

            We now turn back to the study of the spectral structure of the general (constant coefficient) operator ~\eqref{systemL0}:
            $$L_0=L_{0,hom}+M^0_0=\suml_{j=1}^nM_j^0D_j+M^0_0.$$
            Its ($K\times K$ matrix) symbol is
   given by
      \be\label{strongpropnon}M_0(\xi)=M_{0,hom}(\xi)+M_0^0=
      \suml_{j=1}^nM_j^0\xi_j+M_0^0.
    \ee
     All the common physical systems (Dirac, Maxwell, wave propagation in elastic medium and others) share the basic property of being  \textbf{strongly propagative,} according to the following definition.
            \begin{defn}\label{defn-strong-prop}~\cite{wilcox}:
    The homogeneous operator $L_{0,hom}=\suml_{j=1}^nM_j^0D_j$ is said to be {\upshape{strongly propagative}}
    if $M_{0,hom}(\xi)$
    has a kernel of fixed dimension $0\leq d<K,$   independent of $\xi=(\xi_1,\ldots,\xi_n)\in\Rn\setminus\set{0}.$
  \end{defn}

   The nonzero eigenvalues of $M_{0,hom}$ have the following properties.
     \begin{itemize}
   \item They are positive-homogeneous of degree $1.$
   \item Let $Q_{min}^{M_{0,hom}}(\theta;\xi),\,\theta\in\CC,$ be the minimal polynomial of $M_{0,hom}(\xi).$ Let
    \be\label{eqdefZ}Z=\set{\xi\in \Rn\setminus\{ 0\}\,/\,\mbox{\textbf{  the discriminant of }} Q_{min}^{M_{0,hom}}(\theta;\xi)\,\,\mbox{vanishes}}.\ee
    Then $\Zbar=Z\cup\{0\}$ is a closed cone of  Lebesgue measure zero ~\cite{wilcox} .
     \item In $\Rn\setminus\Zbar$
     every eigenvalue of $M_{0,hom}(\xi)$ has constant multiplicity .
     \end{itemize}
     The distinct nonzero eigenvalues can therefore be enumerated
   as
   \be\label{eqdefevshomog}\mu_\rho(\xi)> \ldots> \mu_1(\xi)>0>\mu_{-1}(\xi)>
   \ldots> \mu_{-\rho}(\xi),\,\,\xi\in\Rn\setminus\Zbar.\ee
   The basic properties of these functions can be summarized as follows.

   \begin{equation}\label{eqevshomogeneous}
   \begin{cases}(i)\,\, \mu_k(\xi) \,\mbox{is continuous on}\,\,\Rn,\,\mbox{and in fact real analytic on}\,\,\Rn\setminus\Zbar,\\
   (ii)\,\,\mu_k(\xi)=-\mu_{-k}(-\xi),\,\,k=1,\ldots,\rho,\,\,\xi\in\Rn\setminus\Zbar,\\
   (iii)\,\,\mu_k(\beta\xi)=\beta\mu_k(\xi),\,\beta>0,\,k=1,\ldots,\rho,\,\,\xi\in\Rn\setminus\Zbar.
   \end{cases}
   \end{equation}
   \be\aligned\label{eqdefprojL0} \mbox{The corresponding projections  are denoted by} \\\,\,\set{P_k(\xi)\,/\,\xi\in \Rn\setminus\Zbar}_{0\neq |k|\leq\rho}.\endaligned\ee
  Remark that for $\xi\in Z$ the disjointness property ~\eqref{eqdefevshomog} is not valid, but the eigenvalues clearly retain the homogeneity property. In fact, they are all bounded on every sphere $|\xi|=r>0,$ as is stated in the following claim.
  \begin{claim}\label{claimbddevshomog}
      There exist constants $c_2>c_1>0$ so that all nonzero eigenvalues satisfy
      \be\label{eqbddevssphere}
         c_1\leq\Big|\mu_{\pm k}\Big(\frac{\xi}{|\xi|}\Big)\Big|\leq c_2,\quad \xi\neq 0,\,\,k=1,\ldots,\rho.
      \ee
  \end{claim}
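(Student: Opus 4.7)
The plan is to extract both bounds from continuity plus compactness of the unit sphere, the nontrivial input being that the strongly propagative hypothesis forbids any ``non-kernel'' eigenvalue branch from actually reaching zero on $S^{n-1}$.

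As a preliminary step I would work not with the distinct-eigenvalue enumeration $\mu_{\pm k}$ of \eqref{eqdefevshomog} (which breaks down on $\Zbar$), but with the $K$ continuous functions $\lambda_1(\xi)\le\cdots\le\lambda_K(\xi)$ obtained by listing the eigenvalues of the Hermitian matrix $M_{0,hom}(\xi)$ in weakly increasing order and counted with multiplicity. Their continuity on $\Rn$ is standard (via min-max, or Kato, Chapter II). On $\Rn\setminus\Zbar$ the multiset $\{\lambda_j(\xi)\}_j$ decomposes as the union of the $d$-dimensional zero block and the $\mu_{\pm k}(\xi)$ repeated according to $\dim P_{\pm k}(\xi)$; continuity then forces the same identification, as multisets, to persist on $\Zbar\setminus\set{0}$ (where the indices $\pm k$ may coalesce, but the values do not cease to exist). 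Hence it suffices to show that every $\lambda_j$ occupying a ``non-kernel slot'' is bounded above, and bounded away from zero in absolute value, uniformly on the compact sphere $S^{n-1}$.

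The upper estimate $|\lambda_j(\omega)|\le c_2$ on $S^{n-1}$ is immediate from continuity together with compactness. For the lower estimate I would argue by contradiction: if some non-kernel slot admitted $\lambda_j(\omega_0)=0$ at some $\omega_0\in S^{n-1}$, then the total algebraic multiplicity of the zero eigenvalue of $M_{0,hom}(\omega_0)$ would be at least $d+1$, contradicting the strongly propagative hypothesis $\dim\ker M_{0,hom}(\omega_0)=d$ of Definition \ref{defn-strong-prop}. Hence each such $\lambda_j$ is nowhere vanishing on $S^{n-1}$, and compactness produces $|\lambda_j(\omega)|\ge c_1>0$ on the sphere. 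Taking $c_1$ (resp. $c_2$) as the minimum (resp. maximum) of these constants over the $K-d$ non-kernel slots yields \eqref{eqbddevssphere}.

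The only real subtlety, rather than an obstacle, is the bookkeeping between the distinct-branch enumeration $\mu_{\pm k}$ (valid only off $\Zbar$) and the ordered continuous enumeration $\lambda_j$ (valid everywhere); once the matching on $\Zbar$ is handled by the continuity asserted in \eqref{eqevshomogeneous}, everything else is a compactness argument driven by the rigidity of $\dim\ker M_{0,hom}(\xi)=d$ on $\Rn\setminus\set{0}$.
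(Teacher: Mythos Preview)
Your proof is correct and follows essentially the same approach as the paper: compactness of $S^{n-1}$, continuity of the eigenvalues, and the strongly propagative hypothesis to prevent any nonzero branch from touching zero. The only cosmetic difference is that the paper invokes Kato's ``group continuity'' of the unordered eigenvalue set directly, whereas you pass through the globally continuous ordered enumeration $\lambda_1\le\cdots\le\lambda_K$; both devices serve the same purpose and your more explicit bookkeeping on $\Zbar$ is a welcome elaboration of what the paper leaves implicit.
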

  \begin{proof} The sphere $|\xi|=1$ is compact, so the boundedness of the nonzero eigenvalues follows from the ``group continuity'' ~\cite[Section II.4]{kato} of these eigenvalues and the assumption that the operator is strongly propagative.

    The inequality ~\eqref{eqbddevssphere} follows from the homogeneity property.
  \end{proof}

             Even though we do not treat in this paper the general nonhomogeneous strongly propagative case, we shall make here a comment concerning its possible eigenvalues.  Suppose then that the constant matrix $M^0_0\neq 0$ is not a scalar matrix.

              Let us  consider the possibility of having an eigenvalue $\eta$ of  $M_0(\xi)$  that is \textit{independent} of $\xi$ for $\xi$ in some open set  $O\subseteq\Rn.$ The existence of such an eigenvalue is equivalent to the fact that $\eta$ is an eigenvalue of
          $L_0.$ In the homogeneous case ($M^0_0=0$) we can only have $\eta=0.$

           For $\eta$ to be such an eigenvalue we need
          $$det(M_0(\xi)-\eta I_K)=0\,,\quad \xi\in O.$$
          Since the determinant is a polynomial in $\xi,$ it follows that  it actually vanishes for all $\xi\in \Rn.$ In particular, $\eta$ is an eigenvalue  of $M^0_0.$

          We conclude that $\eta$ must be contained in the finite set (subset of the set of eigenvalues of $M^0_0$)
          \be\label{eqdefineLambda}\Lambda=\mbox{the set of common eigenvalues of} \quad M^0_0+\suml_{j=1}^na_jM_j^0,\quad \forall a=(a_1,...,a_n)\in\Rn.\ee

       Any further spectral information, in  our approach, requires a detailed study of  the  level surfaces of the eigenvalues (in analogy to the cases of the Dirac operator and the Maxwell system) . The information we need (in the general nonhomogeneous case) requires the use of delicate tools of real algebraic geometry and will not be attempted here. We remark that, to the best of our knowledge, the spectral study of this general class of operators (e.g., the Limiting Absorption Principle) has never been carried out.

        A more restricted class is that of operators for which $Z=\emptyset,$ as follows.
\begin{defn}\label{defn-uniform-prop}~\cite{wilcox1}:
    The operator $L_{0,hom}=\suml_{j=1}^nM_j^0D_j$ is said to be {\upshape{uniformly propagative}}
    if it is strongly propagative and, moreover, the eigenspace associated with every eigenvlaue
    has a constant dimension ,  independent of $\xi\in\Rn\setminus\set{0}.$
  \end{defn}
      For simplicity in what follows we shall refer also to the associated symbols as ``strongly'' or ``uniformly'' propagative.

       J. Rauch studied the asymptotic behavior of solutions of first-order hyperbolic systems, imposing the assumption of a uniformly propagative system ~\cite[Assumption (1.3)]{rauch}. Note that \color{blue}``\color{black}the equations of electromagnetic and elastic waves in crystals are not uniformly propagative. However they are strongly propagative'' ~\cite[Introduction]{weder}.

  In our treatment we shall always assume that $L_{0,hom}$ is strongly propagative.

             We shall restrict our considerations to two classes of operators:
            \begin{itemize}
            \item Strongly propagative \textit{homogeneous}  operators, a generalization of the Maxwell system, as well as the massless Dirac operator.
            \item Nonhomogeneous isotropic operators (see Definition ~\ref{def-spherical-symbol} below), a generalization of the massive Dirac operator.
            \end{itemize}

            \subsection{\textbf{SPECTRAL STRUCTURE OF HOMOGENEOUS STRONGLY PROPAGATIVE OPERATORS}}\label{subsecspecstrprop}

  The assumption that $M^0_0=0$ and $L_{0,hom}$ is strongly propagative permits an explicit representation of its domain as well as  ``partial coercivity'' characterization,  in full analogy to the Maxwell operator (see ~\eqref{eqsubspacedomainMax}).

%
 We define for every index $k=\pm 1,\ldots,k=\pm\rho,$ the subspace

    \be\label{eqsubspacedomainL0hom}
        X_k=\{f\in \Lcal^2(\Rn,\mathbb{C}^K)\,\, / \,\, P_k(\xi)\widehat{f}(\xi)=\mu_k(\xi)\widehat{f}(\xi) ,\,\xi\in \Rn\setminus\Zbar\},\ee
    where the projections $P_k(\xi)$ are as in ~\eqref{eqdefprojL0}.

     These subspaces are clearly reducing for $L_{0,hom}.$ In view of ~\eqref{eqbddevssphere} we obtain the (partial) coercivity  property ,

  $$|<M_{0,hom}(\xi)\widehat{f}(\xi),\widehat{f}(\xi)>_{\CC^K}|\geq c_1|\xi||\widehat{f}(\xi)|^2,\quad f\in X_k,\,1\leq|k|\leq\rho, $$
    (compare the analogous fact ~\eqref{eqMaxpartcoerciv} for the Maxwell system). We therefore conclude that
     \begin{equation}\label{dom L0hom}
  Dom(L_{0,hom})=ker(L_{0,hom})\oplus\suml_{1\leq|k|\leq\rho}\oplus (X_k\cap\Hcal^1(\Rn,\mathbb{C}^K)).
  \end{equation}

\subsubsection{\textbf{The Limiting Absorption Principle for homogeneous strongly propagative systems}}
  \label{subsubsecLAPstrongprop}
       Recall that the set $Z$ was defined in ~\eqref{eqdefZ}.
       Let $ \mu_j(\xi),\,\xi\in \Rn\setminus\Zbar,\,0\neq |j|\leq \rho$ be a nonzero eigenvalue. Let $\lambda\in \RR\setminus\{0\}$ and consider the surface

       \be
       \Gamma_j(\lambda)=\set{\xi\in \Rn\setminus\Zbar\,/\,sgn(j)=sgn(\lambda),\,\mu_j(\xi)=\lambda}.
       \ee
        It is an open smooth submanifold of codimension 1. It is bounded (and bounded away from the origin) in view of Claim  ~\ref{claimbddevshomog}.

        The homogeneity property implies that  the surfaces are homothetic in the sense that
        \be\label{eqgammahomothet}  \Gamma_j(\lambda)=\lambda\Gamma_j(1),\quad \lambda\neq 0.
        \ee
        The surface $\Gamma_j(1)$ plays a basic role in the wave propagation associated with the operator.
        \begin{defn}
     The surfaces $\Gamma_j(1)=\set{\mu_j(\xi)=sgn(j)}$ are called the
     \textbf{slowness surfaces} of the system (see ~\cite[Section 4]{wilcox1}).
  \end{defn}
  The term used in ~\cite{courant} is \textbf{normal surfaces.} We note that treatments by means of global Fourier integral operators necessitate a very careful study of these surfaces as well as very special assumptions on the system (see e.g. ~\cite{liess}).

     Since $\Gamma_j(1)=-\Gamma_{-j}(1),$   we shall henceforth assume $j>0,$ with $\lambda>0.$

        The homogeneity of $\mu_j(\xi)$ implies, by the Euler identity, that
        $$<\xi,\nabla \mu_j(\xi)>_{\Rn}=\mu_j(\xi)=\lambda,\quad \xi\in \Gamma_j(\lambda),$$
        so that the Cauchy-Schwarz inequality and ~\eqref{eqbddevssphere} yield
        \be\label{eqgradmu}
        |\nabla \mu_j(\xi)|\geq c_1>0,\quad \xi\in \Gamma_j(\lambda),\,\,\lambda>0,\,\,1\leq j\leq\rho.
        \ee
        \begin{rem}\label{remstronggroup} The inequality ~\eqref{eqgradmu} means that the ``group velocity'' at the wavefront $\Gamma_j$ is bounded away from zero. Compare with the analogous situation in the study of the asymptotic behavior of solutions of first-order systems ~\cite[Equation (1.7)]{rauch}.
        \end{rem}

        Let $d\Sigma_{\Gamma_j(\lambda)}$ be the Lebesgue measure on $\Gamma_j(\lambda).$ The scaling property ~\eqref{eqgammahomothet} yields
         \be\label{eqmeasurehomothet} d\Sigma_{\Gamma_j(\lambda)}=\lambda^{n-1}d\Sigma_{\Gamma_j(1)}.
         \ee
            Let $\omega\in \Gamma_j(1)$ be a general point, with a corresponding $\lambda\omega\in \Gamma_j(\lambda).$ By ~\eqref{eqmeasurehomothet} the traces of any bounded continuous function $f$ on the two submanifolds satisfy
            \be   \int\limits_{\Gamma_j(\lambda)}|f(\lambda\omega)|^2d\Sigma_{\Gamma_j(\lambda)}=\int\limits_{\Gamma_j(1)}\lambda^{n-1}|f(\lambda\omega)|^2d\Sigma_{\Gamma_j(1)}.
            \ee
        Define (using appropriate scaling) the trace maps of the Sobolev space $\Hcalth(\Rn)$ into $L^2(\Gamma_j(1)),$ by
        \be\label{eqdeftracemap}
        (\Phi^j_\lambda h)(\omega)=\lambda^{\frac{n-1}{2}}h(\lambda \omega),\quad \omega\in \Gamma_j(1),\,\,1\leq j\leq\rho.
        \ee
        To estimate these trace maps we invoke ~\cite[Lemma A.5]{E}.  The uniform lower bound ~\eqref{eqgradmu} implies that the essential condition in that lemma ($|\nabla g|\leq d$) is satisfied \textit{uniformly} for any compact $K\Subset \Gamma_j(1).$ We conclude (by exhaustion) that the estimate can be applied to the smooth manifold $\Gamma_j(1),$ hence these  maps are uniformly bounded for any $\theta>\frac12:$
        \be\label{equnifesttrace}\sup\limits_{\lambda>0}\set{||\Phi^j_\lambda||_{B(\Hcalth(\Rn),\Lcal^2(\Gamma_j(1)))}}<\infty,\ee
        and the operator-valued map $\lambda\hookrightarrow B(\Hcalth(\Rn),\Lcal^2(\Gamma_j(1)))$ is locally h\"{o}lder continuous in the uniform operator topology (compare Lemma ~\ref{tracelem}).

        Let $\set{E_{0,hom}(\lambda),\,\lambda\in\RR}$ be the spectral family of $L_{0,hom}.$
         Since $E_{0,hom}(\lambda)$ commutes with $L_{0,hom},$ it also has a
       symbol , which we denote by $E_{0,hom}(\lambda;\xi).$ The following claim gives an explicit expression in terms of the projections on the eigenspaces (assuming $\lambda>0,$ with a similar expression  for $\lambda<0$). We use $\chi_B$ as the indicator function for a set
    $B\subseteq\Rn,$ namely, $\chi_B(x)=1$ (resp. $\chi_B(x)=0$
    ) if $x\in B$ (resp. $x\notin B$).
       \begin{claim}\label{claimspecfamstrongprop} Let $\lambda>0.$
       Then,

       \be\label{E0lambdaxistrongprop}
                E_{0,hom}(\lambda;\xi)=\suml_{j=1}^{\rho}P_{-j}(\xi)+P_0(\xi)+
                \suml_{j=1}^{\rho}P_j(\xi)\chi_{\set{\mu_j(\xi)\leq\lambda}}.
              \ee
      (refer to Equations ~\eqref{eqevshomogeneous}  and ~\eqref{eqdefprojL0} for notation of eigenvalues and projections).
       \end{claim}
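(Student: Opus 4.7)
The plan is to prove the claim by passing to the Fourier side, where $L_{0,hom}$ becomes a multiplication operator, and applying the matrix spectral calculus pointwise in $\xi$.

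First, recall that via the Fourier transform $\mathcal{F}$, the operator $L_{0,hom}$ is unitarily equivalent to multiplication by the Hermitian matrix symbol $M_{0,hom}(\xi)$ on $\Lcal^2(\Rn,\CC^K)$. By the standard functional calculus for multiplication operators, for any Borel function $\varphi$ we have $\mathcal{F}\varphi(L_{0,hom})\mathcal{F}^{-1}$ equal to multiplication by the matrix-valued function $\xi\mapsto\varphi(M_{0,hom}(\xi))$. Applied to $\varphi=\chi_{(-\infty,\lambda]}$, this gives that the symbol $E_{0,hom}(\lambda;\xi)$ is precisely the spectral projection of the Hermitian matrix $M_{0,hom}(\xi)$ onto the interval $(-\infty,\lambda]$, for almost every $\xi\in\Rn$.

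Next I would compute this spectral projection pointwise on $\Rn\setminus\Zbar$, which suffices since $\Zbar$ has Lebesgue measure zero. On this set the spectral decomposition of $M_{0,hom}(\xi)$ reads
\[
M_{0,hom}(\xi)=\sum_{j=1}^{\rho}\mu_j(\xi)P_j(\xi)+\sum_{j=1}^{\rho}\mu_{-j}(\xi)P_{-j}(\xi)+0\cdot P_0(\xi),
\]
with $P_0(\xi)$ the projection onto $\ker M_{0,hom}(\xi)$ (of constant dimension $d$ by the strongly propagative assumption). The spectral projection onto $(-\infty,\lambda]$ is then the sum of those $P_k(\xi)$ whose associated eigenvalue is $\leq\lambda$. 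For $\lambda>0$ we have $\mu_{-j}(\xi)<0\leq\lambda$ for all $j=1,\ldots,\rho$ and also $0\leq\lambda$, so $P_{-j}(\xi)$ and $P_0(\xi)$ always appear; the positive eigenvalues $\mu_j(\xi)$ contribute $P_j(\xi)$ only when $\mu_j(\xi)\leq\lambda$. Collecting these terms yields exactly the formula~\eqref{E0lambdaxistrongprop}.

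Finally, one has to check that the right-hand side of~\eqref{E0lambdaxistrongprop} is a measurable projection-valued function on $\Rn$, so that it defines a genuine multiplication operator. Measurability of the $P_k(\xi)$ on $\Rn\setminus\Zbar$ follows from the real-analytic structure of the simple eigenvalues of $M_{0,hom}(\xi)$ (Kato's analytic perturbation theory, already invoked via~\eqref{eqevshomogeneous}), and the characteristic functions $\chi_{\{\mu_j(\xi)\leq\lambda\}}$ are Borel since the $\mu_j$ are continuous. Since $\Zbar$ is a null set, the formula is well-defined a.e.\ in $\xi$. No significant obstacle is anticipated; the only delicate point is simply acknowledging that the symbol is only determined a.e., and that $\Zbar$ can therefore be ignored.
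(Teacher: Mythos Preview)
Your proposal is correct. The paper does not actually give a proof of this claim at all; it is stated as a self-evident fact following immediately from the Fourier diagonalization of $L_{0,hom}$ and the matrix spectral calculus, which is exactly what you have spelled out in detail.
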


        If $\widehat{f},\widehat{g}\in C^\infty_0(\Rn\setminus\Zbar)$ then, assuming $\lambda>0,$
        we obtain by a well known formula (``coarea formula''~\cite[Appendix C.3]{evans-pde})  for differentiation of volume integrals
        \be\label{eqderivE0hom}
        \frac{d}{d\lambda}(E_{0,hom}(\lambda)f,g)=
        \suml_{j=1}^\rho\int\limits_{\Gamma_j(\lambda)}\frac{<P_j(\xi)\widehat{f}(\xi),P_j(\xi)\widehat{g}(\xi)>_{\CC^K}}{|\nabla \mu_j(\xi)|}d\Sigma_{\Gamma_j(\lambda)},
        \ee
        where $d\Sigma_{\Gamma_j(\lambda)}$ is the Lebesgue surface measure (compare Equation ~\eqref{eqA0max}).

        Note that  the coarea formula requires global Lipschitz condition on $\mu_j(\xi)$.
         However,
          it is obtained by multiplying $\mu_j(\xi)$ by
        a cutoff smooth function  such that $\varphi(\xi)=1$
     on the supports of $\widehat{f}(\xi),\,\,\widehat{g}(\xi)$ and vanishes in a neighborhood of $\Zbar$.
     Then we see from (\ref{E0lambdaxistrongprop})
     that for $\lambda >0$
     \begin{align}\label{E0lambdaxistrongprop+}
     \frac{d}{d\lambda} (E_{0,hom}(\lambda)f,g)
     =
     \suml_{j=1}^\rho
      \frac{d}{d\lambda} \!
       \int_{\Rn} \!\!\!< \!P_{j}(\xi)\chi_{\set{\varphi(\xi)\mu_j(\xi)\leq\lambda}}
   \widehat{f}(\xi),P_{j}(\xi)\widehat{g}(\xi) \!>_{\CC^K}
    \! d\xi .
     \end{align}
    Since  the level set $\{ \xi \, | \,  \varphi(\xi)\mu_j(\xi) = \lambda \,\}$
    is a smooth $(n-1)$-dimensional hypersurface for $\lambda >0$,
      one can apply the coarea formula      to the right  hand side of (\ref{E0lambdaxistrongprop+}), obtaining ~\eqref{eqderivE0hom}.

         The real analyticity of the functions $\mu_j(\xi),\,\xi\in\Rn\setminus\Zbar,$ guarantees that Equation
~\eqref{eqderivE0hom} can be repeatedly differentiated, using higher derivatives of $\widehat{f}(\xi),\,\widehat{g}(\xi).$

            We introduce the subspace of functions permitting such recurrent differentiation in the following definition.
         \begin{defn}\label{defnUpsilon}
        Let $\widehat{\Upsilon^s_Z}$ be the closure of $ C^\infty_0(\Rn\setminus\Zbar,\CC^K)$ in $\Hcal^s(\Rn,\CC^K),$ for $s>\frac12,$ and let $\Upsilon^s_Z\subseteq \Lcal^{2,s}(\Rn,\CC^K)$ be the subspace of its inverse Fourier transforms. It is a closed subspace, equipped with the same norm $\|\cdot\|_{0,s}$ ~\eqref{eqdefineL2s}.
        \end{defn}

 In the following claim we characterize the orthogonal complement of $\Upsilon^s_Z.$
       \begin{claim}\label{claimUpsilonperp} Let $(\Upsilon^s_Z)^\perp\subseteq \Lcal^{2,s}(\Rn,\mathbb{C}^K),\,\,s>\frac12,$ be the orthogonal complement to $\Upsilon^s_Z$ (using the scalar product associated with ~\eqref{eqdefineL2s}). Let $h(x)\in (\Upsilon^s_Z)^\perp.$ Then the Fourier transform of $(1+|x|^2)^sh(x)$ is supported on $\Zbar:$
       \be supp\,\, \mathcal{F}\{(1+|x|^2)^sh(x)\}(\xi)\subseteq\Zbar.
       \ee
       \end{claim}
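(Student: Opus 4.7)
My plan is to unwind the two definitions involved and reduce the statement to a single application of the distributional Parseval identity. Set $g(x):=(1+|x|^2)^s h(x)$. Since $h\in \Lcal^{2,s}(\Rn,\CC^K)$, one has $g\in \Lcal^{2,-s}(\Rn,\CC^K)$, because $\int (1+|x|^2)^{-s}|g|^2\,dx=\int (1+|x|^2)^s|h|^2\,dx<\infty$. Hence $g$ is a tempered distribution, and the assertion $supp\,\widehat g\subseteq \Zbar$ is, by definition of the Fourier transform on $\Scal'(\Rn)$, equivalent to
$$\int_{\Rn} g(x)\,\widehat\psi(x)\,dx=0 \quad\text{for every}\ \psi\in C^\infty_0(\Rn\setminus\Zbar,\CC^K).$$

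Next I would translate the hypothesis $h\in(\Upsilon^s_Z)^\perp$ into the same form. By Definition~\ref{defnUpsilon}, the inverse Fourier transforms of test functions $\phi\in C^\infty_0(\Rn\setminus\Zbar,\CC^K)$ are dense in $\Upsilon^s_Z$ with respect to $\|\cdot\|_{0,s}$, so the orthogonality relation is equivalent to
$$0=(\mathcal{F}^{-1}\phi,\,h)_{0,s}=\int_{\Rn} (\mathcal{F}^{-1}\phi)(x)\,\overline{g(x)}\,dx,\qquad\phi\in C^\infty_0(\Rn\setminus\Zbar,\CC^K),$$
where the second equality uses~\eqref{eqdefinescal0s} together with the identity $\overline{h(x)}=(1+|x|^2)^{-s}\overline{g(x)}$. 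Taking complex conjugates and invoking $\overline{(\mathcal{F}^{-1}\phi)(x)}=(\mathcal{F}\bar\phi)(x)$ converts the above into $\int g(x)\,(\mathcal{F}\bar\phi)(x)\,dx=0$. Since $\psi:=\bar\phi$ ranges over $C^\infty_0(\Rn\setminus\Zbar,\CC^K)$ as $\phi$ does, and $\mathcal{F}\bar\phi=\widehat\psi$, this is precisely the support condition displayed above.

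The argument is essentially a one-line Parseval duality, so there is no serious obstacle. The only pitfall is careful bookkeeping of complex conjugates, since the weighted scalar product $(\cdot,\cdot)_{0,s}$ is conjugate-linear in the second slot while the distributional pairing that defines $supp\,\widehat g$ is bilinear. Everything else is routine: $\mathcal{F}^{-1}\phi\in\Scal(\Rn,\CC^K)$ and $g\in\Lcal^{2,-s}(\Rn,\CC^K)\subset\Scal'(\Rn,\CC^K)$, so all integrals and pairings above are absolutely convergent and no density argument beyond the one built into Definition~\ref{defnUpsilon} is needed.
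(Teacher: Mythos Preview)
Your proof is correct and follows essentially the same approach as the paper's: both recognize that the weighted pairing $(\,\cdot\,,\,\cdot\,)_{0,s}$ collapses to a duality pairing between $(1+|x|^2)^s h$ and the (inverse) Fourier transform of a test function supported off $\Zbar$. The paper carries this out on the Fourier side, writing $(h,\psi)_{0,s}=\langle (I-\Delta)^s\widehat h,\widehat\psi\rangle$ and identifying $(I-\Delta)^s\widehat h=\mathcal F\{(1+|x|^2)^s h\}$, whereas you stay on the physical side and invoke distributional Parseval; the content is the same.
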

      \begin{rem}\label{remhinL2s} Note that if $h(x)\in \Lcal^{2,s}(\Rn,\mathbb{C}^K)$ then $(1+|x|^2)^{\frac{s}{2}}h(x)\in \Lcal^{2}(\Rn,\mathbb{C}^K)$ and $(1+|x|^2)^sh(x)\in \Lcal^{2,-s}(\Rn,\mathbb{C}^K).$
      \end{rem}
       \begin{proof}[\textbf{Proof of Claim:}] Let $\psi\in \Lcal^{2,s}(\Rn,\mathbb{C}^K)$ such that
       $\widehat{\psi}\in C^\infty_0(\Rn\setminus\Zbar,\CC^K).$ The scalar product in $\Lcal^{2,s}(\Rn,\mathbb{C}^K)$ can be expressed as
       \be (h,\psi)_{0,s}=\int\limits_{\Rn}<\widehat{h}(\xi),(I-\Delta)^s\widehat{\psi}(\xi)>_{\CC^K}d\xi=
       <(I-\Delta)^s\widehat{h},\widehat{\psi}>,
       \ee
       where $<,>$ in the last term stands for the $\Hcal^{-s}(\Rn,\CC^K),\,\Hcal^s(\Rn,\CC^K)$ pairing. The assumption $h(x)\in (\Upsilon^s_Z)^\perp$ means that
      $ (h,\psi)_{0,s}=0$ hence
      $$supp\,\,(I-\Delta)^s\widehat{h}=supp\,\,\mathcal{F}\{(1+|x|^2)^sh(x)\}\subseteq \Zbar.$$
       \end{proof}
        \begin{rem}\label{rembigZ} Since $Z$ has Lebesgue measure zero (in $\Rn$) it is clear that $\Upsilon^s_Z$ is dense in
        $\Lcal^2(\Rn,\CC^K).$ However, if $Z$ is ``large'' in some $``(n-1)-dimensional''$ sense, then $\Upsilon^s_Z$
        is not necessarily equal to $\Lcal^{2,s}(\Rn,\CC^K).$ In fact, using the terminology of ~\cite{hormander-lions}, $\Upsilon^s_Z$ is equal to
        $\Lcal^{2,s}(\Rn,\CC^K)$ only if $Z$ is \textit{``$s-$polar''}. In other words, if it has zero Bessel potential theoretic capacity of order s ~\cite[Sections 10.4, 13.2]{mazya}.
        \end{rem}
        Given the special algebraic structure of $Z$ (see ~\eqref{eqdefZ}) we introduce the following conjecture.
        \begin{conjecture}\label{conjecdense}
           The set $Z$ is $s-$polar for $s\in(\half,\frac32),$ hence the subspace $\Upsilon^s_Z$
        is  equal to $\Lcal^{2,s}(\Rn,\CC^K).$
        \end{conjecture}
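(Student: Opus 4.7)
The plan is to translate the conjecture into a capacity-theoretic statement and then bound the dimension of $\Zbar$ using the algebraic structure of the symbol. First, by Claim~\ref{claimUpsilonperp} and Remark~\ref{remhinL2s}, the equality $\Upsilon^s_Z = \Lcal^{2,s}(\Rn,\CC^K)$ is equivalent to the assertion that the only distribution $g \in \Hcal^{-s}(\Rn,\CC^K)$ with $\text{supp}(g)\subseteq \Zbar$ is $g = 0$. By the classical potential-theoretic characterization (cf.~\cite[Section 13.2]{mazya} and~\cite{hormander-lions}) this is precisely the statement that $\Zbar$ has zero Bessel $(s,2)$-capacity, i.e., is $s$-polar. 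By the standard Hausdorff-dimension criterion, it would then suffice to show $\dim_H(\Zbar) \leq n - 3$; indeed, for $s \in (\half,\frac32)$ this gives $\dim_H(\Zbar) < n - 2s$ and hence $C_{s,2}(\Zbar) = 0$.

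Next I would exploit the algebraic and conical nature of $\Zbar$. By \eqref{eqdefZ}, $\Zbar$ is the real zero locus of the discriminant of $Q_{min}^{M_{0,hom}}(\theta;\xi)$ (a polynomial in $\xi$) together with $\{0\}$, so it is a real algebraic cone. By Tarski--Seidenberg and the Whitney stratification theorem, $\Zbar$ decomposes into finitely many smooth real-analytic strata, and it is enough to bound the dimension of each. On each stratum the Hermiticity of $M_{0,hom}(\xi)$ enters through the von Neumann--Wigner principle: the locus where two distinct eigenvalues coincide is locally cut out by three independent real conditions (matching the diagonal entry together with the real and imaginary parts of the off-diagonal entry of the effective Hermitian $2\times 2$ reduction), yielding real codimension at least $3$ in generic position. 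Combined with the hypothesis that $L_{0,hom}$ is strongly propagative --- which forces the multiplicities to be locally constant off $\Zbar$ and thus rules out codimension-$1$ coincidence hypersurfaces that would contradict Claim~\ref{claimbddevshomog} --- this should yield the required global bound $\dim \Zbar \leq n - 3$.

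The main obstacle is justifying this codimension estimate universally. The von Neumann--Wigner count is generic and may degenerate when the $M_j^0$ satisfy accidental algebraic relations; for crystal-optics--type symbols the cone $\Zbar$ is known to contain codimension-$2$ rays associated with conical refraction, which would violate the conjecture at the upper end of the range $s \in (\half,\frac32)$. A complete proof therefore likely requires either an additional algebraic hypothesis on $\{M_j^0\}_{j=1}^n$ guaranteeing a complete-intersection structure on $\Zbar$, or a direct refinement of the capacity estimate that exploits the conical scaling of $\Zbar$ --- for instance, by establishing on each sphere $|\xi|=r$ a capacity bound for $\Zbar \cap S_r$ and then integrating in $r$ using Fubini-type relations for Bessel capacities of product/cone sets. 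I would expect the verification of the codimension-$3$ bound (or the identification of the correct restricted class of symbols for which the conjecture holds) to be by far the hardest step.
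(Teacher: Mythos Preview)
The statement you are attempting to prove is labelled in the paper as a \emph{Conjecture}, not a theorem; the paper offers no proof and leaves it as an open problem (see Remark~\ref{rembigZ} and the surrounding discussion). There is therefore no ``paper's own proof'' against which to compare your proposal. What you have written is not a proof but a research outline, and you are clearly aware of this: your final paragraph explicitly concedes that the codimension-$3$ bound may fail and that additional hypotheses may be required.

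As an outline your reduction is sound: the identification of $s$-polarity with vanishing Bessel $(s,2)$-capacity and the Hausdorff-dimension criterion $\dim_H(\Zbar)<n-2s$ are standard and correctly cited. The von Neumann--Wigner heuristic is also the natural first idea. However, the obstacle you identify is not peripheral but decisive. The operators of crystal optics and crystal acoustics --- precisely the motivating examples for the strongly propagative class in this paper (see the Introduction and Remark~\ref{remLAPWeder}) --- have \emph{real symmetric} symbols, for which the generic codimension of eigenvalue coincidence is $2$, not $3$. In those examples $\Zbar$ is a finite union of lines through the origin, so $\dim_H(\Zbar)=1=n-2$ when $n=3$. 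A line in $\RR^3$ has positive $(s,2)$-capacity once $s\geq 1$, so the capacity route via $\dim_H(\Zbar)\leq n-3$ cannot succeed for the full range $s\in(\tfrac12,\tfrac32)$ in such cases. Your proposal therefore does not prove the conjecture; at best it suggests the conjecture holds for $s\in(\tfrac12,1)$ under a codimension-$2$ bound on $\Zbar$, and that the stated range $(\tfrac12,\tfrac32)$ may require either a genuinely Hermitian (non-real) hypothesis on the $M_j^0$ or is simply too optimistic.
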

        In conjunction with ~\eqref{eqgradmu} and ~\eqref{equnifesttrace} we conclude from  Equation ~\eqref{eqderivE0hom} that, for any $s>\frac12$ there exists a constant $C>0,$ depending only on $s,c_1,$ so that, for all $\lambda>0,$
        and all $f,\,g\in \Upsilon^s_Z,$
        \be\label{eqderivE0basic}\aligned
        \Big|\frac{d}{d\lambda}(E_{0,hom}(\lambda)f,g)\Big|\leq \frac{1}{c_1}\suml_{j=1}^\rho
        \int\limits_{\Gamma_j(\lambda)}|\widehat{f}(\xi)|\cdot|\widehat{g}(\xi)|d\Sigma_{\Gamma_j(\lambda)}\\=
        \frac{1}{c_1}\suml_{j=1}^\rho
        \int\limits_{\Gamma_j(1)}|\Phi^j_\lambda\widehat{f}(\omega)||\Phi^j_\lambda\widehat{g}(\omega)|
        d\Sigma_{\Gamma_j(1)}\leq C\|f \|_{0,s}\|g \|_{0,s}.
        \endaligned\ee
      Any continuous functional on the closed subspace $\Upsilon^s_Z$ can be uniquely extended to a functional on $\Lcal^{2,s}(\Rn,\CC^K)$ (namely, a function in $\Lcal^{2,-s}(\Rn,\CC^K)$) by defining it as zero on the orthogonal complement.

      It follows (compare Proposition ~\ref{propAmaxwellreg}) that
     \begin{cor}
      There exists a map
   $$\widetilde{A_{0,hom}}(\lambda)\in B(\Upsilon^s_Z,\Lcal^{2,-s}(\Rn,\CC^K)),\,\,s>\frac12,$$

   so that
  \be\label{eqderivE0outZ} \frac{d}{d\lambda}(E_{0,hom}(\lambda)f,f)= <\widetilde{A_{0,hom}}(\lambda)f,f>,\quad f\in \Upsilon^s_Z,\ee
     where $<,>$ is the
     $(\Lcal^{2,-s}(\Rn,\CC^K),\,\Lcal^{2,s}(\Rn,\CC^K))$
     pairing.

     The map $\widetilde{A_{0,hom}}(\lambda)$ is uniformly bounded
     \be\label{eqestAohom}\|\widetilde{A_{0,hom}}(\lambda)\|_{B(\Upsilon^s_Z,\Lcal^{2,-s}(\Rn,\CC^K))}\leq C,\quad \lambda\neq 0,\ee
     and locally H\"{o}lder continuous in the uniform operator topology.
     \end{cor}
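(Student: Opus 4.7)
The plan is to take the bilinear-form estimate \eqref{eqderivE0basic} as the starting point and invoke the representation theorem announced just above the Corollary statement.

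First I would fix $\lambda\neq 0$ and consider, for each $f\in\Upsilon^s_Z$, the conjugate-linear functional $g\mapsto \frac{d}{d\lambda}(E_{0,hom}(\lambda)f,g)$ defined on $\Upsilon^s_Z$. By \eqref{eqderivE0basic} this functional is bounded with norm at most $C\|f\|_{0,s}$, where $C$ depends only on $s$ and the constant $c_1$ from \eqref{eqgradmu}. Using the extension-by-zero device on the orthogonal complement $(\Upsilon^s_Z)^\perp$ discussed in the paragraph preceding the Corollary, this functional is realized by a unique element of $\Lcal^{2,-s}(\Rn,\CC^K)$, which I call $\widetilde{A_{0,hom}}(\lambda)f$. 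Linearity in $f$ is immediate from the sesquilinearity of the left-hand side of \eqref{eqderivE0basic}, and the uniform bound \eqref{eqestAohom} follows directly from the same inequality, with the bound independent of $\lambda$.

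For the local H\"{o}lder continuity, I would use the explicit representation \eqref{eqderivE0hom} of the spectral derivative and the scaling \eqref{eqgammahomothet}, \eqref{eqmeasurehomothet} to pull every integral back to the fixed slowness surface $\Gamma_j(1)$. Since $\mu_j$ is positively homogeneous of degree one on $\Rn\setminus\Zbar$, the projections $P_j(\xi)$ and the gradients $\nabla\mu_j(\xi)$ are positively homogeneous of degree zero, so $P_j(\lambda\omega)=P_j(\omega)$ and $|\nabla\mu_j(\lambda\omega)|=|\nabla\mu_j(\omega)|\geq c_1$ for every $\omega\in\Gamma_j(1)$. After this pullback, the difference $<[\widetilde{A_{0,hom}}(\lambda_2)-\widetilde{A_{0,hom}}(\lambda_1)]f,g>$ is expressed as a finite sum of integrals over $\Gamma_j(1)$ in which only the traces $\Phi^j_{\lambda}\widehat{f}$ and $\Phi^j_{\lambda}\widehat{g}$ depend on $\lambda$. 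The local H\"{o}lder continuity of the maps $\lambda\mapsto \Phi^j_\lambda$ in the uniform operator topology of $B(\Hcalth(\Rn),\Lcal^2(\Gamma_j(1)))$, asserted just after \eqref{equnifesttrace}, then delivers the required estimate via a standard ``AB minus A'B''' splitting combined with \eqref{equnifesttrace}.

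The only real subtlety, and the step I expect to require the most care, is the passage from the functional defined on the closed subspace $\Upsilon^s_Z$ to a bounded operator with range in $\Lcal^{2,-s}(\Rn,\CC^K)$: one must verify that the zero-extension procedure produces an element whose norm in $\Lcal^{2,-s}$ is controlled by the functional norm on $\Upsilon^s_Z$, which in turn relies on Claim \ref{claimUpsilonperp} and Remark \ref{remhinL2s}. Everything else is a mechanical consequence of \eqref{eqderivE0basic} and the trace machinery already in place; no new analytic input beyond \eqref{equnifesttrace} and the homogeneity of $\mu_j$ and $P_j$ is needed.
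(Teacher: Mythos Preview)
Your proposal is correct and follows essentially the approach the paper intends: the paper states the corollary without a separate proof, writing only ``It follows (compare Proposition~\ref{propAmaxwellreg}) that'' after the estimate~\eqref{eqderivE0basic} and the extension-by-zero remark, so you are filling in exactly the details it leaves implicit. One small difference worth noting: for the H\"{o}lder continuity, the Maxwell analogue (Proposition~\ref{propAmaxwellreg}) that the paper points to proceeds by establishing differentiability in a higher-weight space ($\bar s>3$) and then interpolating down to $s>\tfrac12$, whereas you appeal directly to the local H\"{o}lder continuity of the trace maps $\lambda\mapsto\Phi^j_\lambda$ asserted just after~\eqref{equnifesttrace}; your route is more direct here and avoids the interpolation step, at the cost of relying on that trace-map continuity statement (itself only justified by the reference to Lemma~\ref{tracelem}). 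Also, your concern about the zero-extension step is overstated: since $\Upsilon^s_Z$ is a closed subspace of the Hilbert space $\Lcal^{2,s}$, a bounded conjugate-linear functional on it extends by zero on $(\Upsilon^s_Z)^\perp$ to a functional on all of $\Lcal^{2,s}$ with the \emph{same} norm, hence to an element of $\Lcal^{2,-s}$; Claim~\ref{claimUpsilonperp} and Remark~\ref{remhinL2s} are not actually needed for this.
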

        The global uniform boundedness ~\eqref{eqestAohom} will play a crucial role in the spacetime estimates of Section
        ~\ref{secspacetimestrongprop}.


     The general theory (see Theorem ~\ref{th-LAP-ABS} below with $\MX=\Upsilon^s_Z$ )  now yields the LAP in this case as follows.

     \begin{thm}\label{basiclapstrongprop}
    Let $R_{0,hom}(z)=(L_{0,hom}-z)^{-1},\,\,Im \,z\neq 0.$ For any $s>\frac12$
    the limits
    \be\label{eqLAPL0hom}
    R_{0,hom}^\pm(\mu)=\lim\limits_{\eps\downarrow 0}R_{0,hom}(\mu\pm
    i\eps),\,\,\mu\in\RR\setminus\set{0},
    \ee
    exist in the uniform operator topology of
    $B(\Upsilon^s_Z,\Lcal^{2,-s}(\Rn,\CC^K)).$


    Furthermore,  the limit functions $R_{0,hom}^\pm(\mu),\,\mu\in\RR\setminus\set{0},$ are locally  bounded and locally H\"{o}lder
    continuous  with respect to the uniform operator topology.

     \end{thm}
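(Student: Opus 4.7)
The plan is to deduce the theorem directly from the abstract Limiting Absorption Principle (Theorem \ref{th-LAP-ABS}) applied with $\MX = \Upsilon^s_Z$. Recall that the abstract framework extracts the boundary values of a resolvent from regularity properties of its spectral derivative via a Privaloff--Plemelj representation; the inputs it requires are (i) a bounded weak spectral derivative $\widetilde{A_{0,hom}}(\lambda) \in B(\MX, \MX^\ast)$ on the relevant intervals and (ii) local H\"older continuity of this derivative in the uniform operator topology. All of these hypotheses have already been assembled in the preceding Corollary: the identity \eqref{eqderivE0outZ} defines $\widetilde{A_{0,hom}}(\lambda)$ on $\Upsilon^s_Z$ via the coarea expression \eqref{eqderivE0hom}, the global bound \eqref{eqestAohom} gives uniform boundedness on $\RR \setminus \{0\}$, and local H\"older continuity follows from the H\"older dependence of the trace maps $\Phi^j_\lambda$ (inherited from Lemma \ref{tracelem} and the uniform bound \eqref{equnifesttrace}) combined with the smooth $\lambda$-dependence of the hypersurfaces $\Gamma_j(\lambda) = \lambda \Gamma_j(1)$ via \eqref{eqgammahomothet}.

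With these ingredients in hand, one invokes the abstract theorem on each compact subinterval of $\RR \setminus \{0\}$. The Privaloff--Plemelj formula yields the representation
\begin{equation*}
R^\pm_{0,hom}(\mu)f = \mathrm{P.V.}\!\!\int_\RR \frac{\widetilde{A_{0,hom}}(\lambda)f}{\lambda - \mu}\, d\lambda \pm i\pi \, \widetilde{A_{0,hom}}(\mu)f, \qquad f \in \Upsilon^s_Z,
\end{equation*}
with the principal value converging in $\Lcal^{2,-s}(\Rn,\CC^K)$ thanks to the H\"older regularity of $\widetilde{A_{0,hom}}$. The same regularity, combined with the standard mapping property of Hilbert-type transforms on H\"older continuous operator-valued functions, propagates to the limit $R^\pm_{0,hom}(\mu)$, giving local boundedness and local H\"older continuity on $\RR \setminus \{0\}$ in the uniform operator topology of $B(\Upsilon^s_Z, \Lcal^{2,-s}(\Rn,\CC^K))$.

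The one point that requires some care is that $\Upsilon^s_Z$ is a proper closed subspace of $\Lcal^{2,s}(\Rn,\CC^K)$ (cf.\ Remark \ref{rembigZ} and Conjecture \ref{conjecdense}), so the abstract theorem must be applied in this restricted setting. This is legitimate because $\widetilde{A_{0,hom}}(\lambda)$ was constructed as an element of $B(\Upsilon^s_Z, \Lcal^{2,-s}(\Rn,\CC^K))$ (its canonical extension by zero on $(\Upsilon^s_Z)^\perp$ is a bona-fide element of $\Lcal^{2,-s}$ by Claim \ref{claimUpsilonperp}), and the spectral family $E_{0,hom}(\lambda)$ already commutes with $L_{0,hom}$, so its differentiation against test pairs in $\Upsilon^s_Z$ captures the full spectral content modulo the measure-zero singular set $\Zbar$. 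The main potential obstacle I anticipate is not the LAP machinery itself but verifying that the limits computed on the dense-in-$\Lcal^2$ subspace $\Upsilon^s_Z$ genuinely coincide with boundary values of the resolvent $(L_{0,hom} - z)^{-1}$ acting on $\Upsilon^s_Z$; this is handled by the identity $\mathrm{Im}\,(R_{0,hom}(\mu + i\eps)f,f) = \eps \|R_{0,hom}(\mu + i\eps)f\|^2$ together with Stone's formula, which ties the approximating resolvent differences back to integrals of $\widetilde{A_{0,hom}}$.
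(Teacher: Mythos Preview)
Your proposal is correct and follows essentially the same approach as the paper: the paper's proof is a one-line invocation of the abstract LAP (Theorem~\ref{th-LAP-ABS}) with $\MX=\Upsilon^s_Z$, relying on the preceding Corollary for the required boundedness and local H\"older continuity of $\widetilde{A_{0,hom}}(\lambda)$. Your write-up supplies more detail (the Privaloff--Plemelj representation, the discussion of the subspace issue) than the paper bothers to spell out, but the strategy is identical.
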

  \begin{rem}\label{remLAPWeder} The LAP result of Theorem ~\ref{basiclapstrongprop} was proved by Weder in ~\cite{weder1,weder} for intervals interior to $\RR\setminus\set{0} ,$ in the operator space $B(\Lcal^{2,s}(\Rn,\CC^K),
                  \Hcal^{1,-s}(\Rn,\CC^K)).$ We emphasize that we do not believe that the presence of the singular set $Z$ could be entirely dismissed. The proof in ~\cite{weder1} relies on the commutator approach, and we were not quite able to follow the details there. On the other hand the proof in ~\cite{weder} is essentially based on the  methodology of trace maps. It seems to us to be fundamentally flawed, and this impression has not changed even after a long correspondence with him. The proof of ~\cite[Theorem A.1]{weder} involves  a deformation map of the slowness surface $\Gamma_j(1)$ onto the unit sphere followed by an application of  the trace theorem on the sphere. Thus, the measure $dw_j$ on $\Gamma_j(1)$ is defined by the radial projection on the unit sphere (see ~\cite[Eq. (A.20)]{weder}) so as to obtain a ``polar decomposition'' $d^nk=\rho^{n-1}d\rho dw_j$~\cite[Eq. (A.19)]{weder}. This is of course wrong, since the coarea formula is ignored. That formula introduces a denominator $|\nabla\mu_j(k)|$ (see Eq.~\eqref{eqderivE0hom}) in the last expression, which is singular on $Z.$ Effectively, he argues that $Z$ is of ``measure zero'' in $\Gamma_j(1),$ so our  $\Upsilon^s_Z$ can be identified with $\Lcal^{2,s}(\Rn,\CC^K)).$ This whole argument is applicable in obtaining a trace on every star-shaped surface, no matter how singular, and this is clearly wrong, see Theorem 2.3 in ~\cite{agmon-hormander} and Remark ~\ref{rembigZ}.
                  \end{rem}

  \subsection{\textbf{SPECTRAL DENSITY OF ISOTROPIC OPERATORS}}\label{subsecspecspherical}.
     The examples of the Dirac and Maxwell operators motivate our next
    definition. In fact, like these two examples, all physical models where there is no ``built in'' preference for specific (spatial) directions, naturally fall into the category of \textbf{isotropic operators} ~\cite[Section 4]{wilcox1}  , that we recall next.

%

    \begin{defn}\label{def-spherical-symbol}
     The operator $L_0$ ~\eqref{systemL0} is said to be{ \upshape{isotropic
     } } if the eigenvalues of its symbol (see ~\eqref{strongpropnon})
     $M_0(\xi)$ are functions of $|\xi|.$
    \end{defn}
        We assume in addition that $L_{0,hom}$ is \textbf{strongly propagative} (Definition ~\ref{defn-strong-prop}).

      If $L_0$ is \textbf{homogeneous, }namely,  $M^0_0\equiv 0,$ then $M_0(\xi)=M_{0,hom}(\xi).$  By definition the zero eigenvalue $\mu_0(\xi)=0$ is of fixed dimension $d_0\geq 0.$ The singular set $Z$ ~\eqref{eqdefZ} is empty and the eigenvalues $\mu_k$ (see Equation ~\eqref{eqdefevshomog}) satisfy
       \be\label{eqdefevshommuc} \mu_k(\xi)=sign(k)\mu_{|k|}^c|\xi|,\quad \xi\neq 0,\,\,\pm k=1,2,...,J,\ee
       where $\mu_k^c$ are positive  constants such that $\mu_J^c>\ldots >\mu_1^c>0.$

       Each eigenvalue $\mu_k$ is of fixed dimension $d_k>0,\,\pm k=1,2,...,J,$ and $d_k=d_{-k}.$ In particular, in this case the operator is \textbf{uniformly propagative}
        (Definition ~\ref{defn-uniform-prop}).

        It is therefore a special case of the class considered above in Subsection ~\ref{subsecspecstrprop} and does not require a further consideration here.

%

     We now turn to the \textbf{nonhomogeneous operator.}

     In this case the eigenvalues are functions of a single variable $r=|\xi|,$ and we denote them by
      \be\label{eqevsisotropicnonhom}\lambda_1(r)\leq\lambda_2(r)\leq\cdots\leq\lambda_\rho(r).\ee
      However now these eigenvalues are not homogeneous functions of $r>0,$ and, unlike the homogeneous case, their multiplicity is not fixed. In other words, two (or more) different eigenvalues $\lambda_i(r)\neq \lambda_j(r)$ can ``coalesce'' at a point $r=r_0.$ Such a point $r_0$ is called `` a crossing point''.

      As we shall see below, it will be necessary to forsake their ordering in order to maintain their analyticity.





          Since $L_{0}$ is  isotropic,  the eigenvalues of
\begin{align*}
M_0(\xi)= M_{0,hom}(\xi) + M_0^0 =|\xi|\sum_{j=1}^n M_j^0 \omega_j+ M_0^0
\quad  (\omega=\xi/|\xi|)\\
\color{blue}
\end{align*}
are functions of $|\xi|.$ In particular,    the eigenvalues of
$M_0(\xi)=M_0(|\xi|\omega)$, together with their multiplicities,
are independent of $\omega \in {\mathbb S}^{n-1}$.

By virtue of this fact, we can  take
$\omega =e_1=(1, \, 0, \, \cdots, \, 0)$, and
study
the eigenvalues of
\begin{equation} \label{eq:1}
M_0(|\xi|e_1)=|\xi| M^0_1 + M_0^0.
\end{equation}

Taking $r=|\xi|>0,$ the eigenvalue study is reduced to the study of the symmetric matrix, depending (linearly) on a positive parameter,
\begin{equation} \label{eq:2}
T(r) = r M_1^0 + M_0^0 , \qquad r >0.
\end{equation}
However, it is useful to regard $T(r)$ as a function of the coordinate $r\in\RR.$ We can now appeal directly to the analytic perturbation theory of Hermitian matrices ~\cite[Section II.6 ]{kato}.

%
%
We conclude that the eigenvalues $\lambda_1(r)$, \, $\cdots$, \, $\lambda_{\rho}(r)$
of $T(r)$ are analytic functions of $r\in\RR.$ The sum of their multiplicities is $K,$ and each of them is constant in intervals not containing crossing points, as will be explained below.

We denote by
\be\label{eqdefPjisotropic} P_1(\xi), \, \cdots, \, P_{\rho}(\xi),\ee
the corresponding projections.

Note that $P_j(\xi)$ cannot be assumed to depend on $|\xi|.$
%
Furthermore, in every closed interval $[\alpha,\beta]\Subset \RR$ there is at most a finite number of
crossing  points. In fact, the fact that $T(r)$ depends \textit{linearly} on $r,$ enables us to claim even more.
  \begin{claim}\label{claimcrossing}
      There is at most a finite number of crossing points in the whole real line $r\in \RR.$
  \end{claim}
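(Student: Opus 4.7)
The plan is to upgrade real-analyticity, which only rules out accumulation of crossings in compact intervals, to the algebraic structure enjoyed by the eigenvalue branches of the pencil $T(r)$. Because $T(r) = r M_1^0 + M_0^0$ depends polynomially (in fact, linearly) on $r$, the characteristic polynomial
$$p(r,\lambda) := \det(\lambda I - T(r))$$
lies in $\mathbb{R}[r,\lambda]$, with degree $K$ in $\lambda$ and degree at most $K$ in $r$. Each analytic branch $\lambda_k(r)$ satisfies $p(r,\lambda_k(r)) \equiv 0$, so $\lambda_k$ is algebraic over the field $\mathbb{R}(r)$. Fix a pair $i \neq j$ and set $g(r) := \lambda_i(r) - \lambda_j(r)$; since $\overline{\mathbb{R}(r)}$ is a field, $g$ is algebraic over $\mathbb{R}(r)$, and since the Kato branches are by construction distinct as analytic functions on $\mathbb{R}$, $g \not\equiv 0$.

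Next I would invoke the elementary algebraic fact that a nonzero algebraic function over $\mathbb{R}(r)$ has only finitely many real zeros. Let $m(r,y) \in \mathbb{R}(r)[y]$ be the minimal polynomial of $g$ over $\mathbb{R}(r)$; multiplying through by a common denominator yields a relation
$$a_d(r) g(r)^d + \cdots + a_1(r) g(r) + a_0(r) \equiv 0, \qquad a_k \in \mathbb{R}[r],\ a_d \not\equiv 0,\ d \geq 1.$$
Since $m$ is irreducible and $g \not\equiv 0$, one must have $a_0 \not\equiv 0$: otherwise $y$ would divide $m(r,y)$, forcing $m = c\,y$ and hence $g \equiv 0$. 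At a crossing $r_0$ we have $g(r_0) = 0$, so the identity above at $r = r_0$ forces $a_0(r_0) = 0$ whenever $a_d(r_0) \neq 0$. Consequently the zeros of $g$ in $\mathbb{R}$ are contained in the union of the zero sets of the two nonzero polynomials $a_0$ and $a_d$, which is finite. Taking the union over the $\binom{\rho}{2}$ pairs $(i,j)$ completes the proof.

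The main obstacle the plan must address is exactly the one flagged in the text: on a compact interval $[\alpha,\beta]$ the distinctness of the analytic branches immediately rules out accumulation of crossings, but on all of $\mathbb{R}$ real-analyticity alone is insufficient, as $\sin r$ illustrates. What resolves this is the linearity assumption, which enters precisely to guarantee $p(r,\lambda) \in \mathbb{R}[r,\lambda]$, thereby promoting each branch from merely analytic to algebraic and ruling out oscillatory behavior of $g$ at infinity.
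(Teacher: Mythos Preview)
Your proof is correct. Both you and the paper exploit the same core observation---the linear dependence of $T(r)$ on $r$ makes the eigenvalue branches algebraic over $\mathbb{R}(r)$, not merely analytic---but you package it differently. The paper argues in one stroke: the crossing points are zeros of the discriminant of the minimal polynomial of $T(r)$ (viewed as a polynomial in $\theta$ with coefficients in $\mathbb{R}(r)$), and this discriminant is an algebraic function of $r$, hence has finitely many zeros. You instead work pairwise, showing that each difference $\lambda_i-\lambda_j$ is a nonzero algebraic function and so vanishes only finitely often, then take the union over $\binom{\rho}{2}$ pairs. Your route is a bit longer but arguably more transparent, since it avoids any ambiguity about what ``the minimal polynomial of $T(r)$'' means at points where its degree drops; the discriminant argument is slicker once one interprets the minimal polynomial over the function field $\mathbb{R}(r)$. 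Incidentally, in your step~6 the condition ``whenever $a_d(r_0)\neq 0$'' is superfluous: substituting $g(r_0)=0$ into the cleared relation yields $a_0(r_0)=0$ unconditionally, so the zeros of $g$ lie in the zero set of $a_0$ alone.
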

  \begin{proof}
      The crossing points are zeros of the discriminant of the minimal polynomial of $T(r),$ as a function of $r\in\RR.$ However, clearly this discriminant is an algebraic function of $r,$ and as such can have at most a finite number of zeros.
  \end{proof}
%
%
%
%
%
%
%
%
%
%
%

             As already noted above,   the analytic perturbation theory of Hermitian matrices ~\cite[Section II.6 ]{kato} implies
that $\lambda_1(r)$, \, $\cdots$, \, $\lambda_{\rho}(r)$
 are analytic functions of $r>0.$  Clearly, to maintain them as analytic functions we cannot order them, as there may be  \textbf{crossing points.}
      We define the \textbf{crossing values} as the finite set $$\set{\lambda\in\RR/\,\lambda=\lambda_j(q),\,\,\,\,\mbox{for some}\,\,j\,\, \mbox{ and some crossing point }\,\,q}.$$

       \begin{rem} In the isotropic case the singular set $Z$ in Equation ~\eqref{eqdefZ} can be expressed  explicitly as follows.

 $$Z=\begin{cases} \emptyset,\quad\mbox{the homogeneous case},\\
 \mbox{a finite union of spheres in the nonhomogeneous case.}
 \end{cases}$$  If a sphere of radius $r>0$ (centered at the origin) is included in $Z,$ then $r$ is a \textbf{crossing point.}

     \end{rem}

             Define the spherical surfaces
             \be\label{eqsphericalGammaj}\Gamma_j(\lambda)=\set{\xi\in \Rn\,/\, \, \lambda_j(|\xi|)=\lambda},\,\, \lambda\in \RR\setminus\Lambda.\ee

              If $\Gamma_j(\lambda)\subseteq Z$ then $\lambda$ is a crossing value.

             However, as noted above, even at crossing points the eigenvalues remain analytic.  Thus, the only values of $\lambda$ to be excluded are the \textbf{critical values,} where, by definition, $\lambda=\lambda_j(r)$  and $\lambda_j'(r)=0$ \,for some $j\in\set{1,2,...,\rho}$\,and some $r\in\RR.$
              \begin{claim}\label{claimfincritev} There is at most a finite number of critical values of the eigenvalues $\lambda_j(r),\,j=1,2,...,\rho.$
        \end{claim}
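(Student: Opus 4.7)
My plan is to exploit the algebraic character of the eigenvalues of $T(r) = rM_1^0 + M_0^0$. Set $p(r,\lambda) := \det(T(r) - \lambda I_K)$, a polynomial in $(r,\lambda) \in \RR^2$ of degree at most $K$ in each variable. Each analytic branch $\lambda_j(r)$ satisfies the identity $p(r,\lambda_j(r)) \equiv 0$ on $\RR$, so $\lambda_j$ is an algebraic function of $r$. I will show that each $\lambda_j$ is either constant (contributing one critical value) or has only finitely many critical points in $\RR$, and hence finitely many critical values; since there are only $\rho$ branches and $\rho$ is finite, the total number of critical values is finite.

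The key step is as follows. Factor $p = c\,Q_1^{e_1}\cdots Q_m^{e_m}$ into distinct irreducible factors in $\RR[r,\lambda]$. Fix $j$. Since $\lambda_j$ is analytic on the connected set $\RR$ (by Kato's analytic perturbation theorem applied to the self-adjoint linear pencil $T(r)$) and is a root of $p$, the identity principle together with connectedness forces $\lambda_j$ to be a root of a single irreducible factor; denote it $Q_j(r,\lambda)$, so that $Q_j(r,\lambda_j(r))\equiv 0$ on $\RR$. If $Q_j$ depends only on $\lambda$, then $\lambda_j$ takes values in the finite zero set of $Q_j$ and by continuity is constant, contributing one critical value. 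Otherwise $Q_j$ involves $r$ nontrivially, so $\partial_r Q_j$ has strictly smaller $r$-degree than $Q_j$ and is not divisible by $Q_j$; by irreducibility, $Q_j$ and $\partial_r Q_j$ are coprime in $\RR[r,\lambda]$. Therefore the resultant $\mathrm{Res}_\lambda(Q_j,\partial_r Q_j)\in\RR[r]$ is not identically zero and has only finitely many real roots, so the common real zero set $\{(r,\lambda):Q_j=\partial_r Q_j=0\}$ is finite. Implicit differentiation of $Q_j(r,\lambda_j(r))=0$ gives $\partial_r Q_j + (\partial_\lambda Q_j)\lambda_j'\equiv 0$, so at any critical point $r_0$ of $\lambda_j$ with $\partial_\lambda Q_j(r_0,\lambda_j(r_0))\neq 0$ one must have $\partial_r Q_j(r_0,\lambda_j(r_0))=0$, placing $(r_0,\lambda_j(r_0))$ in the finite set above. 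The exceptional locus $\{r:\partial_\lambda Q_j(r,\lambda_j(r))=0\}$ is itself finite by exactly the same coprimality-and-resultant argument applied to the pair $(Q_j,\partial_\lambda Q_j)$, again coprime by irreducibility of $Q_j$. Thus $\lambda_j$ has only finitely many critical points in $\RR$ and finitely many critical values.

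The delicate point I expect to be the main obstacle is the global identification of the analytic branch $\lambda_j$ with a \emph{single} irreducible factor $Q_j$. A priori $\lambda_j$ is only guaranteed to be a root of $p$, and between crossing points it could in principle switch from being a root of one irreducible factor of $p$ to another; ruling this out requires the full strength of the Kato theorem, namely analyticity on all of $\RR$ (not merely piecewise), combined with the identity principle on the connected domain $\RR$. Once the branch is pinned to a single $Q_j$, the remainder is a routine application of Bezout/resultants for coprime polynomials in two variables, and no specifically spectral or physical information beyond the linearity of $T(r)$ in $r$ and the Hermitian property is invoked.
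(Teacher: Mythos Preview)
Your proof is correct but takes a genuinely different route from the paper's. The paper's argument is very brief: it notes that the $\lambda_j$ are algebraic functions of $r$ and then invokes the classical argument in H\"{o}rmander, Section~14.3 --- by Sard's theorem the set of critical values has Lebesgue measure zero, and since this set is semi-algebraic (via the Tarski--Seidenberg principle), a measure-zero semi-algebraic subset of $\RR$ must be finite.

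Your approach is more elementary and more explicit. By pinning each analytic branch to a single irreducible factor $Q_j$ of the characteristic polynomial and applying a resultant/B\'ezout argument to the coprime plane curves $Q_j=0$ and $\partial_r Q_j=0$, you show directly that (for non-constant branches) the set of critical \emph{points} --- not merely critical values --- is finite. This bypasses both Sard's theorem and the semi-algebraic machinery, at the cost of a slightly longer argument. The ``delicate point'' you flag (that Kato's theorem gives analyticity on all of $\RR$, so the identity principle confines the branch to a single irreducible factor) is exactly right and is the crux of your method.

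One minor simplification: the case split on whether $\partial_\lambda Q_j(r_0,\lambda_j(r_0))\neq 0$ is unnecessary. Since $\lambda_j$ is already real analytic by Kato's theorem, you may differentiate the identity $Q_j(r,\lambda_j(r))\equiv 0$ by the chain rule without appealing to the implicit function theorem; the relation $\partial_r Q_j + (\partial_\lambda Q_j)\lambda_j' = 0$ then holds at every $r$, so $\lambda_j'(r_0)=0$ forces $\partial_r Q_j(r_0,\lambda_j(r_0))=0$ regardless of the value of $\partial_\lambda Q_j$ there. This lets you drop the second coprimality-and-resultant step entirely.
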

        \begin{proof} The functions $\lambda_j(r)$ are roots of the algebraic equation $det(T(r)-\lambda I_K)=0 $
         ($T(r)$ is defined in ~\eqref{eq:2}), and
   are therefore algebraic functions of the real variable $r\in\RR.$ We can now apply the classical argument in ~\cite[Section 14.3]{hormander}; the set of critical values is of measure zero by Sard's theorem, hence being semi-algebraic set it must be  finite .
        \end{proof}

              Recall the definition ~\eqref{eqdefineLambda} of the finite set $\Lambda,$ that contains all possible eigenvalues of $L_0.$

             \begin{defn}\label{defLambdaisotr} The set $\Lambda$ of ~\eqref{eqdefineLambda} is extended (retaining the same notation) to include also the finitely many critical values.
             \end{defn}
             \begin{rem}\label{remisotropicgroup}  Continuing Remark ~\ref{remstronggroup}: the requirement that $\lambda$ is not a critical value   means that the ``group velocity'' at the wavefront $\Gamma_j(\lambda)$ is bounded away from zero, for all $j\in\set{1,2,...,\rho}.$
        \end{rem}

%
%
%

       Thus, the only values of the spectral parameter to be avoided are the values in $\Lambda,$ and not necessarily all
       crossing values. Let $\lambda\in\RR\setminus\Lambda,$
%

\subsubsection{\textbf{The Limiting Absorption Principle for isotropic operators}}
  \label{subsubsecLAPisotrop}

        Let $E_0(\lambda)$ be  spectral family  associated with $L_0.$
       Since $E_0(\lambda)$ commutes with $L_0,$ it also has a
       symbol, which we denote by $E_0(\lambda;\xi).$ As in the case of Claim ~\ref{claimspecfamstrongprop} we have here.

%
       \begin{claim}\label{claimspecfamisotropic} Let $\lambda\in\RR\setminus\Lambda.$
       Then, with projections $P_j$ as in ~\eqref{eqdefPjisotropic},
       \be\label{E0lambdaxiiso}
                E_0(\lambda;\xi)=\suml_{j=1}^{\rho
       }P_j(\xi)\chi_{\set{\lambda_j(|\xi|)\leq\lambda}}.
              \ee

       \end{claim}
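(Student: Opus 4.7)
The plan is to mirror the proof of Claim~\ref{claimspecfamstrongprop} in the homogeneous case, reducing the identity to the pointwise functional calculus for the Hermitian symbol $M_0(\xi)$. Since $L_0$ has constant coefficients, conjugation by the Fourier transform $\mathcal{F}$ carries $L_0$ to the multiplication operator by the $K\times K$ Hermitian matrix $M_0(\xi)=\sum_j M_j^0\xi_j+M_0^0$ on $\Lcal^2(\Rn,\CC^K)$. By the general spectral theorem applied to matrix multiplication operators, the spectral family satisfies $\mathcal{F}E_0(\lambda)\mathcal{F}^{-1}=E_{M_0(\xi)}(\lambda)$, where the right-hand side is multiplication by the pointwise spectral projection of the Hermitian matrix $M_0(\xi)$ onto eigenvalues $\leq\lambda$. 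So the claim reduces to the identity
\be
E_{M_0(\xi)}(\lambda) \;=\; \suml_{j=1}^{\rho} P_j(\xi)\,\chi_{\set{\lambda_j(|\xi|)\leq\lambda}}
\ee
holding for almost every $\xi\in\Rn$.

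Next I would establish this pointwise identity. For each $\xi$ outside the singular set $Z$ (the finite union of spheres, by the preceding remark), the eigenvalues $\lambda_j(|\xi|)$ are distinct and $P_j(\xi)$ are the orthogonal spectral projections of $M_0(\xi)$, with $\sum_{j=1}^{\rho}P_j(\xi)=I_K$ and $M_0(\xi)=\sum_{j=1}^{\rho}\lambda_j(|\xi|)P_j(\xi)$. The Borel functional calculus for Hermitian matrices then yields $\phi(M_0(\xi))=\sum_j\phi(\lambda_j(|\xi|))P_j(\xi)$ for any Borel $\phi$; applying this to $\phi=\chi_{(-\infty,\lambda]}$ gives exactly the claimed formula. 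Since $Z$ has Lebesgue measure zero in $\Rn$, the identity holds a.e. and so as an identity of multiplication operators on $\Lcal^2$.

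Finally I would verify that the hypothesis $\lambda\in\RR\setminus\Lambda$ makes the right-hand side unambiguous and compatible with the spectral family of the self-adjoint operator. Excluding $\lambda\in\Lambda$ rules out (i) values that are common eigenvalues of $M_0^0+\sum a_jM_j^0$ for all $a\in\Rn$, so that $\lambda$ is not a point eigenvalue of $L_0$ contributing an atomic part; and (ii) critical values of the $\lambda_j$, so that each hypersurface $\set{\lambda_j(|\xi|)=\lambda}$ has Lebesgue measure zero and thus the indicator $\chi_{\set{\lambda_j(|\xi|)\leq\lambda}}$ is unambiguously defined modulo null sets. Together these ensure that the right-hand side is independent of the choice of analytic enumeration of the $\lambda_j$ (different enumerations agree off the null set $Z$, and the sum is symmetric in the $j$).

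The only real obstacle is the well-definedness of the individual projections $P_j(\xi)$ at crossing points, where the eigenvalue branches coalesce and the $P_j$ may be discontinuous. This is handled entirely by the measure-zero character of $Z$ (Claim~\ref{claimcrossing} gives finitely many crossing radii, hence $Z$ is a finite union of spheres, of Lebesgue measure zero in $\Rn$) together with the exclusion of critical values through $\lambda\notin\Lambda$; no finer analytic-perturbation argument is needed at the level of this claim, as in the homogeneous case the formula is then an immediate consequence of pointwise matrix functional calculus.
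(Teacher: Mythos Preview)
Your argument is correct and is exactly the standard justification the paper has in mind: the claim is stated in the paper without proof, prefaced only by ``As in the case of Claim~\ref{claimspecfamstrongprop} we have here,'' and both claims rest on the Fourier diagonalization of $L_0$ together with the pointwise spectral decomposition of the Hermitian symbol $M_0(\xi)$. Your additional care about the measure-zero crossing set $Z$ and the role of $\lambda\notin\Lambda$ is appropriate but not strictly needed beyond what the paper implicitly assumes.
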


         \begin{cor}\label{corspecderivisotr}
         If $\hat{f},\,\hat{g}\in C^\infty_0(\Rn,\mathbb{C}^K)$ and $\lambda\in\RR\setminus\Lambda,$ then
         \be\label{eqE0deriv}\aligned
          \frac{d}{d\lambda}(E_0(\lambda)f,g)=\suml_{j=1}^{\rho}\int\limits_{\Gamma_j(\lambda)}
          \frac{<P_j(\xi)\hat{f}(\xi),\hat{g}(\xi)>_{\CC^K}}{|\nabla\lambda_j(\xi)|}\chi_{\set{\lambda_j(|\xi|)=\lambda}}d\Sigma_{r_j}\\=
          \suml_{j=1}^{\rho}|\lambda_j'(r_j)|^{-1}\int\limits_{\Gamma_j(\lambda)}
          <P_j(\xi)\hat{f}(\xi),\hat{g}(\xi)>_{\CC^K}\chi_{\set{\lambda_j(|\xi|)=\lambda}}d\Sigma_{r_j}.
         \endaligned\ee
         with $d\Sigma_{r_j}$ being the
    Lebesgue surface measure on the sphere $\Gamma_j(\lambda)$ of radius $r_j>0$ such that $\lambda_j(r_j)=\lambda$ (see ~\eqref{eqsphericalGammaj}).
         \end{cor}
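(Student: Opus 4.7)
The plan is to mimic the derivation of Equation~\eqref{eqderivE0hom} in the homogeneous strongly propagative case, adapting it to the nonhomogeneous isotropic setting where the eigenvalues $\lambda_j(r)$ are analytic but not homogeneous. Starting from Claim~\ref{claimspecfamisotropic}, for $f,g$ with $\widehat f,\widehat g\in C^\infty_0(\Rn,\CC^K)$ and $\lambda\in\RR\setminus\Lambda$, I would write
$$
(E_0(\lambda)f,g)=\suml_{j=1}^{\rho}\int\limits_{\Rn}\chi_{\{\lambda_j(|\xi|)\leq\lambda\}}\,\langle P_j(\xi)\widehat f(\xi),\widehat g(\xi)\rangle_{\CC^K}\,d\xi.
$$
and then differentiate each summand in $\lambda$.

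For each fixed $j$, the idea is to apply the coarea formula to $\Phi_j(\xi):=\lambda_j(|\xi|)$. Since $\lambda\notin\Lambda$, $\lambda$ is not a critical value, so on the level set $\Gamma_j(\lambda)=\{|\xi|=r_j\}$ (with $\lambda_j(r_j)=\lambda$) one has $\lambda_j'(r_j)\neq 0$, and for $\xi\in\Gamma_j(\lambda)$ the chain rule gives
$$
\nabla_\xi\Phi_j(\xi)=\lambda_j'(|\xi|)\,\frac{\xi}{|\xi|},\qquad |\nabla_\xi\Phi_j(\xi)|=|\lambda_j'(r_j)|>0.
$$
Thus $|\nabla_\xi\Phi_j|$ is a nonzero constant on $\Gamma_j(\lambda)$ and can be pulled out of the surface integral, yielding the two forms of the right-hand side of~\eqref{eqE0deriv}.

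The one obstacle, exactly as in the homogeneous case, is that the coarea formula requires global Lipschitz regularity of $\Phi_j$, whereas $\lambda_j(|\xi|)$ need not be globally Lipschitz and in fact need not even be smooth across crossing spheres in~$Z$. I would handle this exactly as in the passage between~\eqref{E0lambdaxistrongprop+} and~\eqref{eqderivE0hom}: choose a smooth cutoff $\varphi(\xi)$ equal to~$1$ on a neighborhood of $\text{supp}\,\widehat f\cup\text{supp}\,\widehat g$ and on a neighborhood of $\Gamma_j(\lambda)$, and vanishing near any spheres of crossing points (which are finite in number by Claim~\ref{claimcrossing}), and then apply the coarea formula to $\varphi(\xi)\Phi_j(\xi)$. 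Because $\lambda\notin\Lambda$ excludes crossing values, the sphere $\Gamma_j(\lambda)$ is disjoint from $Z$, so this cutoff does not alter the level set, and the formula~\eqref{eqE0deriv} follows on the dense class $\widehat f,\widehat g\in C^\infty_0(\Rn,\CC^K)$.

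Finally, the fact that $|\lambda_j'(r_j)|^{-1}$ is a finite constant (because $\lambda$ is not a critical value) ensures that the right-hand side of~\eqref{eqE0deriv} is meaningful; the absence of the denominator $|\nabla \mu_j(\xi)|$ inside the integral—contrasted with~\eqref{eqderivE0hom}—is the main structural simplification afforded by the isotropic assumption, since radial symmetry makes $|\nabla\lambda_j(\xi)|$ constant on each sphere $\Gamma_j(\lambda)$.
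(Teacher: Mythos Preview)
Your approach is the paper's own (the corollary is stated without proof, as an immediate consequence of Claim~\ref{claimspecfamisotropic} via the coarea formula, parallel to~\eqref{eqderivE0hom}). One inaccuracy, however: you assert that ``$\lambda\notin\Lambda$ excludes crossing values,'' and build your cutoff $\varphi$ to vanish near crossing spheres. That is not what $\Lambda$ is. By Definition~\ref{defLambdaisotr}, $\Lambda$ consists of the common eigenvalues from~\eqref{eqdefineLambda} together with the \emph{critical} values; the paper explicitly notes (just before Subsubsection~\ref{subsubsecLAPisotrop}) that crossing values need \emph{not} be excluded. The point is that in the isotropic setting each $\lambda_j(r)$ is real analytic on all of $\RR$ (analytic perturbation of the linear pencil $T(r)=rM_1^0+M_0^0$), so $\Phi_j(\xi)=\lambda_j(|\xi|)$ is smooth on $\Rn\setminus\{0\}$ even across crossing spheres, and there is nothing to avoid there. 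Your cutoff need only be $\equiv 1$ on $\operatorname{supp}\widehat f\cup\operatorname{supp}\widehat g$ and compactly supported away from the origin; with that adjustment your argument is correct and matches the paper's implicit reasoning.
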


         \begin{rem} In  Equation ~\eqref{eqE0deriv} we know that $\lambda_j'(r_j)\neq 0$ since $\lambda$ is not a \textit{critical value.}
         \end{rem}
          In analogy to the case of the strongly propagative system (Theorem ~\ref{basiclapstrongprop}) we can state here the LAP as following from the general theory.
       \begin{thm}\label{basiclapisotropic}
    Let $L_0(D)$ be isotropic and $R_{0}(z)=(L_{0}-z)^{-1},\,\,Im \,z\neq 0.$ Then for any $s>\frac12$
    the limits
    \be\label{eqLAPL0isot}
    R_{0}^\pm(\mu)=\lim\limits_{\eps\downarrow 0}R_{0}(\mu\pm
    i\eps),\,\,\mu\in\RR\setminus\Lambda,
    \ee
    exist in the uniform operator topology of
    $B(\Lcal^{2,s}(\Rn,\CC^K),\Lcal^{2,-s}(\Rn,\CC^K)).$


    Furthermore,  the limit functions $R_{0}^\pm(\mu),\,\mu\in\RR\setminus\Lambda,$ are locally  bounded and locally H\"{o}lder
    continuous  with respect to the uniform operator topology.

     \end{thm}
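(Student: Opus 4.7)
The strategy parallels the treatment of the homogeneous strongly propagative case (Theorem~\ref{basiclapstrongprop}) and of the Dirac and Maxwell operators (Theorems~\ref{basiclap},~\ref{basiclapmax}). The plan is to establish two properties of the weak spectral derivative $A_0(\lambda):=\frac{d}{d\lambda}E_0(\lambda)$ for $\lambda\in\RR\setminus\Lambda$---local boundedness and local H\"older continuity in the uniform operator topology of $B(\Lcal^{2,s}(\Rn,\CC^K),\Lcal^{2,-s}(\Rn,\CC^K))$---and then invoke the abstract LAP machinery (Theorem~\ref{th-LAP-ABS}, already used in the proof of Theorem~\ref{basiclapstrongprop}).

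For local boundedness I would start from the explicit formula of Corollary~\ref{corspecderivisotr}. For $\lambda$ in a compact set $K\Subset\RR\setminus\Lambda$, Definition~\ref{defLambdaisotr} ensures $|\lambda_j'(r_j(\lambda))|\geq c_K>0$ for every relevant index $j$. Each $P_j(\xi)$ is an orthogonal projection, hence of norm at most one. Applying the Cauchy--Schwarz inequality in $\CC^K$ together with the trace lemma (Lemma~\ref{tracelem}) on each sphere $\{|\xi|=r_j(\lambda)\}$ then yields, for every $s>\tfrac12$,
\[
|\langle A_0(\lambda)f,g\rangle|\leq C_K\|f\|_{0,s}\|g\|_{0,s},\qquad f,g\in\Lcal^{2,s}(\Rn,\CC^K).
\]
Local H\"older continuity follows by observing that $\lambda\mapsto r_j(\lambda)$ is real-analytic locally (implicit function theorem, using $\lambda_j'(r_j)\neq 0$), and rewriting the surface integrals in terms of the scaled trace maps $\Phi_{r_j(\lambda)}$ of Lemma~\ref{tracelem}, which are H\"older continuous in the operator topology. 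These trace maps compose with projections $P_j(\xi)$ that are real-analytic in $\xi$ on $\Rn\setminus\Zbar$, so off the singular set the bilinear form depends H\"older-continuously on $\lambda$.

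The delicate point, and the expected main obstacle, is the behaviour at crossing values of the eigenvalues $\lambda_j(r)$: when two branches $\lambda_i,\lambda_j$ coincide at some $r_0$, the individual projections $P_i(\xi),P_j(\xi)$ need not extend continuously across the sphere $|\xi|=r_0$, so the separate terms in the spectral derivative sum lose regularity. The remedy exploits analytic perturbation theory~\cite[\S II.6]{kato}: although the individual projections jump, the sum of the projections onto the coalescing eigenspaces remains real-analytic, and Claim~\ref{claimcrossing} guarantees that only finitely many such crossings occur. Grouping the corresponding terms in the spectral derivative sum preserves the bilinear form while restoring the needed uniform regularity. Once the local boundedness and local H\"older continuity of $A_0(\lambda)$ have been verified on $\RR\setminus\Lambda$, Theorem~\ref{th-LAP-ABS} delivers the existence of the boundary values $R_0^{\pm}(\mu)$ in $B(\Lcal^{2,s}(\Rn,\CC^K),\Lcal^{2,-s}(\Rn,\CC^K))$, together with their local H\"older continuity, exactly as in the proof of Theorem~\ref{basiclapstrongprop}.
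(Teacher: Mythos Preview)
Your proposal is correct and follows the same strategy the paper adopts: the paper states this theorem immediately after Corollary~\ref{corspecderivisotr} and simply remarks that, ``in analogy to the case of the strongly propagative system (Theorem~\ref{basiclapstrongprop}),'' the LAP ``follows from the general theory,'' i.e., from Theorem~\ref{th-LAP-ABS} once the spectral-derivative estimates are in hand. You have supplied considerably more detail than the paper does---in particular your discussion of the crossing-value issue (grouping coalescing projections via Kato's analytic perturbation theory) addresses a subtlety the paper leaves implicit---but the overall route is the same.
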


              \section{\textbf{THE CLASS OF PERTURBED OPERATORS}}\label{perturbed-sec}

              We now turn to the study of the spectral structure of perturbations of the operators introduced in Section ~\ref{unperturbed-sec}. Our ultimate goal is the study of the operator $L$ introduced in ~\eqref{systemL}.
              This is done by regarding this operator as a perturbation of $L_0,$ as given in ~\eqref{systemL0}.

              We take the most basic perturbation of the form
               \be
                    L=L_0+V(x),
               \ee
                  where $V(x)$ is a Hermitian $K\times K$ matrix  that decays as $|x|\to\infty.$ In Subsection
                  ~\ref{subsection-spec-general-perturb} we expound the abstract theory of such perturbations. In particular, general conditions are given that imply the absolute continuity (and LAP) of the continuous spectrum, apart possibly from a discrete sequence of embedded eigenvalues.

                  In subsection ~\ref{subsection-spec-dirac-perturb} we apply the general theory to the potential perturbation of the Dirac operator. Observe that this includes the case of the magnetic Dirac operator, as the magnetic field can be merged into the potential perturbation. We choose to deal with this example in detail as it illustrates the applicability of the general theory, and also allows us to give decay conditions on the potential, to the effect that there are only \textit{finitely many} eigenvalues in the spectral gap (of the massive Dirac operator, see Subsubsection ~\ref{secfiniteevs}.

                  The perturbed Maxwell system can also be reduced to the case of potential perturbation ~\cite[Section 1.4]{tamura1}, but we choose not to treat it here in detail, as the paper is already quite long.

                  \subsection{\textbf{GENERAL THEORY: PERTURBATION BY A POTENTIAL }}\label{subsection-spec-general-perturb}.\newline

                  Our treatment is based on the general theory expounded in  ~\cite[Sections 3-4]{ba4}. The abstract setting will allow us to consider various operator settings in a unified way. We briefly recall some definitions and statements that will be needed here.

   Let $H$ be a self-adjoint operator in $\Lcal^2(\Rn,\CC^K).$    The scalar product in $\Lcal^2(\Rn,\CC^K)$ is denoted by
$(\;,\,).$

              Suppose that there exists a Hilbert space such that $\mathcal{X}\subseteq
\Lcal^2(\Rn,\CC^K)$, and the embedding is dense and continuous. In other words,
$\mathcal{X}$ can be considered as a dense subspace of $\Lcal^2(\Rn,\CC^K)$, equipped
with a stronger norm. Then, of course, $\mathcal{X}\hookrightarrow
\Lcal^2(\Rn,\CC^K)\hookrightarrow\mathcal{X^*}$, where $\mathcal{X^*}$ is the
anti-dual of $\mathcal{X}$; the continuous additive functionals $l$ on
$\mathcal{X}$, such that $l(\alpha v)=\overline{\alpha}\,l(v)$, $\alpha\in
\mathbb{C}$.

We use $\|x\|_\MX,\,\|x^*\|_{\MX^*}$ for the norms in $\MX$, $\MX^*$,
respectively, and designate by $<\;,\,>$ the $(\MX^*,\MX)$ pairing.

The (linear) embedding $h\in\Lcal^2(\Rn,\CC^K)\hookrightarrow
h^*\in\mathcal{X^*}$ is obtained as usual by the scalar product (in
$\Lcal^2(\Rn,\CC^K)$), $h^*(x)=(h,x)$.

We introduce still another Hilbert space
$\MX_H^*$, which is a dense subspace of $\MX^*$, equipped with a stronger norm
(so that the embedding $\MX_H^*\hookrightarrow\mathcal{X^*}$ is continuous).
However, we do not require that $\Lcal^2(\Rn,\CC^K)$ be embedded in $\MX_H^*$. As indicated
by the notation, $\MX_H^*$ may depend on $H .$  A typical case would be when $H$ can be extended
as a densely defined operator in $\MX^*$ and $\MX_H^*$ would be its domain
there, equipped with the graph norm.

Let $\set{E(\lambda)}$ be the spectral family of $H$. We denote by $E(B)$  the spectral projection on any Borel
set $B$ (so that $E(\lambda)=E(-\infty,\lambda)$).

\begin{defn}
  \label{defn-typeU}
  Let $U\subseteq \RR$ be open and let $0<\alpha\leq 1$. Assume that $U$ is of
  \textit{full spectral measure\/}, namely, $E(\RR\setminus U)=0$.  Then $H$
  is said to be of \textit{type} $\left(\MX,\MX^*_H,\alpha, U\right)$ if the
  following conditions are satisfied:
  \begin{enumerate}
  \item The operator-valued function
  $$
  \lambda\to E(\lambda)\in B(\MX,\MX^*),\quad \lambda\in U,
  $$
  is weakly differentiable with a locally H\"{o}lder continuous derivative in
  \linebreak $B(\MX,\MX^*_H)$; that is, there exists an operator-valued
  function
  $$
  \lambda \to A(\lambda)\in B(\MX,\MX^*_H),\quad \lambda\in U,
  $$
  so that  \[
  \frac{d}{d\lambda}(E(\lambda)x,y)=\,<A(\lambda)x,y>, \quad x,\,y\in
  \MX,\,\lambda\in U, \] and such that for every compact interval $K\subseteq
  U$, there exists an $M_K>0$ satisfying
  \[
  \left\|A(\lambda)-A(\mu)\right\|_{B(\MX,\MX^*_H)}\leq
  M_K\left|\lambda-\mu\right|^\alpha,\quad \lambda,\,\mu\in K.
  \]
  \item For every bounded open set $J\subseteq U$ and for every compact interval
  $K\subseteq J$, the operator-valued function (defined in the weak sense)
  $$
  z\to\int_{U\setminus J}\frac{A(\lambda)}{\lambda - z}\,d\lambda,\quad
  z\in\mathbb{C},\, \Re z\in K,\,|\Im z|\leq 1,
  $$
  takes values and is H\"{o}lder continuous in the uniform operator topology
  of $B(\MX,\MX^*_H)$, with exponent $\alpha$.
  \end{enumerate}
\end{defn}

We can now state the basic theorem, concerning the \textit{Limiting Absorption Principle} (LAP) in this
setting. We use the notation $\mathbb{C}^\pm=\set{z\in\mathbb{C},\,\,\pm \Im\,z>0},$ and denote by $R(z)=(H-z)^{-1},\,\,z\in\mathbb{C}^\pm\,,$ the resolvent of $H.$

\begin{thm}\label{th-LAP-ABS}
  Let $H$ be of\/ \textup{type} $\left(\MX,\MX^*_H,\alpha, U\right)$
  \textup{(}where $U\subseteq \RR$ is open and $0<\alpha\leq 1$\textup{)}.
  Then $H$ satisfies the\/ \textup{LAP} in $U$. More explicitly, the limits
  \[
  R^\pm(\lambda)=\lim_{\eps\downarrow 0}R(\lambda\pm i\eps),\quad \lambda \in
  U,
  \]
  exist in the uniform operator topology of $B(\MX,\MX^*_H)$ and the extended
  operator-valued function
  $$
  R(z)=\begin{cases}R(z), & z\in\mathbb{C}^+,\\
    R^+(z),& z\in U,\end{cases}
  $$
  is locally H\"{o}lder continuous in the same topology\/ \textup{(}with exponent
  $\alpha$\textup{)}.

  A similar statement applies when $\mathbb{C}^+$ is replaced by
  $\mathbb{C}^-$, but note that the limiting values $R^\pm(\lambda)$ are in
  general different.
\end{thm}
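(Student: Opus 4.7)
The plan is to start from the spectral representation of $R(z)$ and reduce the problem to a Cauchy-type integral whose boundary behavior is controlled by a Plemelj--Sokhotski argument adapted to the operator-valued setting. Since $U$ has full spectral measure and Condition 1 of Definition~\ref{defn-typeU} provides the weak derivative $A(\lambda)$, I would write, for $x,y\in\MX$ and $z\in\CC\setminus\RR$,
\be
(R(z)x,y)=\int_U\frac{d(E(\lambda)x,y)}{\lambda-z}=\int_U\frac{<A(\lambda)x,y>}{\lambda-z}\,d\lambda.
\ee
Thus the resolvent, viewed as an element of $B(\MX,\MX^*_H)$, is represented by the operator-valued Cauchy-type integral $z\mapsto \int_U A(\lambda)(\lambda-z)^{-1}d\lambda$, and the theorem reduces to an operator-valued boundary value result for this integral.

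Fix $\mu\in U$ and a compact interval $K\subseteq U$ containing $\mu$. Choose a bounded open $J\subseteq U$ with $K\subseteq J$, $\overline{J}\subseteq U$, and split
\be
\int_U\frac{A(\lambda)}{\lambda-z}\,d\lambda
=\int_J\frac{A(\lambda)-A(\mu)}{\lambda-z}\,d\lambda
+A(\mu)\int_J\frac{d\lambda}{\lambda-z}
+\int_{U\setminus J}\frac{A(\lambda)}{\lambda-z}\,d\lambda .
\ee
The last term is governed directly by Condition 2, which asserts that it is already Hölder continuous in the uniform operator topology of $B(\MX,\MX^*_H)$ up to $\Re z\in K$. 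The middle term is a scalar Cauchy integral multiplied by $A(\mu)$; its boundary values as $z\to\mu\pm i0$ are $A(\mu)\bigl[\mathrm{p.v.}\int_J(\lambda-\mu)^{-1}d\lambda\pm i\pi\bigr]$, and the resulting expression is Hölder in $\mu$ because $A(\mu)$ itself is (by Condition 1).

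The principal work is on the first term. By Condition 1, $\|A(\lambda)-A(\mu)\|_{B(\MX,\MX^*_H)}\leq M_K|\lambda-\mu|^\alpha$, so the integrand is dominated in norm by $M_K|\lambda-\mu|^{\alpha-1}$, which is integrable on $J$ uniformly in $z$. Dominated convergence in the operator norm of $B(\MX,\MX^*_H)$ then yields the existence of the limit
\be
\lim_{\eps\downarrow 0}\int_J\frac{A(\lambda)-A(\mu)}{\lambda-(\mu\pm i\eps)}\,d\lambda
=\int_J\frac{A(\lambda)-A(\mu)}{\lambda-\mu}\,d\lambda .
\ee
Adding the three contributions identifies $R^\pm(\mu)$ as an element of $B(\MX,\MX^*_H)$. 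Local Hölder continuity in $\mu\in K$ of the same exponent $\alpha$ follows by the classical Privalov splitting: for $\mu_1,\mu_2\in K$ one integrates separately over $\{|\lambda-\mu_1|\le 2|\mu_1-\mu_2|\}$ (using the Hölder bound on $A$ to control a short integral) and over the complementary region (where the kernel difference $(\lambda-\mu_1)^{-1}-(\lambda-\mu_2)^{-1}$ is controlled pointwise), with the scalar estimates upgraded to $B(\MX,\MX^*_H)$ by taking operator norms before integrating.

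The main obstacle I expect is executing the Privalov-type estimate uniformly in operator norm: the classical scalar proof is pointwise, whereas here we need the Hölder modulus to hold in the $B(\MX,\MX^*_H)$ norm, which forces us to quote the Hölder estimate of Condition 1 in its operator-norm form and to keep careful track that all splittings and dominating constants depend only on $K$, $\alpha$, and $M_K$. Once that uniform Privalov estimate is in place, the Hölder continuity of $R^\pm(\mu)$ in $B(\MX,\MX^*_H)$ and the matching statement in the closed upper/lower half-planes follow by combining it with the Hölder bound already supplied by Condition 2 on the far-away piece.
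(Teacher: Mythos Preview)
The paper does not actually prove this theorem: it is stated as part of the ``general theory'' recalled from \cite{ba4} and \cite{BA2}, so there is no in-paper proof to compare against. That said, your argument is the standard one underlying those references --- represent the resolvent as a Cauchy integral of the spectral density $A(\lambda)$, split into a far piece (handled by Condition~2), a scalar logarithmic piece times $A(\mu)$, and a near piece controlled by the H\"older bound of Condition~1 via a Privalov-type estimate --- and it is correct in outline. One small point worth tightening: to get H\"older continuity in the closed half-plane (not just existence of boundary values), the subtraction point in the first term should be $A(\Re z)$ rather than the fixed $A(\mu)$, so that the estimate is uniform as $z$ varies over $\{\Re z\in K,\,|\Im z|\le 1\}$; otherwise your sketch is on target.
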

    \begin{rem}  In view of the Stieltjes formula we have
  \be
  \label{A=Rpm}A(\lambda)=\frac{1}{2\pi i}\left(R^+(\lambda)-
    R^-(\lambda)\right),\quad \lambda\in U.
  \ee
  In particular, $H$ is absolutely continuous in $U.$
   \end{rem}
    We now consider a perturbation by a potential function $V(x).$ To deal with the requirements on $V$ in this framework we introduce the following definition.
\begin{defn}\label{defn-Vsr}
  An operator $V:\MX^*_H\to\MX$ will be called
  \begin{enumerate}
  \item \textit{Short-range with respect to $H$\/} if it is compact.
  \item \textit{Symmetric\/} if $D(H)\cap\MX^*_H$ is dense in $\Lcal^2(\Rn,\CC^K)$ and the
    restriction of $V$ to $D(H)\cap\MX^*_H$ is symmetric in $\Lcal^2(\Rn,\CC^K).$
  \end{enumerate}
\end{defn}

The following lemma shows that (with some additional assumption) the operator
$H+V$ is well defined.

\begin{lem}\label{lem-sadj}
  Let $H$ be of type $\left(\MX,\MX^*_H,\alpha, U\right)$ and let $V$ be short-range and
  symmetric. Suppose that there exists $z\in\mathbb{C}$, $\Im z\neq 0$, and a
  linear subspace $D_z\subseteq D(H)\cap\MX^*_H$ such that $(H-z)(D_z)$, the
  image of $D_z$ under $H-z$, is dense in $\MX$.

  Then $P=H+V$, defined on $D(H)\cap\MX^*_H$, is essentially self-adjoint.
\end{lem}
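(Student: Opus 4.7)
The plan is to apply the standard Weyl criterion for essential self-adjointness of a symmetric operator: it suffices to verify that $\mathrm{Range}(P-z)$ and $\mathrm{Range}(P-\bar z)$ are each dense in $\Lcal^2(\Rn,\CC^K)$. The density of the domain $D(H)\cap\MX^*_H$ in $\Lcal^2(\Rn,\CC^K)$ and the symmetry of $P=H+V$ on this domain are built into Definition~\ref{defn-Vsr} together with the self-adjointness of $H$, so only the range-density statements require argument.

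The main step is a Fredholm-alternative construction carried out with the given $z$. The $(\MX,\MX^*_H,\alpha,U)$-type hypothesis on $H$, combined with Theorem~\ref{th-LAP-ABS} and a Herglotz/Stieltjes representation of the resolvent in terms of the spectral derivative $A(\lambda)$, supplies the bounded resolvent $R(z)\in B(\MX,\MX^*_H)$; together with the compactness of $V:\MX^*_H\to\MX$ this makes $T:=VR(z)$ a compact operator on $\MX$. For any $g\in\MX$, setting $u:=R(z)g\in D(H)\cap\MX^*_H$ yields
$$
(P-z)u=(H-z)u+Vu=g+VR(z)g=(I+T)g,
$$
so $\mathrm{Range}(P-z)\supseteq(I+T)\MX$, and it is enough to prove that $I+T$ is surjective on $\MX$.

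By the Fredholm alternative, surjectivity reduces to injectivity. Suppose $(I+T)g=0$ for some $g\in\MX$; then $u=R(z)g\in D(H)\cap\MX^*_H$ satisfies $(H-z)u=g$ and $Vu=-g$, whence
$$
Pu=Hu+Vu=zu+g-g=zu.
$$
Symmetry of $P$ now forces $z\|u\|_{\Lcal^2}^2=(Pu,u)=(u,Pu)=\bar z\|u\|_{\Lcal^2}^2$; since $\Im z\neq 0$ this gives $u=0$, and injectivity of $R(z)$ then yields $g=0$. Thus $(I+T)\MX=\MX$, which is dense in $\Lcal^2(\Rn,\CC^K)$, establishing density of $\mathrm{Range}(P-z)$. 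The same reasoning applied with $\bar z$ (for which the mapping property of $R(\bar z)$ and the analogue of the hypothesis on $D_z$ follow from self-adjointness of $H$ and the $\Im z$-symmetric nature of the type structure) yields density of $\mathrm{Range}(P-\bar z)$.

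The hypothesis that $(H-z)(D_z)$ is dense in $\MX$ with $D_z\subseteq\MX^*_H$ serves as the abstract substitute for the concrete statement that $R(z)$ takes values in $\MX^*_H$ and not merely in the larger space $\MX^*$. I expect the main obstacle to be exactly this step---turning the weak density assertion on $(H-z)(D_z)$ into the strong, uniform boundedness $R(z)\in B(\MX,\MX^*_H)$ that is used to define $T=VR(z)$ and to place $u=R(z)g$ in the domain of $V$. Once this structural ingredient is secured, the Fredholm-plus-symmetry-eigenvalue package above is essentially routine.
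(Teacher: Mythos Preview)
The paper does not supply a proof of this lemma; it is quoted from the abstract framework of~\cite{ba4} (Ben-Artzi--Devinatz, Memoirs AMS {\bf 364}), so there is no in-paper argument to compare against directly.

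Your strategy---reduce essential self-adjointness to density of $\mathrm{Range}(P-z)$ and $\mathrm{Range}(P-\bar z)$, factor $(P-z)R(z)=I+VR(z)$ on $\MX$, and invoke the Fredholm alternative with injectivity coming from the observation that a null vector would produce a nonreal eigenvalue of the symmetric operator $P$---is the standard one and is almost certainly what~\cite{ba4} does. The eigenvalue-exclusion step is correct as written.

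You have also correctly located the one soft spot: the passage from the hypothesis ``$(H-z)(D_z)$ is dense in $\MX$ with $D_z\subseteq D(H)\cap\MX^*_H$'' to the working assumption $R(z)\in B(\MX,\MX^*_H)$, which is what makes $T=VR(z)$ compact on $\MX$ and places $u=R(z)g$ in the domain of $V$. Two remarks. First, for $z$ in the strip $|\Im z|\leq 1$ with $\Re z\in U$, the type hypothesis itself (conditions (1) and (2) of Definition~\ref{defn-typeU}) already yields $R(z)\in B(\MX,\MX^*_H)$ via the Stieltjes representation $R(z)=\int_U(\lambda-z)^{-1}A(\lambda)\,d\lambda$; so if one is free to choose $z$ (and one is), no extra work is needed there. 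Second, the $D_z$ hypothesis is the abstract device that guarantees the $L^2$-resolvent $R(z)$, restricted to $\MX$, actually lands in $D(H)\cap\MX^*_H$: on the dense set $(H-z)(D_z)\subseteq\MX$ one has $R(z)((H-z)(D_z))=D_z\subseteq D(H)\cap\MX^*_H$, and a closed-graph argument (using that $\MX^*_H$ carries a stronger norm than $\MX^*$ and that $R(z)$ is bounded $\MX\to\MX^*$) extends this to all of $\MX$. With that point secured, your argument is complete; it is a sketch rather than a gap.
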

In what follows we always assume that $H$ is of type $\left(\MX,\MX^*_H,
  \alpha, U\right)$ and that $V$ is short-range and symmetric. Thus, by the
lemma, $P=H+V$ can be extended as a self-adjoint operator with domain
$D(P)\supseteq D(H)\cap\MX^*_H$, and we retain the notation $P$ for this
extension.

Our aim is to study the spectral properties of $P$, particularly the LAP, in
this abstract framework.

Denote by $S(z)=(P-z)^{-1}$, $\Im z\neq 0$, the resolvent of $P$.
Our starting point is the resolvent equation
\[
S(z)(I+VR(z))=R(z),\quad \Im z\neq 0.
\]
It can be shown that the inverse $(I+VR(z))^{-1}$
exists on $\MX$ if $\Im z\neq 0$. This leads to
\be\label{Sz}
S(z)=R(z)(I+VR(z))^{-1}, \ee
where the equality is certainly valid from
$\MX\to\MX^*_H$.

Suppose now that $\lambda\in U$. In view of Theorem ~\ref{th-LAP-ABS} and the
assumption on $V$ we have
\[
\lim_{\eps\downarrow 0}VR(\lambda \pm i\eps)=VR^\pm(\lambda) \enspace \text{in
  $B(\MX)$.}
\]
Thus, if $(I+VR^\pm(\lambda))^{-1}$ exists (in $B(\MX)$), then Eq.
~\eqref{Sz} implies the existence of the limits \be
\label{spm}
S^\pm(\lambda)=\lim_{\eps\downarrow 0}S(\lambda \pm
i\eps)=R^\pm(\lambda)(I+VR^\pm(\lambda))^{-1},
\ee
in the uniform operator topology of $B(\MX,\MX^*_H)$.

Let $\lambda\in U$ be a point at which, say, $(I+VR^+(\lambda))^{-1}$ does not
exist (in $B(\MX)$). Since $VR^+(\lambda)$ is compact in $\MX$, there exists a
non-zero $\phi\in\MX$ so that
$$\phi=-VR^+(\lambda)\phi.$$
Let $\psi=R^+(\lambda)\phi\in\MX^*_H$. Then
$$<\psi,\phi>\,=-\lim_{\eps\downarrow 0}\left(R(\lambda+
i\eps)\phi,VR(\lambda+ i\eps)\phi\right).
$$
By the symmetry of $V$ the right-hand side of this equality is real, so we
conclude that $\Im<R^+(\lambda)\phi,\phi>\,=0$, and invoking
Eq.~\eqref{A=Rpm} we conclude that
\[
<A(\lambda)\phi,\phi>\,=0.
\]
Now the form $<A(\lambda)x,y>\,=\frac{d}{d\lambda}\left(E(\lambda)x,y\right)$ on
$\MX\times\MX$ is symmetric and positive semi-definite.  Hence, for every
$y\in\MX$,
\be
\label{Apos}
\left|<A(\lambda)\phi,y>\right|\leq\;<A(\lambda)\phi,\phi>^\frac12 \,
<A(\lambda)y,y>^\frac12=0,
\ee
and we conclude that
\be
\label{Aphiphi}
A(\lambda)\phi=0.
\ee
In particular, $R^+(\lambda)\phi=R^-(\lambda)\phi$ and
$$\phi=-VR^\pm(\lambda)\phi.$$
\begin{defn}\label{sigmap}
  We designate by $\Sigma_P$ the set
  $$\Sigma_P=\bigl\{\lambda\in
    U\,\,/\,\,\text{There exists a non-zero $\phi_\lambda\in\MX$ such
      that $\phi_\lambda=-VR^\pm(\lambda)\phi_\lambda$}\bigr\}.$$
\end{defn}

\begin{rem}
  The set $\Sigma_P$ is (relatively) closed in $U$.  Indeed, if
  $(I+VR^\pm(\lambda_0))^{-1}$ exists (in $B(\MX)$), then
  $(I+VR^\pm(\lambda))^{-1}$ exists for $\lambda$ close to $\lambda_0$.
\end{rem}

The discussion above leads to the following theorem.

\begin{thm}\label{thm-LAP-perturb}
  The operator $P=H+V$ satisfies the\/ \textup{LAP} in $U\setminus \Sigma_P$, in the
  uniform operator topology of $B(\MX,\MX^*_H)$, and the limiting values of
  its resolvent there are given by\/ \textup{Eq. ~\eqref{spm}}.
\end{thm}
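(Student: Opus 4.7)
The plan is to start from the resolvent identity derived above,
$$S(z)=R(z)(I+VR(z))^{-1}, \quad \Im z\neq 0,$$
and to pass to the limit $z\to\lambda\pm i\cdot 0$ for $\lambda\in U\setminus\Sigma_P$. The whole argument rests on combining three ingredients that have already been set up: (a) the LAP for $H$ (Theorem~\ref{th-LAP-ABS}), giving H\"older continuity of $R^\pm(\lambda)$ in $B(\MX,\MX^*_H)$; (b) the compactness of $V:\MX^*_H\to\MX$; and (c) the Fredholm alternative applied to the compact operator $VR^\pm(\lambda)$ on $\MX$.

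First I would verify that $(I+VR^\pm(\lambda))^{-1}$ exists in $B(\MX)$ for every $\lambda\in U\setminus\Sigma_P$. Since $R^\pm(\lambda):\MX\to\MX^*_H$ is bounded and $V:\MX^*_H\to\MX$ is compact, the composition $VR^\pm(\lambda)$ is compact on $\MX$. The Fredholm alternative then says $I+VR^\pm(\lambda)$ is invertible in $B(\MX)$ exactly when $-1$ is not an eigenvalue, i.e.\ exactly when there is no nonzero $\phi\in\MX$ solving $\phi=-VR^\pm(\lambda)\phi$. By Definition~\ref{sigmap} this is precisely the condition $\lambda\notin\Sigma_P$.

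Next I would establish the convergence. By Theorem~\ref{th-LAP-ABS}, $R(\lambda\pm i\eps)\to R^\pm(\lambda)$ in $B(\MX,\MX^*_H)$, and composing with the bounded map $V:\MX^*_H\to\MX$ gives $VR(\lambda\pm i\eps)\to VR^\pm(\lambda)$ in $B(\MX)$. Since inversion is continuous on the open set of invertible elements of the Banach algebra $B(\MX)$, and $I+VR^\pm(\lambda)$ is invertible for $\lambda\in U\setminus\Sigma_P$, we conclude $(I+VR(\lambda\pm i\eps))^{-1}\to (I+VR^\pm(\lambda))^{-1}$ in $B(\MX)$. Substituting into the resolvent identity and using that the product of a $B(\MX,\MX^*_H)$-convergent factor with a $B(\MX)$-convergent factor converges in $B(\MX,\MX^*_H)$, we obtain
$$S(\lambda\pm i\eps)\longrightarrow R^\pm(\lambda)\bigl(I+VR^\pm(\lambda)\bigr)^{-1}\quad\text{in } B(\MX,\MX^*_H),$$
which is exactly Eq.~\eqref{spm}.

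Finally, for the local H\"older continuity of $S^\pm$ on $U\setminus\Sigma_P$, I would use the identity
$$A^{-1}-B^{-1}=A^{-1}(B-A)B^{-1}$$
with $A=I+VR^\pm(\lambda)$ and $B=I+VR^\pm(\mu)$ to transfer the H\"older continuity of $\lambda\mapsto VR^\pm(\lambda)$ (inherited from that of $R^\pm$ via the bounded map $V$) to $\lambda\mapsto (I+VR^\pm(\lambda))^{-1}$ on any compact subset of $U\setminus\Sigma_P$, on which the inverses are uniformly bounded. Multiplying the two H\"older-continuous factors yields H\"older continuity of $S^\pm$ in $B(\MX,\MX^*_H)$. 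The only subtle point, which I would flag as the main obstacle, is bookkeeping of the function spaces on which each operator acts: $V$ plays the role of a ``bridge'' that sends $\MX^*_H$ back into $\MX$, making the compositions and the Fredholm argument internally consistent, and one must check that the uniform operator topology in $B(\MX,\MX^*_H)$ (rather than the weaker $B(\MX,\MX^*)$) genuinely survives all these manipulations -- which it does, since only the rightmost factor acts from $\MX$ to $\MX^*_H$ while the remaining factors act within $\MX$.
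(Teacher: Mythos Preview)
Your proof is correct and follows essentially the same approach as the paper: the resolvent identity $S(z)=R(z)(I+VR(z))^{-1}$, convergence of $VR(\lambda\pm i\eps)\to VR^\pm(\lambda)$ in $B(\MX)$ via the LAP for $H$, and the Fredholm alternative for the compact operator $VR^\pm(\lambda)$ to characterize invertibility outside $\Sigma_P$. You are in fact more explicit than the paper, which presents this argument as a discussion preceding the theorem rather than a formal proof; in particular your treatment of the local H\"older continuity of $S^\pm$ via the identity $A^{-1}-B^{-1}=A^{-1}(B-A)B^{-1}$ is a welcome addition. One tiny slip: in the final sentence it is the \emph{leftmost} factor $R^\pm(\lambda)$ that maps $\MX$ to $\MX^*_H$, while $(I+VR^\pm(\lambda))^{-1}$ acts within $\MX$; the conclusion is unaffected.
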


In particular, the spectrum of $P$ in $U\setminus \Sigma_P$ is absolutely
continuous. We single out this fact, stated in terms of the eigenvalues, in
the following corollary.

\begin{cor}\label{specp}
  Let $\sigma_p(P)$ be the point spectrum of $P$. Then
  $$\sigma_p(P)\cap U\subseteq
  \Sigma_P.$$
\end{cor}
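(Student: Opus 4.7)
The plan is to read off the corollary directly from Theorem~\ref{thm-LAP-perturb}. Since the theorem furnishes continuous (in fact locally H\"{o}lder) boundary values $S^\pm(\lambda)$ on the relatively open set $U\setminus\Sigma_P$ in the uniform topology of $B(\MX,\MX^*_H)$, the absolute continuity of the spectrum of $P$ on that set, already noted following the theorem, will immediately preclude eigenvalues there.

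Concretely, I would first apply the Stieltjes formula (the analogue of \eqref{A=Rpm} for the perturbed operator) on $U\setminus\Sigma_P$ to obtain the spectral derivative
\[
A_P(\lambda) \;=\; \frac{1}{2\pi i}\bigl(S^+(\lambda)-S^-(\lambda)\bigr),\qquad \lambda\in U\setminus\Sigma_P,
\]
which is locally bounded and locally H\"{o}lder continuous as an element of $B(\MX,\MX^*_H)$. For any $x\in\MX$ the scalar function $\lambda\mapsto (E_P(\lambda)x,x)$ has locally bounded weak derivative $\langle A_P(\lambda)x,x\rangle$ on $U\setminus\Sigma_P$, hence is (locally) absolutely continuous there. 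Consequently $(E_P(\{\lambda_0\})x,x)=0$ for every $\lambda_0\in U\setminus\Sigma_P$ and every $x\in\MX$.

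The last step is to lift this from $\MX$ to all of $\Lcal^2(\Rn,\CC^K)$. Since $E_P(\{\lambda_0\})$ is a nonnegative bounded operator, the Cauchy--Schwarz inequality
\[
|(E_P(\{\lambda_0\})x,y)|^2 \;\leq\; (E_P(\{\lambda_0\})x,x)\,(E_P(\{\lambda_0\})y,y) \;=\; 0,\qquad x,y\in\MX,
\]
together with the density of $\MX\hookrightarrow \Lcal^2(\Rn,\CC^K)$, forces $E_P(\{\lambda_0\})=0$ on $\Lcal^2(\Rn,\CC^K)$. If $\lambda_0\in U\setminus\Sigma_P$ were an eigenvalue of $P$ with eigenvector $\psi\neq 0$, one would have $E_P(\{\lambda_0\})\psi=\psi\neq 0$, contradicting this vanishing. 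Hence $\sigma_p(P)\cap U\subseteq \Sigma_P$.

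The only mildly delicate point in the argument is the passage from the vanishing of the diagonal quadratic form on the dense subspace $\MX$ to the vanishing of the projection on all of $\Lcal^2(\Rn,\CC^K)$; everything else is direct bookkeeping with the framework of Definition~\ref{defn-typeU} and Theorem~\ref{thm-LAP-perturb} already in place.
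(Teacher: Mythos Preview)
Your argument is correct and follows exactly the route the paper takes: the corollary is stated as an immediate consequence of Theorem~\ref{thm-LAP-perturb}, via the observation (made in the sentence preceding the corollary) that the LAP on $U\setminus\Sigma_P$ forces the spectrum there to be absolutely continuous. You have simply spelled out the Stieltjes-formula and density details that the paper leaves implicit.
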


\subsubsection{\textbf{The exceptional set \boldmath
  $\Sigma_P$}}\label{secexceptionalset}

Our aim is to identify the set $\Sigma_P$ introduced in Definition
~\ref{sigmap}. It will turn out that (modulo one additional assumption on the
smoothness of the spectral measure of $H$) we have equality of the sets in the
last corollary. In other words, $\Sigma_P$ is the set of eigenvalues of $P$
embedded in $U$, and is necessarily discrete.

Let $\mu\in \Sigma_P$, so that
by definition there exists a non-zero $\phi\in\MX$ satisfying
\be\label{muphi}\phi=-VR^\pm(\mu)\phi.\ee In view of ~\eqref{Aphiphi} we have
$A(\mu)\phi=0$, and since the form $<A(\lambda)\phi,\phi>$ is non-negative we
infer that the zero at $\lambda=\mu$ is a minimum. Thus formally this minimum
is a second-order zero for the form. However, our smoothness assumption on the
spectral measure (Definition ~\ref{defn-typeU}) does not go so far as a
second-order derivative.  We therefore impose the following additional
hypothesis on the spectral derivative near such a minimum.

\begin{THEOREM S}
  Let $K\subseteq U$ be compact and $\phi\in\MX$ a solution to ~\eqref{muphi},
  where $\mu\in K$. Then there exist constants $C,\,\eps>0$, depending only on
  $K$, so that \be \label{asss}<A(\lambda)\phi,\phi>\,\leq\,
  C\left|\lambda-\mu\right|^{1+\eps}\left\|\phi\right\|_{\MX}^2,\quad \lambda\in K.  \ee
\end{THEOREM S}
\begin{rem}
   This assumption  is satisfied if the operator-valued function $\lambda\to
  A(\lambda)\in B(\MX,\MX_H^*)$ has a H\"{o}lder continuous (in the uniform
  operator topology) Fr\'{e}chet derivative in a neighborhood of $\mu$.
  Indeed, in this case we have
  $$ <A(\lambda)\phi,\phi>\,=\,
  <(A(\lambda)-A(\mu))\phi,\phi>\,=\,\frac{d}{d\theta}
  <A(\theta)\phi,\phi>_{\theta\in[\mu,\lambda]}\left(\lambda-\mu\right).$$
\end{rem}

\begin{thm}\label{thm-sigmap}Let $V$ be symmetric and short-range, and
  assume that the condition of\/ \textup{Lemma ~\ref{lem-sadj}} is satisfied, so
  that $P=H+V$ is a self-adjoint operator. Assume, in addition, that the assumption above is satisfied and ~\eqref{asss}
   holds. Then
  $$\Sigma_P=\sigma_p(P)\cap U.$$
   Furthermore, every eigenvalue is of finite multiplicity and the set of eigenvalues $\sigma_p(P)$ has no accumulation point in $U.$
\end{thm}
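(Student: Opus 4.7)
The plan is to split the proof into first showing the set-theoretic identity $\Sigma_P = \sigma_p(P) \cap U$ (only the $\subseteq$ direction requires work since $\supseteq$ is Corollary~\ref{specp}), then deriving finite multiplicity and discreteness from compactness of $VR^\pm(\mu)$ and the $\Lcal^2$-orthogonality of eigenfunctions.

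\textbf{Step 1 (from $\Sigma_P$ into $\sigma_p(P)$).} Given $\mu \in \Sigma_P$ and a nonzero $\phi \in \MX$ with $\phi = -VR^+(\mu)\phi$, set $\psi := R^+(\mu)\phi \in \MX^*_H$; note $R^+(\mu)\phi = R^-(\mu)\phi$ because $A(\mu)\phi = 0$ by~\eqref{Aphiphi}. The main task is to promote $\psi$ into $\Lcal^2(\Rn, \CC^K)$. Formally,
\[
\|\psi\|_{\Lcal^2}^2 \;=\; \int_{U}\frac{<A(\lambda)\phi,\phi>}{(\lambda-\mu)^2}\,d\lambda,
\]
using that $E(\RR\setminus U)=0$. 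Split into a compact neighborhood $K \ni \mu$ in $U$ and its complement. On $K$, the regularity condition~\eqref{asss} bounds $<A(\lambda)\phi,\phi>$ by $C|\lambda-\mu|^{1+\eps}\|\phi\|_\MX^2$, rendering the integrand $\leq C|\lambda-\mu|^{\eps-1}\|\phi\|_\MX^2$, which is locally integrable. On the complement, $(\lambda-\mu)^{-2}$ is bounded and item~(ii) of Definition~\ref{defn-typeU} yields finiteness of the remaining integral. A rigorous route: start from $\psi_\delta := R(\mu + i\delta)\phi \in \Lcal^2$, write $\|\psi_\delta\|_{\Lcal^2}^2 = \int <A(\lambda)\phi,\phi>/|\lambda-\mu-i\delta|^2\, d\lambda$, and let $\delta \downarrow 0$ by monotone convergence; uniform boundedness then gives $\psi_\delta \rightharpoonup \psi$ in $\Lcal^2$ with $\psi \in \Lcal^2$. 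Once $\psi \in \Lcal^2$, the identity $(H-\mu)\psi = \phi$ (obtained as the distributional limit of $(H-\mu)R(\mu+i\delta)\phi = \phi + i\delta R(\mu+i\delta)\phi$) combined with $V\psi = -\phi$ yields $P\psi = \mu\psi$. Finally $\psi \neq 0$, for otherwise $\phi = -V\psi = 0$, contradicting $\phi \neq 0$.

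\textbf{Step 2 (finite multiplicity).} The map $\psi \mapsto -V\psi$ realizes the $\mu$-eigenspace of $P$ as a subspace of $\ker(I + VR^+(\mu)) \subseteq \MX$, with inverse $\phi \mapsto R^+(\mu)\phi$ provided by Step~1. Injectivity uses the absolute continuity of $H$ on $U$ (guaranteed because $A(\lambda)$ is continuous there): if $V\psi = 0$ and $P\psi = \mu\psi$ with $\mu \in U$, then $(H-\mu)\psi = 0$ forces $\psi = 0$. The kernel of $I + VR^+(\mu)$ is finite dimensional since $VR^+(\mu) \in B(\MX)$ is compact, being the composition of the bounded $R^+(\mu) \colon \MX \to \MX^*_H$ with the compact $V \colon \MX^*_H \to \MX$.

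\textbf{Step 3 (no accumulation).} Suppose, for contradiction, that distinct $\mu_n \in \sigma_p(P) \cap U$ cluster at $\mu_0 \in U$, and pick unit vectors $\phi_n \in \MX$ with $\phi_n = -VR^+(\mu_n)\phi_n$. By the Hölder continuity of $\lambda \mapsto VR^+(\lambda) \in B(\MX)$ and compactness of $VR^+(\mu_0)$, a subsequence $\phi_n \to \phi_0$ strongly in $\MX$ with $\|\phi_0\|_\MX = 1$ and $\phi_0 \in \ker(I+VR^+(\mu_0))$. Let $\psi_n = R^+(\mu_n)\phi_n$ and $\psi_0 = R^+(\mu_0)\phi_0$ be the $\Lcal^2$-eigenfunctions produced by Step~1. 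Applying~\eqref{asss} uniformly on a common compact neighborhood of $\mu_0$ (justified by the ``depending only on $K$'' clause) together with dominated convergence gives $\psi_n \to \psi_0$ in $\Lcal^2$. Since eigenfunctions of $P$ at distinct eigenvalues are $\Lcal^2$-orthogonal, $(\psi_n, \psi_0) = 0$ whenever $\mu_n \neq \mu_0$, forcing $\|\psi_0\|_{\Lcal^2}^2 = \lim_n (\psi_n, \psi_0) = 0$, contradicting $\psi_0 \neq 0$.

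\textbf{Main obstacle.} The delicate point is Step~1: the hypothesis~\eqref{asss} is precisely calibrated to defeat the $(\lambda-\mu)^{-2}$ singularity, and justifying the exchange of the limit $\delta \downarrow 0$ with the spectral integral demands care in the passage from the abstract spectral representation to the concrete $\Lcal^2$ estimate. Step~3 then depends on the uniformity of the constant in~\eqref{asss} along the sequence $\mu_n$, which is exactly the content of Assumption~S; without that uniformity one would obtain only $\Lcal^2$-boundedness of $\psi_n$ and weak $\Lcal^2$-convergence, which is compatible with the orthogonality and yields no contradiction.
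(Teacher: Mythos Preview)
The paper does not supply its own proof of this theorem; it is imported from the abstract framework of~\cite{ba4} (see the opening sentence of Subsection~\ref{subsection-spec-general-perturb}). Your argument follows the standard route and is essentially correct, but two points deserve comment.

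In Step~2 you assert that $\psi\mapsto -V\psi$ injects the $\mu$-eigenspace of $P$ into $\ker(I+VR^+(\mu))$. This presumes $\psi\in\MX^*_H$, the domain of $V$; but eigenfunctions of $P$ lie a~priori only in $D(P)\supseteq D(H)\cap\MX^*_H$, and in the abstract setting this containment may be strict. In the concrete applications of the paper (Dirac with bounded $V$) one has $D(P)=D(H)=\Hcal^1\subseteq\Hcal^{1,-s}=\MX^*_H$, so the issue disappears; abstractly it needs an extra argument (or an appeal to the hypotheses actually used in~\cite{ba4}). One route, once $\psi\in D(H)\cap\MX^*_H$ is known, is to set $\phi=-V\psi$ and verify $\psi=R^+(\mu)\phi$ via the identity $\psi+i\delta R(\mu+i\delta)\psi=R(\mu+i\delta)\phi$ together with $\|\delta R(\mu+i\delta)\psi\|_0\to 0$ (which holds since $H$ is absolutely continuous on $U$).

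Your ``Main obstacle'' paragraph overstates the difficulty in Step~3. Weak $\Lcal^2$-convergence $\psi_n\rightharpoonup\psi_0$ already suffices: testing against $\psi_0$ itself gives $0=(\psi_n,\psi_0)\to\|\psi_0\|_0^2$, whence $\psi_0=0$ and $\phi_0=-V\psi_0=0$, contradicting $\|\phi_0\|_\MX=1$. The uniformity of the constant in~\eqref{asss} over $K$ is needed only to obtain the uniform bound $\sup_n\|\psi_n\|_0<\infty$ (which you establish correctly); strong $\Lcal^2$-convergence is not required. Weak convergence in $\Lcal^2$ then follows from this bound together with the strong convergence $\psi_n\to\psi_0$ in $\MX^*_H$ furnished by the LAP.
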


              \subsection{\textbf{PERTURBATION OF THE DIRAC OPERATOR}}\label{subsection-spec-dirac-perturb}.\newline
                We now consider the operator
                 \be
                    H_m^V=H_m+V(x),
                 \ee
                   where $H_m$ is the free Dirac operator given in ~\eqref{diracop} and  $V(x)$ is a Hermitian $4\times 4$ matrix  that decays as $|x|\to\infty.$

                   We assume $m>0.$ In this case  the spectrum $spec(H_m)$ has a ``gap'' $(-m,m)$ (see Equation ~\eqref{eqsepecHm}) and our focus is on the \textit{finiteness} of the eigenvalues therein. We refer to ~\cite{berthiert,birman-laptev,dolbo2} for general discussion of the eigenvalues in the gap.  A variational characterization of the eigenvalues is given in ~\cite{dolbo1}.

                   The spectral structure of $H_m$ was studied in Subsection ~\ref{subsection-spec-dirac}. Recall that by
                   ~\eqref{dom Hm} the domain of $H_m$ (as a self adjoint operator in $\Lcal^2(\RT,\mathbb{C}^4)$) is
                   $\Hcal^1(\RT,\mathbb{C}^4).$ Using the terminology introduced in Subsection ~\ref{subsection-spec-general-perturb} it follows that , in view of Theorem ~\ref{basiclap}, we can take
                   \be
                   \MX=\Lcal^{2,s}(\RT,\mathbb{C}^4),\quad \MX^*=\Lcal^{2,-s}(\RT,\mathbb{C}^4),\,\,s>\frac12.
                   \ee
                   Also, the space $\MX^*_H$ can then be taken as the domain of (the closure of) $H_m$ in $\Lcal^{2,-s}(\RT,\mathbb{C}^4),$
                   so that
                   \be\label{eqMXstarDirac}
                     \MX^*_H=\Hcal^{1,-s}(\RT,\mathbb{C}^4).
                   \ee
                      For the selfadjointness of $H_m^V$ the following proposition suffices for our study here.
                      \begin{prop}
                      Assume that $V$ is bounded and decays at infinity.
                      Then the operator $H_m^V$ is self-adjoint with domain $\Hcal^1(\RT,\mathbb{C}^4)\subseteq \Lcal^2(\RT,\mathbb{C}^4).$ Its essential spectrum is $\RR\setminus(-m,m).$
                      \end{prop}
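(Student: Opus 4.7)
The assertion has two independent parts: self-adjointness on $\Hcal^1(\RT,\mathbb{C}^4)$ and the identification of the essential spectrum. Both reduce to classical perturbation-theoretic tools once one has the mapping properties of the free resolvent already established in Subsection~\ref{subsection-spec-dirac}.

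\textbf{Step 1 (Self-adjointness).} Since $V(x)$ is Hermitian and $\sup_{x\in\RT}|V(x)|<\infty$, multiplication by $V$ is a bounded symmetric operator on $\Lcal^2(\RT,\mathbb{C}^4)$ with norm $\|V\|_\infty$. The free Dirac operator $H_m$ is self-adjoint with domain $\Hcal^1(\RT,\mathbb{C}^4)$ by ~\eqref{dom Hm}. The Kato--Rellich theorem (applied trivially, since the relative bound is $0$) yields that $H_m^V=H_m+V$ is self-adjoint on the same domain $\Hcal^1(\RT,\mathbb{C}^4)$.

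\textbf{Step 2 (Relative compactness of $V$).} To invoke Weyl's theorem on the stability of the essential spectrum, it suffices to show that $V(H_m-i)^{-1}$ is compact on $\Lcal^2(\RT,\mathbb{C}^4)$. By the coercivity estimate ~\eqref{eq-dirac-coerciv} (equivalently ~\eqref{squaredirac}), the graph norm of $H_m$ is equivalent to the $\Hcal^1$-norm, so $(H_m-i)^{-1}:\Lcal^2(\RT,\mathbb{C}^4)\to\Hcal^1(\RT,\mathbb{C}^4)$ is bounded. Therefore it is enough to show that multiplication by $V$ is compact from $\Hcal^1(\RT,\mathbb{C}^4)$ to $\Lcal^2(\RT,\mathbb{C}^4)$.

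\textbf{Step 3 (Compactness of $V$ as an $\Hcal^1\to\Lcal^2$ multiplier).} For $R>0$, let $\chi_R$ be the indicator of $\{|x|\leq R\}$ and set $V_R=\chi_R V$. Multiplication by $V_R$ factors as $\Hcal^1(\RT,\mathbb{C}^4)\hookrightarrow \Hcal^1(B_R,\mathbb{C}^4)\hookrightarrow\Lcal^2(B_R,\mathbb{C}^4)\xrightarrow{V_R}\Lcal^2(\RT,\mathbb{C}^4)$, where the middle embedding is compact by the Rellich--Kondrachov theorem and $V_R$ is bounded on $\Lcal^2(B_R)$. Hence multiplication by $V_R$ is compact in $B(\Hcal^1,\Lcal^2)$. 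Since $V$ decays at infinity, $\|V-V_R\|_\infty\to 0$ as $R\to\infty$, so multiplication by $V-V_R$ has operator norm at most $\|V-V_R\|_\infty\to 0$ as a map from $\Hcal^1$ to $\Lcal^2$. The compact operators form a closed subspace, so multiplication by $V$ is compact, and consequently $V(H_m-i)^{-1}$ is compact on $\Lcal^2(\RT,\mathbb{C}^4)$.

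\textbf{Step 4 (Conclusion via Weyl's theorem).} Both $H_m$ and $H_m^V$ are self-adjoint on $\Hcal^1(\RT,\mathbb{C}^4)$, and $V$ is $H_m$-compact by Step~3. Weyl's theorem on the invariance of the essential spectrum under relatively compact symmetric perturbations then gives $\sigma_{ess}(H_m^V)=\sigma_{ess}(H_m)$. Since the spectrum of $H_m$ is purely absolutely continuous and equals $\RR\setminus(-m,m)$ by ~\eqref{eqsepecHm}, we have $\sigma_{ess}(H_m)=\RR\setminus(-m,m)$ (the gap $(-m,m)$ being open). This yields $\sigma_{ess}(H_m^V)=\RR\setminus(-m,m)$, completing the proof.

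The only step requiring genuine work is Step~3, and even there the argument is standard; the main conceptual point is the use of the coercivity ~\eqref{eq-dirac-coerciv} to pass from the graph norm of $H_m$ to the full $\Hcal^1$ norm, which is precisely what makes $V$ relatively compact despite $H_m$ being a first-order (rather than second-order) operator.
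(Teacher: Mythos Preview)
Your proof is correct and follows exactly the approach of the paper, which simply invokes the coercivity inequality~\eqref{eq-dirac-coerciv} to conclude that $V$ is $H_m$-compact and then appeals to the general theory of self-adjoint perturbations. You have merely spelled out in detail the standard steps (Kato--Rellich, Rellich--Kondrachov, Weyl) that the paper compresses into a single sentence.
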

                      \begin{proof}
                       By the coercivity inequality ~\eqref{eq-dirac-coerciv} it follows that $V$ is compact with respect to $H_m,$ which establishes the claim by the general theory of self-adjoint perturbations.
                      \end{proof}
                       \begin{rem}
                    Indeed, the self-adjointness of $H_m^V$ holds for a much wider class of potentials, see e.g. ~\cite[Chapter 2]{balinsky}, ~\cite{esteban}. As in the case of the classical Scr\"{o}dinger operators , the stronger decay
                    assumptions are needed when studying the LAP and the spectral derivative associated with the operator, as we proceed to do next.
                    \end{rem}
                 In order to derive the Limiting Absorption Principle (LAP) of the perturbed operator $H_m^V$ from the general theory presented in Section ~\ref{subsection-spec-general-perturb}, we need to assume that, for some constant $C>0,$
                   \be\label{eq-dirac-V-decay}
                    |V(x)|\leq C(1+|x|)^{-(1+\eps)},\quad \eps>0,\,\,x\in \RT.
                   \ee
                    It is readily verified that $V$ satisfies the conditions of Definition ~\ref{defn-Vsr}, so that it is short-range and symmetric.

                    Now let
                    $$U=\RR\setminus[-m,m].$$
                    In view of Theorem ~\ref{basiclap} and Theorem ~\ref{thm-LAP-perturb} we obtain the LAP for
                    $H_m^V$ as follows.
                    \begin{thm}\label{thm-lap-dirac-perturb} Assume that $V$ satisfies the decay assumption ~\eqref{eq-dirac-V-decay}.
    Let $R_m^V(z)=(H_m^V-z)^{-1},\,\,\Im \,z\neq 0.$ Then:
    \begin{itemize}
     \item For any $\frac12<s<\frac{1+\eps}{2}$
    the limits
    \be\label{eqLAPDiracperturb}
    R_m^{V,\pm}(\mu)=\lim\limits_{\eps\downarrow 0}R_m^V(\mu\pm
    i\eps),\,\,\mu\in U\setminus\Sigma_{H_m}^V,
    \ee
    exist in the uniform operator topology of
    $B(\Lcal^{2,s}(\RT,\mathbb{C}^4),\Lcal^{2,-s}(\RT,\mathbb{C}^4)),$ where the (relatively) closed set $\Sigma_{H_m}^V$ is given by
      $$\Sigma_{H_m}^V=\bigl\{\lambda\in
    U\,\,/\,\,\text{There exists a non-zero $\phi_\lambda\in\Lcal^{2,s}(\RT,\mathbb{C}^4)$ such
      that $\phi_\lambda=-VR_m^\pm(\lambda)\phi_\lambda$}\bigr\}.$$
      \item  Assume that the decay assumption ~\eqref{eq-dirac-V-decay} is replaced by the stronger one
       \be\label{eq-dirac-V-decay-strong}
                    |V(x)|\leq C(1+|x|)^{-(2+\eps)},\quad \eps>0,\,\,x\in \RT,
                   \ee
                   then
                    for any $1<s<\frac{2+\eps}{2}$  the limits in ~\eqref{eqLAPDiracperturb} exist for all $\mu\in\RR\setminus\widetilde{\Sigma_{H_m}^V},$
                     where now
                    $$\widetilde{\Sigma_{H_m}^V}=\bigl\{\lambda\in
    \RR\,\,/\,\,\text{There exists a non-zero $\phi_\lambda\in\Lcal^{2,s}(\RT,\mathbb{C}^4)$ such
      that $\phi_\lambda=-VR_m^\pm(\lambda)\phi_\lambda$}\bigr\}.$$
      \end{itemize}
     \end{thm}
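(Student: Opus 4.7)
\textbf{The plan} is to cast this as a direct application of the abstract perturbation result (Theorem \ref{thm-LAP-perturb}) with $H=H_m$, using the functional-analytic setting already identified for the free Dirac operator in Theorem \ref{basiclap} and formula \eqref{eqMXstarDirac}: take $\MX=\Lcal^{2,s}(\RT,\mathbb{C}^4)$ and $\MX^*_H=\Hcal^{1,-s}(\RT,\mathbb{C}^4)$. By Theorem \ref{basiclap} together with Proposition \ref{corpropertyA}, $H_m$ is of type $(\MX,\MX^*_H,\alpha,U)$ in the sense of Definition \ref{defn-typeU}, with $U=\RR\setminus[-m,m]$ when $s>\tfrac12$ (first bullet) and with $U=\RR$ when $s>1$ (second bullet). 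It is the coercivity estimate \eqref{eq-dirac-coerciv} that identifies $\MX^*_H$ with $\Hcal^{1,-s}$ rather than just $\Lcal^{2,-s}$.

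\textbf{The main technical step} is to verify that $V$, viewed as the multiplication operator $u\mapsto V(x)u(x)$, is short-range and symmetric per Definition \ref{defn-Vsr}, i.e.\ a compact operator from $\Hcal^{1,-s}(\RT,\mathbb{C}^4)$ to $\Lcal^{2,s}(\RT,\mathbb{C}^4)$. Symmetry is pointwise: $V(x)^*=V(x)$. For compactness, the decay hypothesis yields
\be
\|Vu\|_{0,s}^2\leq C\int_{\RT}(1+|x|)^{2s-2(1+\eps)}|u(x)|^2\, dx,
\ee
so $V:\Lcal^{2,-s}\to\Lcal^{2,s}$ is bounded whenever $s\leq(1+\eps)/2$. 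The \emph{strict} inequality $s<(1+\eps)/2$ upgrades this to compactness by the standard cutoff argument: write $V=V\chi_{|x|\leq R}+V\chi_{|x|>R}$; the first summand is compact as a map $\Hcal^{1,-s}\to\Lcal^{2,s}$ via Rellich's theorem on the ball $\{|x|\leq R\}$, while the operator norm of the second tends to zero as $R\to\infty$ because $4s-2(1+\eps)<0$. The identical computation with decay exponent $2+\eps$ handles the second bullet for $s<(2+\eps)/2$.

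\textbf{Assembly.} Once the short-range property is in hand, Lemma \ref{lem-sadj} defines $H_m^V=H_m+V$ as a self-adjoint operator (the density condition is met by $D_z=C^\infty_0(\RT,\mathbb{C}^4)$, since $H_m-z$ has dense range there in $\MX$), in agreement with the self-adjointness proposition preceding the theorem. Theorem \ref{thm-LAP-perturb} then produces the LAP limits $R_m^{V,\pm}(\mu)$ in the uniform operator topology of $B(\MX,\MX^*_H)\hookrightarrow B(\Lcal^{2,s},\Lcal^{2,-s})$ for $\mu\in U\setminus\Sigma_{H_m}^V$, and by the very Definition \ref{sigmap} of the exceptional set, $\Sigma_{H_m}^V$ coincides with the set described in the theorem statement. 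For the second bullet, the same argument with $U=\RR$ and $1<s<(2+\eps)/2$ gives continuity across the thresholds $\mu=\pm m$, outside $\widetilde{\Sigma_{H_m}^V}$.

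\textbf{Main obstacle.} The only nontrivial ingredient is the compactness of $V$ in the \emph{correct} topology---from $\Hcal^{1,-s}$ to $\Lcal^{2,s}$, not merely from $\Lcal^{2,-s}$. Compactness in the weaker topology would still give an LAP through Theorem \ref{thm-LAP-perturb}, but only into $\Lcal^{2,-s}$; to obtain the $\Hcal^{1,-s}$ regularization of the perturbed resolvent (and to later propagate it through the potential in analogue of Proposition \ref{Hlap}) one must exploit the ellipticity of $H_m$ encoded in \eqref{eq-dirac-coerciv}. The strict decay inequality is exactly what converts pointwise smallness at infinity into norm-smallness of the tail, and hence compactness rather than boundedness.
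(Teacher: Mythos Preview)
Your proposal is correct and follows essentially the same route as the paper's own proof, which simply invokes Theorem~\ref{thm-LAP-perturb} together with the free-operator estimates of Theorem~\ref{basiclap}; you have just filled in the verification (which the paper declares ``readily verified'') that $V:\Hcal^{1,-s}\to\Lcal^{2,s}$ is compact via the cutoff-plus-Rellich argument.
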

        \begin{proof}
        The existence of the limits in   ~\eqref{eqLAPDiracperturb} in $U\setminus\Sigma_{H_m}^V$ (resp. in
                   $\RR\setminus\widetilde{\Sigma_{H_m}^V}$) for the decay rate ~\eqref{eq-dirac-V-decay} (resp. ~\eqref{eq-dirac-V-decay-strong}) follows from the general theory (see Theorem ~\ref{thm-LAP-perturb}) and the estimates implied by Theorem ~\ref{basiclap}.
                   \end{proof}
                   \begin{rem} Note that the operator $H_m^V$ is certainly defined on a dense subspace of $\Lcal^{2,-s}(\RT,\mathbb{C}^4).$ If it is closable and the graph-norm of its closure is equivalent to the norm $\Hcal^{1,-s}(\RT,\mathbb{C}^4),$ then the general theory (as in the case of the free operator, see Equation ~\eqref{eqMXstarDirac}) implies that the limits ~\eqref{eqLAPDiracperturb} are actually obtained in the  uniform operator topology of
    $B(\Lcal^{2,s}(\RT,\mathbb{C}^4),\Hcal^{1,-s}(\RT,\mathbb{C}^4)).$
                   \end{rem}

%
%
%

                     \subsubsection[The discreteness and finiteness of eigenvalues]{The discreteness and finiteness of eigenvalues}\label{secfiniteevs}

                   Our next goal is to identify the set $\Sigma_{H_m}^V$ with the (discrete) set of embedded eigenvalues. The abstract framework that enables us to do that was introduced in Subsubsection ~\ref{secexceptionalset}. We shall need to replace the short-range condition ~\eqref{eq-dirac-V-decay} by  the stronger decay condition ~\eqref{eq-dirac-V-decay-strong} on the potential $V,$
                    even when dealing with spectral intervals that do not include the thresholds at $\pm m.$
    \begin{thm}\label{thmlapVinterior}     Let   $$U=\RR\setminus[-m,m].$$  Suppose that the potential $V$ satisfies  the following decay condition:

                   \be\label{eq-dirac-V-decay-stronger1}
                    |V(x)|\leq C(1+|x|)^{-(2+\eps)},\quad \eps>0,\,\,x\in \RT.
                   \ee

                   Let $\Sigma_{H_m}^V$ be as in Theorem ~\ref{thm-lap-dirac-perturb}. Then
                   \be\label{eqdiracsigmaevs} \Sigma_{H_m}^V=\sigma_p(H_m^V)=\text{the set of eigenvalues in}\,\,U\,,\ee
                   and this set is discrete, with all eigenvalues of finite multiplicity.
                   \end{thm}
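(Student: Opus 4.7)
The plan is to apply the abstract Theorem ~\ref{thm-sigmap} with $H = H_m$, $V = V$, $P = H_m^V$, taking
\[
\MX = \Lcal^{2,s}(\RT,\mathbb{C}^4), \qquad \MX^*_H = \Hcal^{1,-s}(\RT,\mathbb{C}^4),
\]
for some $s$ in the admissible range $1 < s < (2+\eps)/2$. Theorem ~\ref{basiclap} and Proposition ~\ref{corpropertyA} together show that $H_m$ is of type $(\MX, \MX^*_H, \alpha, U)$ in the sense of Definition ~\ref{defn-typeU} on $U = \RR \setminus [-m,m]$. Thus one needs to verify: (i) $V$ is short-range and symmetric in the sense of Definition ~\ref{defn-Vsr}; (ii) Lemma ~\ref{lem-sadj} applies; (iii) the Regularity Assumption on Spectral Density holds for $A_m$.

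Condition (ii) follows from the self-adjointness of $H_m^V$ already established. For (i), symmetry is clear, and compactness of $V : \MX^*_H \to \MX$ is obtained from the decay estimate ~\eqref{eq-dirac-V-decay-stronger1}: one has
\[
\|V u\|_{\Lcal^{2,s}}^2 \leq C \int_{\RT} (1+|x|)^{4s-2(2+\eps)} (1+|x|)^{-2s}|u(x)|^2\, dx,
\]
which is controlled by $\|u\|_{\Lcal^{2,-s}}^2$ since $4s-2(2+\eps)<0$. The same estimate shows that the truncations $V_R = V\chi_{\set{|x|\leq R}}$ satisfy $\|V - V_R\|_{B(\MX^*_H,\MX)} \to 0$ as $R \to \infty$, while each $V_R$ is compact $\MX^*_H \to \MX$ by Rellich's embedding $\Hcal^1_{\text{loc}} \hookrightarrow \Lcal^2_{\text{loc}}$.

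The main step is (iii). By the remark following the Regularity Assumption, it suffices to exhibit a H\"{o}lder continuous Fr\'{e}chet derivative of $\lambda \mapsto A_m(\lambda) \in B(\MX, \MX^*)$ on compact subsets of $U$. Indeed, at a minimum point $\mu \in U$ of the non-negative function $\lambda \mapsto <A_m(\lambda)\phi,\phi>$ (with $\phi = -VR_m^\pm(\mu)\phi$), both the value $<A_m(\mu)\phi,\phi>$ and its first derivative at $\mu$ vanish, so the fundamental theorem of calculus gives
\[
|<A_m(\lambda)\phi,\phi>| = \Bigl|\int_\mu^\lambda <(A_m'(\theta)-A_m'(\mu))\phi,\phi>\, d\theta\Bigr| \leq C|\lambda-\mu|^{1+\alpha}\|\phi\|_\MX^2,
\]
which is precisely ~\eqref{asss}. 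To construct this derivative, I would repeat the computation from the proof of Proposition ~\ref{corpropertyA}: rewrite $<A_m(\lambda)f,f>$ as a scaled integral on the unit sphere via $\xi = \sqrt{\lambda^2-m^2}\omega$, differentiate twice in $\lambda$ while using Sobolev embeddings with index $\bar{s}>4$ to absorb derivatives of $U_m^*$ and $\widehat{f}$, and then interpolate with the weaker operator-norm bound of Proposition ~\ref{corpropertyA}(2) to obtain H\"{o}lder continuity of $A_m'$ in $B(\Lcal^{2,s}, \Lcal^{2,-s})$ for admissible $s$. The main technical obstacle is making this interpolation between derivatives of operator-valued maps rigorous, but away from the thresholds the explicit dependence on $\lambda$ is real-analytic, so each differentiation merely costs additional Sobolev regularity which is then traded back by interpolation.

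With (i)--(iii) verified, Theorem ~\ref{thm-sigmap} yields $\Sigma_{H_m}^V = \sigma_p(H_m^V) \cap U$, with each eigenvalue of finite multiplicity and no accumulation point in $U$, which is the claimed discreteness.
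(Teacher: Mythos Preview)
Your setup (i), (ii) is fine, and you correctly identify that the whole point is to verify the Regularity Assumption~\eqref{asss}. However, your route to~\eqref{asss} differs from the paper's and, as you yourself flag, has a real gap.

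You try to prove the \emph{operator-level} statement that $\lambda\mapsto A_m(\lambda)$ has a H\"older continuous Fr\'echet derivative in $B(\Lcal^{2,s},\Lcal^{2,-s})$ for your chosen $1<s<(2+\eps)/2$, and then invoke the Remark after the Regularity Assumption. But a direct trace computation shows that boundedness of $A_m'(\lambda)$ in $B(\Lcal^{2,s},\Lcal^{2,-s})$ already asks for traces of $\nabla_\xi(\mathcal{G}_m f)$ on spheres, hence essentially $s>3/2$; your interpolation between $C^0$ at $s_0>\tfrac12$ and $C^2$ at $\bar s>4$ yields H\"older exponent $2\theta$ at $s=(1-\theta)s_0+\theta\bar s$, and for small $\eps$ your range $s<(2+\eps)/2<3/2$ forces $\theta<1/2$, so the interpolated exponent stays below $1$ and no derivative is produced. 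In short, the stronger operator statement you aim for appears to need faster decay of $V$ (roughly $(1+|x|)^{-(3+\eps)}$) than the theorem assumes.

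The paper sidesteps this entirely by a \emph{bootstrap on the specific vector} $\phi_\mu$. It works with the smaller weight $\tfrac12<s<\tfrac{1+\eps}{2}$, so that $R_m^\pm(\mu)\phi_\mu\in\Hcal^{1,-s}$, and then uses the equation $\phi_\mu=-VR_m^\pm(\mu)\phi_\mu$ together with the full $(2+\eps)$-decay of $V$ to conclude $\phi_\mu\in\Lcal^{2,-s+2+\eps}$, i.e.\ $\mathcal{G}_m\phi_\mu\in\Hcal^\theta$ with $\theta>\tfrac32+\tfrac{\eps}{2}$. This extra regularity of $\phi_\mu$ \emph{alone} (not of a generic $f\in\MX$) makes $\lambda\mapsto\langle A_m(\lambda)\phi_\mu,\phi_\mu\rangle$ continuously differentiable with H\"older derivative near $\mu$, and since it has a minimum there the estimate~\eqref{asss} follows. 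That is the missing idea: feed the relation $\phi_\mu=-VR_m^\pm(\mu)\phi_\mu$ back once to upgrade the weight of $\phi_\mu$, rather than asking for uniform $C^{1,\alpha}$ regularity of $A_m$ on the whole space.
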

                   \begin{proof}

                    In view of Theorem ~\ref{thm-sigmap}, we need to verify that the regularity condition  ~\eqref{asss} is satisfied under the decay assumption ~\eqref{eq-dirac-V-decay-stronger1} .

                   The spectral derivative is given by Equation ~\eqref{A0} (assuming $\lambda>m$):
                   \be\label{eqalammu}
     <A_m(\lambda)f,f>=\frac{\lambda}{\sqrt{\lambda^2-m^2}}
     \int\limits_{|\xi|=\sqrt{\lambda^2-m^2}}
|(\mathcal{G}_mf)_+(\xi)|^2d\Sigma_{\sqrt{\lambda^2-m^2}}.
    \ee

     Let $\frac12<s<\frac{1+\eps}{2}.$

         Suppose that $\phi_\mu\in\Lcal^{2,s}(\RT,\mathbb{C}^4)$ is such
      that $\phi_\mu=-VR_m^\pm(\mu)\phi_\mu, \, \mu>m.$

      Thus $<A_m(\mu)\phi_\mu,\phi_\mu>=0,$
            so that $(\mathcal{G}_m\phi_\mu)_+(\xi)=0$ for all $\xi$ on the sphere $|\xi|=\sqrt{\mu^2-m^2}.$
            We need to show that
            \be\label{eqAmphimuholder}
            <A_m(\lambda)\phi_\mu,\phi_\mu>\quad\leq C|\lambda-\mu|^{1+\eps}.
            \ee
            We note that by Theorem  ~\ref{basiclap} we have $R_m^\pm(\mu)\phi_\mu\in \Hcal^{1,-s}(\RT,\mathbb{C}^4).$
             Therefore ~\eqref{eq-dirac-V-decay-stronger1} implies that
            $\phi_\mu=-VR_m^\pm(\mu)\phi_\mu\in \Lcal^{2,-s+2+\eps}(\RT,\mathbb{C}^4).$ It follows by ~\eqref{eqdefGm}
            that $\mathcal{G}_m\phi_\mu\in \Hcal^\theta,$ where $\theta>2+\eps-\frac{1+\eps}{2}=\frac32+\frac12\eps.$

            In particular, the trace of $\nabla_\xi\mathcal{G}_m\phi_\mu(\xi)$ on the sphere $|\xi|=\sqrt{\lambda^2-m^2}$
            is H\"{o}lder continuous (for $\lambda$ near $\mu$). From Equation ~\eqref{eqalammu} we infer that
            $\frac{d}{d\lambda}<A_m(\lambda)\phi_\mu,\phi_\mu>$ is  H\"{o}lder continuous in a neighborhood of $\mu.$
             Since the nonnegative function $<A_m(\lambda)\phi_\mu,\phi_\mu>$ has a minimum at $\lambda=\mu,$ its derivative vanishes there, which yields ~\eqref{eqAmphimuholder} in view of the  H\"{o}lder continuity.

                   \end{proof}
                   Finally we treat the case of the full line (and in particular the thresholds at $\lambda=\pm m$ are included). This is related to the second part of Theorem
                   ~\ref{thm-lap-dirac-perturb}. In fact,
                   the following theorem can be viewed as the ``perturbed'' version of the second part of
                   Theorem  ~\ref{basiclap}.
                   \begin{thm}Suppose that the potential $V$ satisfies  the following decay condition:

                   \be\label{eq-dirac-V-decay-stronger2}
                    |V(x)|\leq C(1+|x|)^{-(3+\eps)},\quad \eps>0,\,\,x\in \RT.
                   \ee
                   Let $\widetilde{\Sigma_{H_m}^V}$ be as in Theorem ~\ref{thm-lap-dirac-perturb}. Then
                   \be\label{eqdiracsigmaevs1} \widetilde{\Sigma_{H_m}^V}=\sigma_p(H_m^V)=\text{the set of eigenvalues in}\,\,\RR\,,\ee
                   and this set is discrete, with all eigenvalues of finite multiplicity.

                   \end{thm}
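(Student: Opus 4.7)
Plan: I would apply the abstract Theorem~\ref{thm-sigmap} with $U = \RR$, taking $\MX = \Lcal^{2,s}(\RT,\mathbb{C}^4)$ and $\MX^*_H = \Hcal^{1,-s}(\RT,\mathbb{C}^4)$ for some $s$ with $1 < s < (3+\eps)/2$. The type property of $H_m$ on all of $\RR$ is supplied by part~(2) of Proposition~\ref{corpropertyA} (uniform boundedness and uniform H\"older continuity of $A_m(\lambda)$ on $\RR$ once $s > 1$), and the decay $(1+|x|)^{-(3+\eps)} \leq C(1+|x|)^{-2s}$ makes $V$ short-range and symmetric in the sense of Definition~\ref{defn-Vsr}. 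For $\mu \in \RR\setminus[-m,m]$ the regularity condition \eqref{asss} was already obtained in the proof of Theorem~\ref{thmlapVinterior} under the weaker decay $(2+\eps)$ and \emph{a fortiori} holds here.

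The novelty lies in verifying \eqref{asss} at the thresholds $\mu = \pm m$; by symmetry it suffices to treat $\mu = m$. Since $A_m(m) = 0$, the minimum argument used for interior $\mu$ gives no information, and one must estimate $<A_m(\lambda)\phi_\mu,\phi_\mu>$ directly as $\lambda \to m^+$. Setting $\psi = R_m^+(m)\phi_\mu \in \Hcal^{1,-s}$, Proposition~\ref{Hlap} together with $\phi_\mu = -VR_m^+(m)\phi_\mu$ yields the pointwise generalized eigenfunction equation $(H_m^V - m)\psi = 0$; in Fourier variables, writing $M_m(\xi) - m = \alpha\cdot\xi + m(\beta - I_4)$ and observing that $\beta - I_4$ annihilates the first two coordinates, we obtain the distributional identity
\[
(\widehat{\phi_\mu}(\xi))_+ = \bigl(\alpha\cdot\xi\,\widehat\psi(\xi)\bigr)_+ .
\]
The stronger decay of $V$ places $\phi_\mu \in \Lcal^{2,\sbar}$ with $\sbar = 3+\eps - s > 3/2$, so $\widehat{\phi_\mu}$ is H\"older continuous of order $\kappa = \sbar - 3/2 > 0$ in a neighborhood of $\xi = 0$. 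Combining this continuity with the $\xi$-linear structure of the right-hand side (and bootstrapping the regularity of $\widehat\psi$ away from the origin via the Clifford identity $(\alpha\cdot\xi)^2 = |\xi|^2 I_4$), one shows $(\widehat{\phi_\mu})_+(0) = 0$. Hence $|(\mathcal{G}_m\phi_\mu)_+(\xi)| \leq C|\xi|^\kappa$ near the origin (using $U_m(0) = I_4$ from \eqref{eqexactUm}), and substituting in \eqref{A0} with $r = \sqrt{\lambda^2 - m^2}$ gives
\[
<A_m(\lambda)\phi_\mu,\phi_\mu> \leq \frac{C\lambda}{r}\int_{|\xi|=r}|\xi|^{2\kappa}\,d\Sigma_r \leq C|\lambda - m|^{\kappa + \tfrac12}\|\phi_\mu\|_{0,\sbar}^2 .
\]
For $s$ chosen close enough to $1$ we have $\kappa + 1/2 > 1 + \eps'$ for some $\eps' > 0$, which is \eqref{asss}. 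Theorem~\ref{thm-sigmap} then identifies $\widetilde{\Sigma_{H_m}^V}$ with $\sigma_p(H_m^V)$ on $\RR$ and supplies discreteness with finite multiplicity.

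The main obstacle is the rigorous passage from the distributional identity $(\widehat{\phi_\mu})_+ = (\alpha\cdot\xi\,\widehat\psi)_+$ to the pointwise vanishing $(\widehat{\phi_\mu})_+(0) = 0$. Because $\widehat\psi$ lies only in a Sobolev space of negative weight it can \emph{a priori} carry an inverse-quadratic singularity at the origin (compatible with $\psi(x) \sim |x|^{-1}$ at infinity, which still sits in $\Lcal^{2,-s}$ for $s > 1/2$). Ruling out such a contribution to the ``$+$''-projection of $\widehat{\phi_\mu}$ is precisely what the stronger decay $(3+\eps)$ achieves: it makes $\widehat{V\psi}$ H\"older continuous near the origin, and one extracts the vanishing by iterating the squared symbol relation $|\xi|^2(\widehat\psi)_+ = 2m(\widehat{\phi_\mu})_+ + (\alpha\cdot\xi)(\widehat{\phi_\mu})_-$ to exclude the $|\xi|^{-2}$ singularity in $(\widehat\psi)_+$. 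Once this pointwise vanishing is secured, the rest of the argument proceeds as above.
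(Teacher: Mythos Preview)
Your route is genuinely different from the paper's. The paper does not isolate $\mu=m$ nor argue via the Dirac symbol at $\xi=0$; it treats every $\mu\in(m-\eta,m+\eta)$ on the same footing and bounds the $\lambda$-derivative of $<A_m(\lambda)\phi_\mu,\phi_\mu>$ by $C(\lambda^2-m^2)^{\eps/2}$, invoking the trace Lemma~\ref{tracelem} on $\nabla_\xi\mathcal{G}_m\phi_\mu$. The extra decay factor $(1+|x|)^{-1}$ in \eqref{eq-dirac-V-decay-stronger2} over Theorem~\ref{thmlapVinterior} is precisely what places $\mathcal{G}_m\phi_\mu\in\Hcal^\theta$ with $\theta>2+\eps/2$, so that its gradient lies in $\Hcal^{\theta-1}$ with $\theta-1>1+\eps/2$ and has traces on the shrinking spheres $|\xi|=\sqrt{\lambda^2-m^2}$ of the required order. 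The paper then combines the vanishing of $<A_m(\mu)\phi_\mu,\phi_\mu>$ and of its $\lambda$-derivative at $\lambda=\mu$ with the H\"older continuity of that derivative, exactly as in the proof of Theorem~\ref{thmlapVinterior}, and never needs to analyze the symbol at $\xi=0$.

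Your threshold argument has a genuine gap. The squared relation $|\xi|^2(\widehat\psi)_+=2m(\widehat{\phi_\mu})_++(\sigma\cdot\xi)(\widehat{\phi_\mu})_-$ shows only that a $|\xi|^{-2}$ singularity of $(\widehat\psi)_+$ at the origin is \emph{equivalent} to $(\widehat{\phi_\mu})_+(0)\neq 0$; it does not produce a contradiction. The only additional control on $\psi=R_m^+(m)\phi_\mu$ is membership in $\Hcal^{1,-l}$ for $l>1$ (even after reapplying Theorem~\ref{basiclap} with the improved weight of $\phi_\mu$), and in $\RT$ a local $c|\xi|^{-2}$ profile lies in $\Hcal^{-l}$, and $\xi_j|\xi|^{-2}$ in $\Hcal^{-l}$, for every $l>1/2$; so this regularity cannot exclude the singularity. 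Feeding back through $(\widehat\psi)_-=(2m)^{-1}\bigl[(\sigma\cdot\xi)(\widehat\psi)_+-(\widehat{\phi_\mu})_-\bigr]$ and then $(\widehat{\phi_\mu})_+=(\sigma\cdot\xi)(\widehat\psi)_-$ is circular: one recovers $(\widehat{\phi_\mu})_+(0)=(\widehat{\phi_\mu})_+(0)$ with no constraint. Without the pointwise vanishing, the bound you extract from \eqref{A0} is only $<A_m(\lambda)\phi_m,\phi_m>=O\bigl((\lambda-m)^{1/2}\bigr)$ (from the factor $\lambda r$ alone), which falls short of \eqref{asss}. This is exactly the threshold-resonance obstruction, and your sketch does not supply the mechanism to rule it out.
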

                   \begin{proof}
                        We follow the line of proof of Theorem ~\ref{thmlapVinterior} , subject to modifications needed due to the fact that we are now looking at a full neighborhood of the thresholds.

                        Without loss of generality let us consider a neighborhood of $\lambda=m.$

                         Let $1<s<\frac{2+\eps}{2}.$

                         According to the second part of Theorem
                   ~\ref{thm-lap-dirac-perturb} the limits ~\eqref{eqLAPDiracperturb} exist in $\RR\setminus\Sigma_{H_m}^V.$

                   Suppose that $\phi_\mu\in\Lcal^{2,s}(\RT,\mathbb{C}^4)$ is such
      that $\phi_\mu=-VR_m^\pm(\mu)\phi_\mu, \, $ where $\mu\in(m-\eta,m+\eta),$ for some small $\eta>0.$

      Thus $<A_m(\mu)\phi_\mu,\phi_\mu>=0.$
            For $\mu\geq m$ this means  that $(\mathcal{G}_m\phi_\mu)_+(\xi)=0$ for all $\xi$ on the sphere $|\xi|=\sqrt{\mu^2-m^2}.$

            In view of Theorem ~\ref{thm-sigmap}, we need to verify that the regularity condition  ~\eqref{asss} is satisfied under the decay assumption ~\eqref{eq-dirac-V-decay-stronger2} .

            We need to show that, for some $C>0$ depending only on $\eta,$
            \be\label{eqAmphimuholder1}
            <A_m(\lambda)\phi_\mu,\phi_\mu>\quad\leq C|\lambda-\mu|^{1+\eps},\quad \lambda,\,\mu\in (m-\eta,m+\eta).
            \ee
              Remark that $<A_m(\lambda)\phi_\mu,\phi\,\,_\mu>=0$ for $\lambda\in [-m,m].$

            We note that by Theorem  ~\ref{basiclap} we have $R_m^\pm(\mu)\phi_\mu\in \Hcal^{1,-s}(\RT,\mathbb{C}^4).$
             Therefore ~\eqref{eq-dirac-V-decay-stronger2} implies that
            $\phi_\mu=-VR_m^\pm(\mu)\phi_\mu\in \Lcal^{2,-s+3+\eps}(\RT,\mathbb{C}^4).$ It follows by ~\eqref{eqdefGm}
            that $\mathcal{G}_m\phi_\mu\in \Hcal^\theta,$ where $\theta>3+\eps-\frac{2+\eps}{2}=2+\frac12\eps.$

            In particular, the trace of $\nabla_\xi\mathcal{G}_m\phi_\mu(\xi)$ on the sphere $|\xi|=\sqrt{\lambda^2-m^2}$
            can be estimated (for $\lambda$ near $\mu$). In view of the trace Lemma \ref{tracelem} and Equation ~\eqref{eqalammu} we infer that, for $\lambda\in[m,m+\eta),$ the derivative $\frac{d}{d\lambda}<A_m(\lambda)\phi_\mu,\phi_\mu>$ is H\"{o}lder continuous and
            \be |\frac{d}{d\lambda}<A_m(\lambda)\phi_\mu,\phi_\mu>|\leq C\Big(\sqrt{\lambda^2-m^2}\Big)^{2+\eps-1-1}=C(\lambda^2-m^2)^{\frac{\eps}{2}}.\ee

             Now $$<A_m(\mu)\phi_\mu,\phi_\mu>=\frac{d}{d\lambda}<A_m(\lambda)\phi_\mu,\phi_\mu>\Big|_{\lambda=\mu}=0,$$
             so that ~\eqref{eqAmphimuholder1} follows readily , as in the conclusion of the proof of Theorem ~\ref{thmlapVinterior}.

                   \end{proof}
                   \begin{cor}\be
                    H_m^V=H_m+V(x),
                 \ee
                   where $H_m$ is the free Dirac operator given in ~\eqref{diracop} and the potential $V(x)$ is a Hermitian $4\times 4$ matrix  satisfying the decay condition ~\eqref{eq-dirac-V-decay-stronger2}.

                   Then $H_m^V$ has at most a discrete sequence of eigenvalues in $\RR.$ In particular, it has at most a finite number of eigenvalues in the ``gap'' $(-m,m).$

                   \end{cor}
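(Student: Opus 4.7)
The plan is to deduce this corollary almost immediately from the preceding theorem. Under the decay hypothesis \eqref{eq-dirac-V-decay-stronger2}, that theorem gives the identification $\widetilde{\Sigma_{H_m}^V}=\sigma_p(H_m^V)$ together with the assertion that this set is discrete in $\RR$, with every eigenvalue of finite multiplicity. In particular there is no accumulation point of $\sigma_p(H_m^V)$ anywhere in $\RR$, including at the thresholds $\lambda=\pm m$. This gives the first sentence of the corollary directly.

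For the second statement (finiteness in the gap $(-m,m)$), I would invoke a compactness argument. The gap is bounded, and its closure $[-m,m]$ is a compact subset of $\RR$. A subset of $\RR$ without accumulation points intersects any compact set in a finite set; thus
\[
  \sigma_p(H_m^V)\cap (-m,m)\;\subseteq\; \sigma_p(H_m^V)\cap[-m,m]
\]
is finite. Combining with the finite-multiplicity statement, the total spectral projection onto the gap has finite rank.

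There is no real obstacle here, since the analytical content has already been discharged in the proof of the preceding theorem: the threshold-regular estimate on the spectral derivative $A_m(\lambda)$ near $\pm m$ (obtained via Lemma \ref{tracelem} and the Foldy--Wouthuysen--Tani diagonalization), coupled with the gain of two derivatives on $\phi_\mu=-VR_m^\pm(\mu)\phi_\mu$ provided by the stronger decay $(3+\eps)$, verifies the Regularity Assumption on the Spectral Density uniformly across $\pm m$. Once this is granted, the application of Theorem \ref{thm-sigmap} yields discreteness on all of $\RR$, and the corollary is merely the observation that discreteness plus boundedness of the gap forces finiteness. The reason the previously known short-range condition $(1+\eps)$ is insufficient is precisely that without the extra two powers of decay one cannot rule out accumulation of eigenvalues \emph{at} the thresholds, even though eigenvalues remain discrete in the interior $U=\RR\setminus[-m,m]$.
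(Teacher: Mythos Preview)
Your proposal is correct and matches the paper's approach: the corollary is stated without proof there, being an immediate consequence of the preceding theorem, and your compactness observation (discreteness of $\sigma_p(H_m^V)$ in $\RR$ intersected with the bounded gap $(-m,m)$ yields a finite set) is exactly the intended one-line deduction.
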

%
%
%

                   \begin{rem}\label{remfiniteevs}
                   Using a stronger decay hypothesis (typically $|V(x)|\leq C(1+|x|)^{-5-\eps}$) it was shown in ~\cite[Proposition 2.3]{boussaid} that there are no eigenvalues of $H_m^V$ near the thresholds at $\pm m.$

                    More generally, Cojuhari ~\cite{cojuhari} established the finiteness of the number of eigenvalues in the gap under the weaker condition ~\eqref{eq-dirac-V-decay-stronger1} (which in our study served only for the discreteness of \textit{embedded} eigenvalues).

                   \end{rem}
                   \begin{rem} For a class of potentials, including the Coulomb potential, it was shown ~\cite[Theorem 4.21]{thaller} that there are actually no embedded eigenvalues in the essential spectrum $\RR\setminus(-m,m).$

                   On the other hand, for the special potential $V(x)=\gamma/|x|,\,|\gamma|\leq \frac{\sqrt{3}}{2},$ it is known (~\cite[Section 3.1]{balinsky},~\cite{dolbo2}) that there is an infinite sequence of eigenvalues in the gap $(-m,m).$

                   \end{rem}

                \section{\textbf{GLOBAL SPACETIME ESTIMATES --FREE DIRAC and MAXWELL OPERATORS}}\label{secspacetime}
                As in the case of the spectral study above, we
                deal first with spacetime estimates for the free (``unperturbed'') Dirac and Maxwell
                operators. Our treatment here follows the method used for the case of generalized wave equations in ~\cite{BA1,BA2}.

                \subsection{\textbf{SPACETIME ESTIMATES FOR THE FREE DIRAC OPERATOR }}\label{subsection-free-dirac-spacetime}.\newline

               The free Dirac operator was defined in ~\eqref{diracop}.  We  consider the unitary group associated with it
                \be\label{klg}
              iu_t=H_mu,\quad t\in\RR,
                \ee
                subject to the initial condition
                \be\label{initial}
                u(x,0)=u_0(x),\quad x\in\RT,
                \ee
                where $u,\,u_0$ are $\mathbb{C}^4-$valued
                functions.

                 We refer to ~\cite{ancona2} for Strichartz estimates for the Dirac operator with magnetic potentials. Here we formulate a global spacetime estimate in the weighted$-L^2$ framework.
                \begin{thm}\label{thmspacetimedirac}  Let $m > 0 ,\,s>1, $ and let $u (x, t)$ be the
solution to ~\eqref{klg}-\eqref{initial}. There exists a constant
$C=C_{s,m}>0,$
 such that
$$  \int\limits_\RR\int\limits_{\RT} (1 + | x |^2)^{-s} | u (x, t) |^2 d x d t \leq \\
 C  \| u_0 \|^2_0.
$$
\end{thm}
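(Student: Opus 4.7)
The plan is to invoke Kato's smoothness theory. Set $B := (1+|x|^2)^{-s/2}$, viewed as a bounded self-adjoint multiplication operator on $\Lcal^2(\RT,\mathbb{C}^4)$; then the left-hand side of the claim is exactly $\int_\RR \|B\,e^{-itH_m}u_0\|_{\Lcal^2}^2\,dt$. By Kato's theorem (cf.~Reed--Simon Vol.~IV, Thm.~XIII.25), this integral is bounded by $C\|u_0\|_0^2$ provided
\begin{equation*}
M:=\sup_{z\in\mathbb{C},\,\Im z\neq 0}\|B\,R_m(z)\,B\|_{\Lcal^2\to\Lcal^2}<\infty,
\end{equation*}
and this reduction is the first step. (Equivalently, one can proceed directly by Plancherel in $t$, as in \cite{BA1,BA2}: the Fourier transform in time of $Bu(\cdot,t)$ is, up to a multiplicative constant, a weighted incarnation of the trace of $\widehat{u_0}$ on the slowness spheres encoded in $A_m(\lambda)$, and the desired bound amounts to integrating the trace estimates of Proposition~\ref{corpropertyA} against $d\lambda$ over $\RR$.)

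The second step recasts $M$ in terms of the weighted resolvent norms controlled by Theorem~\ref{basiclap}. Since $\|Bf\|_{0,s}=\|f\|_0$, the operator $B$ is an isometric embedding $\Lcal^2\hookrightarrow\Lcal^{2,s}$ and, dually, a surjection $\Lcal^{2,-s}\twoheadrightarrow\Lcal^2$; hence
\begin{equation*}
\|B\,R_m(z)\,B\|_{\Lcal^2\to\Lcal^2}=\|R_m(z)\|_{\Lcal^{2,s}\to\Lcal^{2,-s}}.
\end{equation*}
By the second part of Theorem~\ref{basiclap}, and precisely under the hypothesis $s>1$ assumed here, the limits $R_m^\pm(\mu)$ exist and are locally bounded in $B(\Lcal^{2,s},\Hcal^{1,-s})\subseteq B(\Lcal^{2,s},\Lcal^{2,-s})$ for \emph{all} $\mu\in\RR$, the thresholds $\pm m$ included. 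Combined with the elementary bound $\|R_m(z)\|_{\Lcal^2\to\Lcal^2}\leq|\Im z|^{-1}$ for $|\Im z|$ large and the analyticity of $z\mapsto R_m(z)$ in each open half-plane, a standard maximum-modulus (Phragm\'en--Lindel\"of) argument then delivers the required uniform bound $M<\infty$.

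The principal obstacle is the threshold behavior at $\lambda=\pm m$. The spectral-density formula~\eqref{A0} carries the singular factor $|\lambda|/\sqrt{\lambda^2-m^2}$, which in the weighted trace estimate~\eqref{Aestimate1} appears multiplied by $(\lambda^2-m^2)^{s-1}$. This compensated expression is bounded (and indeed vanishes) at $\pm m$ precisely when $s>1$, which is why the theorem demands the strict inequality $s>1$ rather than the weaker $s>1/2$ that suffices for the LAP on intervals interior to the continuous spectrum. Once the uniform bound $M<\infty$ is in hand, Kato's theorem yields the spacetime estimate with $C_{s,m}=2M$.
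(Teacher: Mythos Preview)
Your route via Kato smoothness is a legitimate alternative to the paper's direct duality/Plancherel argument (indeed, the parenthetical you include \emph{is} essentially the paper's proof). However, the main argument as you have written it has a genuine gap.

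The condition you invoke, $M=\sup_{\Im z\neq 0}\|B\,R_m(z)\,B\|<\infty$, is \emph{supersmoothness}, which is strictly stronger than what Reed--Simon~XIII.25 actually requires. You then try to deduce $M<\infty$ from Theorem~\ref{basiclap} together with $\|R_m(z)\|\le |\Im z|^{-1}$ and a Phragm\'en--Lindel\"of argument. This does not work: Theorem~\ref{basiclap} (second part) asserts only that $R_m^\pm(\mu)$ is \emph{locally} bounded in $\mu\in\RR$, and you have supplied no control as $|\Re z|\to\infty$ along (or near) the real axis. Phragm\'en--Lindel\"of in a horizontal strip needs a bound on the full lower edge, not merely on compact subsets of it; without a high-energy estimate for $\|R_m^\pm(\mu)\|_{\Lcal^{2,s}\to\Lcal^{2,-s}}$ as $|\mu|\to\infty$, the maximum-modulus step is empty.

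The clean fix is to use the correct Kato criterion, namely $\sup_{\Im z>0}\|B\,[R_m(z)-R_m(\bar z)]\,B\|<\infty$. Since $\frac{1}{2i}(R_m(z)-R_m(\bar z))=\Im R_m(z)\ge 0$, this norm equals $\sup_{\|f\|_0=1}\langle \Im R_m(z)\,Bf,Bf\rangle$, and the Poisson representation
\[
\langle \Im R_m(\mu+i\eps)g,g\rangle=\int_{\RR}\frac{\eps}{(\lambda-\mu)^2+\eps^2}\,\langle A_m(\lambda)g,g\rangle\,d\lambda\le \pi\,\sup_{\lambda}\langle A_m(\lambda)g,g\rangle
\]
bounds it by $\pi\sup_\lambda\|A_m(\lambda)\|_{B(\Lcal^{2,s},\Lcal^{2,-s})}$. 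That supremum is finite precisely by Proposition~\ref{corpropertyA}(2) (for $s>1$ the spectral density is \emph{uniformly} bounded on $\RR$, not just locally), which is where your threshold discussion---the compensation of $|\lambda|/\sqrt{\lambda^2-m^2}$ by $(\lambda^2-m^2)^{s-1}$---actually enters. With this correction your Kato-smoothness proof goes through; as written, the Phragm\'en--Lindel\"of passage does not.
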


\begin{proof}

The solution $u (x, t)$ can be represented as
$$ u =\exp(-itH_m)u_0.  $$
 In the proof we find it clearer to use the separate notations  $( \cdot, \cdot), [ \cdot, \cdot ]$
 for the scalar
products in $\Lcal^2 (\RT,\mathbb{C}^4), \Lcal^2
(\RR^{4},\mathbb{C}^4)$, respectively.

We denote by $$\tilde v (x, \tau) = (2 \pi)^{- \frac12}
\int\limits_\RR v (x, t) e^{- i t \tau} d t$$ the partial Fourier
transform with respect to $t.$

The spectral derivative $A_m$ was defined in equation ~\eqref{Aestimate1}.

 To estimate $u(x,t) $  we use
a duality argument. Take $w (x,t) \in C^\infty_0
(\RR^4,\mathbb{C}^4)$.  Then,
$$ \aligned &[ u, w ] = \int\limits^\infty_{- \infty}
dt \int\limits_{\RT} <(e^{-itH_m} u_0)(x,t),
w(x, t)>_{\CC^4} d x  \\
&= \int\limits^\infty_{-\infty} < A_m(\lambda)u_0,
\int\limits^\infty_{- \infty} e^{- i t \lambda} w (\cdot, t) dt > d \lambda \\
&= (2 \pi)^{1/2} \int\limits^\infty_{-\infty} < A_m(\lambda) u_0,
\tilde w (\cdot, \lambda) > d \lambda. \endaligned , $$
where in the last two terms $<,>$ is the $(\Lcal^{2,-s}(\RT,\CC^4),\Lcal^{2,s}(\RT,\CC^4))$ pairing.

    We  note that by the spectral theorem
\be\int\limits^\infty_{-\infty}  <A_m (\lambda) f, f > d
           \lambda = \| f \|_0^2.\ee

 The
Cauchy-Schwarz inequality yields

$$ \aligned \big\vert [u,w] \big\vert \hspace{250pt} \\\leq ( 2 \pi)^{1/2}
\left( \int\limits^\infty_{-\infty} < A_m(\lambda) u_0 , u_0 > d
\lambda \right)^{1/2} \cdot \left( \int\limits^\infty_{-\infty}
< A_m(\lambda) \tilde w (\cdot, \lambda), \tilde w (\cdot, \lambda) > d \lambda \right)^{1/2}  \\
= (2 \pi)^{1/2} \| u_0\|_2 \cdot \left(
\int\limits^\infty_{-\infty} < A_m (\lambda) \tilde w (\cdot,
\lambda), \tilde w (\cdot, \lambda)
> d \lambda \right)^{1/2}. \endaligned $$
and recalling
   ~\eqref{Aestimate1}
 we obtain
$$ \aligned \big\vert [u,w] \big\vert \hspace{250pt}
 \\
\leq (2 \pi)^{1/2} C\| u_0\|_0\cdot \Big(\int\limits_{|\lambda|>m}
\min(\frac{|\lambda|}{\sqrt{\lambda^2-m^2}},|\lambda|(\lambda^2-m^2)^{s-1})
\cdot \| \tilde w (\cdot, \lambda) \|^2_{0,s} d \lambda\Big)^{\frac12}.
  \endaligned $$
  By the Plancherel theorem, using $s>1,$
$$ \big\vert [u,w] \big\vert \leq (2 \pi)^{1/2}
C  \| u_0\|_0 \left( \int\limits_\RR \| w (\cdot, t ) \|^2_{0,s}
dt \right)^{1/2}.  $$

 Let $f (x,t) \in C^\infty_0
(\RR^4,\mathbb{C}^4),$ and take  $w (x,t) = ( 1 + |x|^2)^{-
\frac{s}{2}} f (x,t)$, so that
$$\int\limits_\RR \| w (\cdot, t )
\|^2_{0,s} dt = \| f \|^2_0.$$ We infer that
$$ \big\vert [(1 + | x|^2)^{- \frac{s}{2}} u , f ]
\big\vert \leq (2 \pi)^{1/2} C \cdot  \| u_0 \|_0 \cdot \| f \|_0,
$$
which concludes the proof of the theorem.
\end{proof}
\begin{rem} Note that we need only (for the unperturbed case)
                $s>1,$ including the propagation near the threshold. This is
                to be compared with Theorem 1.1 and
                Proposition 2.1 in ~\cite{boussaid}, where
                $s>\frac52,$ is assumed (including a potential).
                \end{rem}

                 \subsection{\textbf{SPACETIME ESTIMATES FOR THE FREE MAXWELL OPERATOR }}\label{subsection-free-maxwell-spacetime}.\newline

                 The free Maxwell operator $L_{maxwell}$ was introduced in  ~\eqref{eq:maxwell0}.

                  We  consider the unitary group associated with the operator,
                \be\label{eqmaxwellspacetime}
              iu_t=L_{maxwell}u,\quad t\in\RR,
                \ee
                subject to the initial condition
                \be\label{initialmaxwell}
                u(x,0)=u_0(x),\quad x\in\RT,
                \ee
                where $u,\,u_0$ are $\mathbb{C}^6-$valued
                functions.

                We further assume that
                \be\label{equoperpkernel} u_0\in (I-\mathcal{P})\Lcal^2(\RT,\mathbb{C}^6),\ee
                  where $\mathcal{P}$ is the orthogonal projection on $ker(L_{maxwell})$ (see Equation ~\eqref{dom Lmaxwell}).

                  In view of Remark ~\ref{remTETM} the orthogonality condition means that if  $u_0(x)=\binom{E_0(x)}{B_0(x)}$  (with values in $\mathbb{C}^6$) then  Fourier transform $ \binom{\widehat{E_0}(\xi)}{\widehat{B_0}(\xi)}\in span \{\Upsilon_{\pm }\},$ for every $0 \neq \xi\in\RT.$ In particular
          $$<\widehat{E_0}(\xi),\xi>_{\CC^3}=0,\quad <\widehat{B_0}(\xi),\xi>_{\CC^3}=0, \quad \xi\in\RT\setminus\set{0},$$
          so that
           $E_0$ (resp. $B_0$) is a ``TE-mode'' (resp. ``TM-mode'').

                  \begin{thm}\label{thmspacetimemaxwell}  Let $\,s>\frac12, $ and let $u (x, t)$ be the
solution to ~\eqref{eqmaxwellspacetime}-\eqref{initialmaxwell}-\eqref{equoperpkernel}. There exists a constant
$C=C_{s}>0,$
 such that
$$  \int\limits_\RR\int\limits_{\RT} (1 + | x |^2)^{-s} | u (x, t) |^2 d x d t \leq \\
 C  \| u_0 \|^2_0.
$$
\end{thm}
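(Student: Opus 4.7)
The plan is to mimic almost verbatim the duality argument of Theorem~\ref{thmspacetimedirac}, with two modifications adapted to the Maxwell setting: the projection onto the orthogonal complement of $\ker(L_{maxwell})$ is used to absorb the contribution of the zero eigenvalue, and the threshold behaviour is now encoded in the estimate \eqref{Aestimate1max} with $\widetilde{A}(0)=0$, which is uniformly bounded for $s>\tfrac12$ (so there is no integrability loss as $\lambda\to 0$, and hence no need for $s>1$).

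First I would represent the solution as $u(x,t)=e^{-itL_{maxwell}}u_0$, where $u_0\in (I-\mathcal{P})\Lcal^2(\RT,\CC^6)$ so that $u(\cdot,t)$ remains in this reducing subspace for all $t$. For a test function $w\in C^\infty_0(\RR^4,\CC^6)$, letting $[\cdot,\cdot]$ denote the $\Lcal^2(\RR^4,\CC^6)$ scalar product and $\tilde w(x,\tau)=(2\pi)^{-1/2}\int_\RR w(x,t)e^{-it\tau}dt$ the partial Fourier transform in time, the spectral theorem (applied to the restriction of $L_{maxwell}$ to $(I-\mathcal{P})\Lcal^2$, whose spectral derivative is $\widetilde{A}(\lambda)$) yields
\begin{equation*}
[u,w]=(2\pi)^{1/2}\int_\RR \bigl\langle \widetilde{A}(\lambda)u_0,\,\tilde w(\cdot,\lambda)\bigr\rangle\,d\lambda,
\end{equation*}
with $\langle\cdot,\cdot\rangle$ the $(\Lcal^{2,-s},\Lcal^{2,s})$ pairing.

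Next I would apply Cauchy--Schwarz, using that $\widetilde{A}(\lambda)$ is a non-negative form:
\begin{equation*}
\bigl|[u,w]\bigr|\leq (2\pi)^{1/2}\Bigl(\int_\RR\langle \widetilde{A}(\lambda)u_0,u_0\rangle d\lambda\Bigr)^{1/2}\cdot\Bigl(\int_\RR\langle \widetilde{A}(\lambda)\tilde w(\cdot,\lambda),\tilde w(\cdot,\lambda)\rangle d\lambda\Bigr)^{1/2}.
\end{equation*}
The first factor equals $\|u_0\|_0$ by the spectral theorem (crucially, because $u_0\perp \ker(L_{maxwell})$ there is no atom at $\lambda=0$ to worry about). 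For the second factor I would invoke the uniform bound \eqref{Aestimate1max}, rewritten as
\begin{equation*}
\langle\widetilde{A}(\lambda) f,f\rangle \leq C\min(1,|\lambda|^{2s-1})\|f\|_{0,s}^2\leq C\|f\|_{0,s}^2,\qquad s>\tfrac12,
\end{equation*}
which is globally bounded in $\lambda\in\RR$ (contrast the Dirac case, where the analogous bound contains $|\lambda|/\sqrt{\lambda^2-m^2}$ and requires $s>1$ to be globally bounded). Hence, by Plancherel in time,
\begin{equation*}
\int_\RR\langle \widetilde{A}(\lambda)\tilde w(\cdot,\lambda),\tilde w(\cdot,\lambda)\rangle d\lambda\leq C\int_\RR\|\tilde w(\cdot,\lambda)\|_{0,s}^2\,d\lambda=C\int_\RR\|w(\cdot,t)\|_{0,s}^2\,dt.
\end{equation*}

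Finally, exactly as in the Dirac proof, I would specialize to $w(x,t)=(1+|x|^2)^{-s/2}f(x,t)$ with $f\in C^\infty_0(\RR^4,\CC^6)$, obtaining $\int_\RR\|w(\cdot,t)\|_{0,s}^2\,dt=\|f\|_0^2$, and conclude
\begin{equation*}
\bigl|\bigl[(1+|x|^2)^{-s/2}u,\,f\bigr]\bigr|\leq C\|u_0\|_0\cdot\|f\|_0,
\end{equation*}
which gives the desired estimate by duality. The only conceptual point requiring care is the use of the assumption $u_0\in(I-\mathcal{P})\Lcal^2$: without it, the kernel of $L_{maxwell}$ would produce a static (non-decaying in $t$) contribution whose weighted $L^2$ norm over $\RR_t\times\RT$ is infinite, so the estimate would fail; the orthogonality hypothesis is exactly what makes the spectral identity $\int\langle\widetilde{A}(\lambda)u_0,u_0\rangle d\lambda=\|u_0\|_0^2$ valid and removes the threshold obstruction at $\lambda=0$ from the estimate.
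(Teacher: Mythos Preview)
Your proposal is correct and follows essentially the same duality argument as the paper's own proof, with the same key steps: spectral representation via $\widetilde{A}(\lambda)$, Cauchy--Schwarz, the bound \eqref{Aestimate1max}, Plancherel in $t$, and the substitution $w=(1+|x|^2)^{-s/2}f$. Your added commentary on why $s>\tfrac12$ suffices here (versus $s>1$ in the Dirac case) and on the role of the orthogonality assumption $u_0\perp\ker(L_{maxwell})$ is accurate and in fact clarifies points the paper leaves implicit.
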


\begin{proof}

The solution $u (x, t)$ can be represented as
$$ u =\exp(-itL_{maxwell})u_0.  $$
 As in the  proof of Theorem ~\ref{thmspacetimedirac}, we find it clearer to use the separate notations  $( \cdot, \cdot), [ \cdot, \cdot ]$
 for the scalar
products in $\Lcal^2 (\RT,\mathbb{C}^6), \Lcal^2
(\RR^{4},\mathbb{C}^6)$, respectively.

We denote by $$\tilde v (x, \tau) = (2 \pi)^{- \frac12}
\int\limits_\RR v (x, t) e^{- i t \tau} d t$$ the partial Fourier
transform with respect to $t.$

The spectral derivative $\widetilde{A}$ was defined in equation ~\eqref{Aestimate1max}.

 To estimate $u(x,t) $  we use
a duality argument. Take $w (x,t) \in C^\infty_0
(\RR^4,\mathbb{C}^6))$.  Then,
$$ \aligned &[ u, w ] = \int\limits^\infty_{- \infty}
dt \int\limits_{\RT} <(e^{-itL_{maxwell}} u_0)(x,t),
w(x, t)>_{\CC^6} d x  \\
&= \int\limits^\infty_{-\infty} < \widetilde{A}(\lambda)u_0,
\int\limits^\infty_{- \infty} e^{- i t \lambda} w (\cdot, t) dt > d \lambda \\
&= (2 \pi)^{1/2} \int\limits^\infty_{-\infty} <\widetilde{A}(\lambda) u_0,
\tilde w (\cdot, \lambda) > d \lambda. \endaligned  $$

    We  note that by the spectral theorem
\be\int\limits^\infty_{-\infty}  <\widetilde{A} (\lambda) f, f > d
           \lambda = \| f \|_0^2.\ee

 The
Cauchy-Schwarz inequality yields

$$ \aligned \big\vert [u,w] \big\vert \hspace{250pt} \\\leq ( 2 \pi)^{1/2}
\left( \int\limits^\infty_{-\infty} < \widetilde{A}(\lambda) u_0 , u_0 > d
\lambda \right)^{1/2} \cdot \left( \int\limits^\infty_{-\infty}
< \widetilde{A}(\lambda) \tilde w (\cdot, \lambda), \tilde w (\cdot, \lambda) > d \lambda \right)^{1/2}  \\
= (2 \pi)^{1/2} \| u_0\|_2 \cdot \left(
\int\limits^\infty_{-\infty} <\widetilde{A} (\lambda) \tilde w (\cdot,
\lambda), \tilde w (\cdot, \lambda)
> d \lambda \right)^{1/2}. \endaligned $$
and recalling
   ~\eqref{Aestimate1max}
 we obtain
$$ \aligned \big\vert [u,w] \big\vert \hspace{250pt}
 \\
\leq (2 \pi)^{1/2} C\| u_0\|_0\cdot \Big(\int\limits^\infty_{-\infty}
\min(1,|\lambda|^{2s-1})
\cdot \| \tilde w (\cdot, \lambda) \|^2_{0,s} d \lambda\Big)^\frac12.
  \endaligned $$
  By the Plancherel theorem, using $s>\frac12,$
$$ \big\vert [u,w] \big\vert \leq (2 \pi)^{1/2}
C  \| u_0\|_0 \left( \int\limits_\RR \| w (\cdot, t ) \|^2_{0,s}
dt \right)^{1/2}.  $$

 Let $f (x,t) \in C^\infty_0
(\RR^4,\mathbb{C}^6),$ and take  $w (x,t) = ( 1 + |x|^2)^{-
\frac{s}{2}} f (x,t)$, so that
$$\int\limits_\RR \| w (\cdot, t )
\|^2_{0,s} dt = \| f \|^2_0.$$ We infer that
$$ \big\vert [(1 + | x|^2)^{- \frac{s}{2}} u , f ]
\big\vert \leq (2 \pi)^{1/2} C \cdot  \| u_0 \|_0 \cdot \| f \|_0,
$$
which concludes the proof of the theorem.
\end{proof}

  \section{\textbf{GLOBAL SPACETIME ESTIMATES --STRONGLY PROPAGATIVE HOMOGENEOUS SYSTEMS}}\label{secspacetimestrongprop}

 We now address the spacetime decay estimates for \textit{homogeneous} strongly propagative operators $L_0=\suml_{j=1}^nM_j^0D_j$ (see ~\eqref{systemL0}), for which the LAP was stated in Theorem ~\ref{basiclapstrongprop}. Recall that both the Dirac (zero mass) and Maxwell systems belong to this class, but the singular set $Z$ (defined in ~\eqref{eqdefZ}) is empty in both cases.

                  We  consider the associated unitary group
                \be\label{eqstpropspacetime}
              iu_t=L_{0,hom}u,\quad t\in\RR,
                \ee
                subject to the initial condition
                \be\label{initialstprop}
                u(x,0)=u_0(x),\quad x\in\Rn,
                \ee
                where $u,\,u_0$ are $\mathbb{C}^K-$valued
                functions.

                We further assume that the initial function is orthogonal to the ``stationary waves'' of the system, namely,
                \be\label{equoperpkernelstprop} u_0\in (I-\mathcal{P})\Lcal^2(\Rn,\mathbb{C}^K),\ee
                  where $\mathcal{P}$ is the orthogonal projection on $ker(L_{0,hom}).$

                   Clearly the solution $u(x,t)=e^{-itL_{0,hom}}u_0$ satisfies $u(\cdot,t)\in(I-\mathcal{P})\Lcal^2(\Rn,\mathbb{C}^K),\,t\in\RR.$

                   It is represented by the Fourier integral operator
                   \be\label{Represente-itL_0}
                   u(x,t)=(2\pi)^{-\frac{n}{2}}\int\limits_{\Rn}e^{-it\suml_{j=1}^n M^0_j\xi_j}e^{-i<\xi, x>}\widehat{u_0}(\xi)d\xi.
                   \ee
                   However, this explicit expression does not easily lend itself to asymptotic analysis by classical methods of geometric optics; the non-commutativity of the matrices $\set{M^0_j}_{j=1}^n$ requires a very detailed study of the algebraic structure of the eigenvalue manifolds and their intersections. Hence various restrictive hypotheses need to be imposed, even in the case of constant coefficients.

                   There is extensive literature concerning the (large-time) asymptotic behavior of solutions of such systems and their perturbations, using primarily the geometric optics approach. It is beyond the scope of the present paper to present an exhaustive account of these works, and we refer to ~\cite{rauch} and references therein.  In particular it is assumed there that the system is uniformly propagative ~\cite[Assumption (1.3)]{rauch}.   The rate of decay in time of the $L^2$ norm of scattered solutions in balls is shown  ~\cite[Theorem 1]{rauch} to be $O((\log t)^{-1}).$ Confining to the constant coefficient case, the global estimate ~\eqref{eqspacetimeestunifprop} obtained below yields a faster rate of decay (in integral sense).

       The presence of the singular set $Z$ makes our spacetime estimate somewhat more delicate. Recall Definition
       ~\ref{defnUpsilon} of the closed subspace $\Upsilon^s_Z\subseteq \Lcal^{2,s}(\Rn,\mathbb{C}^K),$ and note also Remark ~\ref{rembigZ}.
       We now define the closed subspace $\Upsilon_Z\subseteq \Lcal^2(\Rn,\CC^K)$ as the inverse Fourier transform of $(I-\Delta)^{\frac{s}{2}}\widehat{\Upsilon^s_Z},$ namely
  \be\Upsilon_Z=(1+|x|^2)^{\frac{s}{2}}\Upsilon^s_Z=\set{g\,/ \,g(x)=(1+|x|^2)^{\frac{s}{2}}h(x)\,\,\mbox{for some}\,\,h\in \Upsilon^s_Z}.\ee
     Let $\mathcal{E}:\Lcal^2(\Rn,\CC^K)\hookrightarrow \Upsilon_Z$ be the orthogonal projection unto this subspace.

%
%


                  \begin{thm}\label{thmspacetimeeststrprop} Assume that $L_{0,hom}$ is strongly propagative.  Let $\,s>\frac12, $ and let $u (x, t)$ be the
solution to ~\eqref{eqstpropspacetime}-\eqref{initialstprop}-\eqref{equoperpkernelstprop}.

   Define $u_Z(x,t),\,t\in\RR,$ by
   $$u_Z(x,t)=\mathcal{E}[(1+|x|^2)^{-\frac{s}{2}}  u (x, t)].$$
 Then there exists a constant
$C=C_{s,n}>0,$
 such that
\be\label{eqspacetimeeststrprop}  \int\limits_{\RR}\int\limits_{\Rn}  | u_Z (x, t) |^2 dx  d t \leq \\
 C  \| u_0 \|^2_0.
\ee
\end{thm}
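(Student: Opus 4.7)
The strategy mirrors the duality argument used in Theorem \ref{thmspacetimedirac} and Theorem \ref{thmspacetimemaxwell}, with the only new ingredient being the presence of the singular set $Z$, which is exactly what forces the extra projection $\mathcal{E}$ in the definition of $u_Z$. The plan is to test $u_Z$ against an arbitrary $f\in C_0^\infty(\Rn\times\RR,\CC^K)$, reduce the pairing to an integral against $u$ itself with a carefully chosen weight $w$, and then run the spectral-derivative/Plancherel estimate as in the previous section.

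First, I would rewrite the pairing. Since $\mathcal{E}$ is an orthogonal projection in $\Lcal^2(\Rn,\CC^K)$ acting in the spatial variable only, for any $f\in C_0^\infty(\Rn\times\RR,\CC^K)$ one has
\[
[u_Z,f] \;=\; \int_\RR \bigl(\mathcal{E}\bigl[(1+|x|^2)^{-\frac{s}{2}}u(\cdot,t)\bigr],\,f(\cdot,t)\bigr)\,dt
\;=\; [\,u,\,w\,],
\]
where $w(x,t)=(1+|x|^2)^{-\frac{s}{2}}(\mathcal{E}f)(x,t)$. By the very definition of $\Upsilon_Z=(1+|x|^2)^{s/2}\Upsilon^s_Z$, for every fixed $t$ the function $w(\cdot,t)$ lies in $\Upsilon^s_Z$, and $\|w(\cdot,t)\|_{0,s}=\|\mathcal{E}f(\cdot,t)\|_0\leq\|f(\cdot,t)\|_0$.

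Second, I would use the unitary group representation $u(\cdot,t)=e^{-itL_{0,hom}}u_0$, the spectral theorem, and the partial Fourier transform $\tilde w(x,\lambda)=(2\pi)^{-1/2}\int_\RR w(x,t)e^{-it\lambda}dt$ exactly as in the proof of Theorem \ref{thmspacetimedirac}, to obtain
\[
[u,w]=(2\pi)^{1/2}\int_\RR\bigl<\widetilde{A_{0,hom}}(\lambda)u_0,\tilde w(\cdot,\lambda)\bigr>\,d\lambda,
\]
where the pairing is the $(\Lcal^{2,-s},\Upsilon^s_Z)$ pairing (legitimate for $\tilde w(\cdot,\lambda)\in\Upsilon^s_Z$). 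Then Cauchy--Schwarz with respect to the nonnegative quadratic form $<\widetilde{A_{0,hom}}(\lambda)\cdot,\cdot>$, together with the spectral-theorem identity
\[
\int_\RR\bigl<\widetilde{A_{0,hom}}(\lambda)u_0,u_0\bigr>\,d\lambda=\|(I-\mathcal{P})u_0\|_0^2=\|u_0\|_0^2
\]
(using the hypothesis $u_0\in(I-\mathcal{P})\Lcal^2$), reduces the problem to bounding
$\int_\RR<\widetilde{A_{0,hom}}(\lambda)\tilde w(\cdot,\lambda),\tilde w(\cdot,\lambda)>\,d\lambda$.

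Third, I would invoke the global uniform bound \eqref{eqestAohom}, $\|\widetilde{A_{0,hom}}(\lambda)\|_{B(\Upsilon^s_Z,\Lcal^{2,-s})}\leq C$ for $\lambda\neq 0$, to get
\[
\int_\RR\bigl<\widetilde{A_{0,hom}}(\lambda)\tilde w(\cdot,\lambda),\tilde w(\cdot,\lambda)\bigr>d\lambda
\;\leq\;C\int_\RR\|\tilde w(\cdot,\lambda)\|_{0,s}^2\,d\lambda
\;=\;C\int_\RR\|w(\cdot,t)\|_{0,s}^2\,dt
\;\leq\;C\|f\|_0^2,
\]
where the middle equality is Plancherel in $t$ (applied componentwise in $x$) and the last step uses the bound on $\|w(\cdot,t)\|_{0,s}$ from the first paragraph. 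Putting the estimates together yields $|[u_Z,f]|\leq C'\|u_0\|_0\,\|f\|_0$ for all test $f$, which by duality gives \eqref{eqspacetimeeststrprop}.

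The main obstacle, and the whole point of the formulation via $u_Z$ and $\mathcal{E}$, is precisely that the global bound on $\widetilde{A_{0,hom}}(\lambda)$ in \eqref{eqestAohom} is available only as a map from $\Upsilon^s_Z$ (not from the full $\Lcal^{2,s}$). The substitution $w=(1+|x|^2)^{-s/2}\mathcal{E}f$ is engineered so that $w(\cdot,t)$, and consequently $\tilde w(\cdot,\lambda)$ for almost every $\lambda$, sit inside $\Upsilon^s_Z$, allowing the bound to be applied; and $\mathcal{E}$ is self-adjoint and a contraction, so no loss occurs in the reduction $[u_Z,f]=[u,w]$. If Conjecture \ref{conjecdense} holds, then $\Upsilon_Z=\Lcal^2$, $\mathcal{E}=I$, and $u_Z=(1+|x|^2)^{-s/2}u$, recovering an estimate of the classical weighted $L^2_{t,x}$ type as in Theorems \ref{thmspacetimedirac} and \ref{thmspacetimemaxwell}.
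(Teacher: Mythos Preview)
Your proposal is correct and follows essentially the same duality argument as the paper's proof: the paper first restricts $\widehat{u_0}$ and $\widehat{w}$ to $C^\infty_0(\Rn\setminus\Zbar)$, derives the bound $|[(1+|x|^2)^{-s/2}u,v]|\le C\|u_0\|_0\|v\|_0$ for $v\in\Lcal^2(\RR_t,\Upsilon_Z)$ via \eqref{eqestAohom} and Plancherel, and then extends by closure---which is exactly your computation $[u_Z,f]=[u,w]$ read backwards through the self-adjointness of $\mathcal{E}$. The only point you should make explicit is the initial density reduction $\widehat{u_0}\in C^\infty_0(\Rn\setminus\Zbar,\CC^K)$ (as the paper does in \eqref{equ0inUpsilons}), since $\widetilde{A_{0,hom}}(\lambda)u_0$ is defined only for $u_0\in\Upsilon^s_Z$, not for arbitrary $u_0\in\Lcal^2$.
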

\begin{rem} Note that $u_Z(x,t)\in \Upsilon_Z$ implies $(1+|x|^2)^{-\frac{s}{2}}u_Z(x,t)\in \Upsilon^s_Z$  for every $t\in\RR.$ Fix $t=t_0\in\RR.$ In light of Definition  ~\ref{defnUpsilon} , given $\eta>0$ there exists $\psi_\eta\in\Lcal^{2,s}(\Rn,\CC^K)$ so that $\widehat{\psi_\eta}\in  C^\infty_0(\Rn\setminus\Zbar,\CC^K)$ and
  $$\int\limits_{\Rn} (1+|x|^2)^s|(1+|x|^2)^{-\frac{s}{2}}u_Z(x,t_0)-\psi_\eta(x)|^2dx\leq\eta^2.$$
  Thus
  $$\int\limits_{\Rn} |u_Z(x,t_0)-(1+|x|^2)^{\frac{s}{2}}\psi_\eta(x)|^2dx\leq\eta^2,$$
  and by the Plancherel theorem
   $$\int\limits_{\Rn} |\widehat{u_Z}(\xi,t_0)-(I-\Delta)^{\frac{s}{2}}\widehat{\psi_\eta}(\xi)|^2d\xi\leq\eta^2.$$
The function $u_Z(x,t_0)$ is therefore the part of $u(x,t_0)$ whose Fourier transform is in the closure (in $\Lcal^2(\Rn,\CC^K)$) of the range of $(I-\Delta)^{\frac{s}{2}}$ acting on smooth functions ``supported away'' from the singular set $\Zbar.$


\end{rem}

\begin{proof}[\textbf{Proof of theorem:}]
By density (in $\Lcal^2(\Rn,\CC^K)$)  we may assume
\be\label{equ0inUpsilons}\widehat{u_0}(\xi)  \in C^\infty_0(\Rn\setminus\Zbar,\CC^K).\ee

The solution $u (x, t)$ can be represented as
$$ u =\exp(-itL_{0,hom})u_0.  $$
   In particular, $\widehat{u}(\cdot,t)\in C^\infty_0(\Rn\setminus\Zbar,\CC^K)$ for all $t\in\RR.$

 As in the  proofs of Theorems ~\ref{thmspacetimedirac} and ~\ref{thmspacetimemaxwell}, we find it clearer to use the separate notations  $( \cdot, \cdot), [ \cdot, \cdot ]$
 for the scalar
products in $\Lcal^2 (\Rn,\mathbb{C}^K), \Lcal^2
(\RR^{n+1},\mathbb{C}^K)$, respectively.

We denote by $$\tilde{\gamma} (x, \tau) = (2 \pi)^{- \frac12}
\int\limits_\RR \gamma(x, t) e^{- i t \tau} d t$$ the partial Fourier
transform with respect to $t.$

The spectral derivative $\widetilde{A_{0,hom}}(\lambda)\in B(\Upsilon^s_Z,\Lcal^{2,-s}(\Rn,\CC^K)),\,\,s>\frac12,$ was defined in Equation ~\eqref{eqderivE0outZ}.

 To estimate $u(x,t) $  we use
a duality argument , as in the cases of the Dirac and Maxwell systems. Take $w (x,t) \in C^\infty
(\RR^{n+1},\mathbb{C}^K),$ so that
\be\label{eqwxtinUpsilons}\widehat{w} (\xi,t) \in C^\infty_0((\Rn\setminus\Zbar)\times \RR,\CC^K).\ee  Then,
\be\label{eqw1xtinUpsilon} \aligned &[ u, w ] = \int\limits^\infty_{- \infty}
dt \int\limits_{\RT} <(e^{-itL_{0,hom}} u_0)(x,t),
w(x, t)>_{\CC^K} d x  \\
&= \int\limits^\infty_{-\infty} < \widetilde{A_{0,hom}}(\lambda)u_0,
\int\limits^\infty_{- \infty} e^{- i t \lambda} w (\cdot, t) dt > d \lambda \\
&= (2 \pi)^{1/2} \int\limits^\infty_{-\infty} <\widetilde{A_{0,hom}}(\lambda) u_0,
\tilde w (\cdot, \lambda) > d \lambda. \endaligned  \ee

    We  note that by the spectral theorem
\be\int\limits^\infty_{-\infty}  <\widetilde{A_{0,hom}} (\lambda) u_0, u_0 > d
           \lambda = \| u_0 \|_0^2.\ee

 The
Cauchy-Schwarz inequality yields

$$ \aligned \big\vert [u,w] \big\vert \hspace{250pt} \\\leq ( 2 \pi)^{1/2}
\left( \int\limits^\infty_{-\infty} < \widetilde{A_{0,hom}}(\lambda) u_0 , u_0 > d
\lambda \right)^{1/2} \cdot \left( \int\limits^\infty_{-\infty}
< \widetilde{A_{0,hom}}(\lambda) \tilde w (\cdot, \lambda), \tilde w (\cdot, \lambda) > d \lambda \right)^{1/2}  \\
= (2 \pi)^{1/2} \| u_0\|_2 \cdot \left(
\int\limits^\infty_{-\infty} <\widetilde{A_{0,hom}} (\lambda) \tilde w (\cdot,
\lambda), \tilde w (\cdot, \lambda)
> d \lambda \right)^{1/2}. \endaligned $$
and recalling
   ~\eqref{eqestAohom} and the fact  that the norm of $\Upsilon^s_Z$ is the $\Lcal^{2,s}$ norm,
 it follows that
$$  \big\vert [u,w] \big\vert
 \leq (2 \pi)^{1/2} C\| u_0\|_0\cdot \int\limits^\infty_{-\infty}
 \| \tilde w (\cdot, \lambda) \|^2_{0,s} d \lambda.
   $$
  Invoking  the Plancherel theorem, this estimate leads to
\be\label{equwestUpsilon} \big\vert [u,w] \big\vert \leq (2 \pi)^{1/2}
C  \| u_0\|_0 \left( \int\limits_\RR \| w (\cdot, t ) \|^2_{0,s}
dt \right)^{1/2}.  \ee
  The estimate ~\eqref{equwestUpsilon} was obtained under the condition ~\eqref{eqwxtinUpsilons}, and by closure it holds for all $w(x,t)\in \Lcal^2(\RR_t, \Upsilon^s_Z).$

  Now note that $v(x,t)=(1+|x|^2)^{\frac{s}{2}}w(x,t)\in \Lcal^2(\RR_t,\Upsilon_Z).$

  The estimate ~\eqref{equwestUpsilon} can be rewritten as
  \be\label{equwestUpsilon1} \big\vert [(1+|x|^2)^{-\frac{s}{2}}u,v] \big\vert \leq (2 \pi)^{1/2}
C  \| u_0\|_0 \left( \int\limits_\RR \| v (\cdot, t ) \|^2_{0}
dt \right)^{1/2},\quad  \,\,\forall\, v(\cdot,t)\in \Lcal^2(\RR_t,\Upsilon_Z), \ee
which clearly entails ~\eqref{eqspacetimeeststrprop}.
%


\end{proof}
\begin{cor}  Assume that $L_{0,hom}$ is uniformly propagative (Definition ~\ref{defn-uniform-prop}).  Let $\,s>\frac12, $ and let $u (x, t)$ be the
solution to ~\eqref{eqstpropspacetime}-\eqref{initialstprop}-\eqref{equoperpkernelstprop}.

 Then there exists a constant
$C=C_{s,n}>0,$
 such that
\be\label{eqspacetimeestunifprop}  \int\limits_{\RR}\int\limits_{\Rn} (1+|x|^2)^{-s} | u (x, t) |^2 dx  d t \leq \\
 C  \| u_0 \|^2_0.
\ee
\end{cor}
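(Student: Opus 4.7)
The strategy is simply to reduce the corollary to the preceding theorem by verifying that, in the uniformly propagative case, the projection $\mathcal{E}$ appearing in that theorem is the identity, so that $u_Z(x,t) = (1+|x|^2)^{-s/2}u(x,t).$ Once this is established, the estimate \eqref{eqspacetimeestunifprop} follows immediately from \eqref{eqspacetimeeststrprop}.

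The first step is to identify the singular set. By Definition \ref{defn-uniform-prop}, for a uniformly propagative operator every eigenspace of $M_{0,hom}(\xi)$ has constant dimension on $\Rn\setminus\set{0}$, hence each distinct eigenvalue has constant multiplicity there and no eigenvalue crossings occur for $\xi\neq 0$. Consequently the discriminant of the minimal polynomial $Q_{min}^{M_{0,hom}}(\theta;\xi)$ does not vanish on $\Rn\setminus\set{0}$, so $Z=\emptyset$ and $\Zbar=\set{0}$.

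The second step is to show that this single-point singular set is $s$-polar in the sense of Remark \ref{rembigZ}, so that $\Upsilon^s_Z=\Lcal^{2,s}(\Rn,\CC^K)$. Concretely, one needs that $C^\infty_0(\Rn\setminus\set{0},\CC^K)$ is dense in $\Hcal^s(\Rn,\CC^K)$ for $s>\tfrac12$ (in the relevant dimension $n\geq 3$, where $s<n/2$ is easily arranged within the range of interest). This is a standard capacity fact: the Bessel $s$-capacity of a single point in $\Rn$ vanishes whenever $s<n/2$, so any $\widehat h\in\Hcal^s$ can be approximated in $\Hcal^s$-norm by smooth functions vanishing near the origin. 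Taking the inverse Fourier transform gives $\Upsilon^s_Z=\Lcal^{2,s}(\Rn,\CC^K)$, and multiplying by $(1+|x|^2)^{s/2}$ gives $\Upsilon_Z=\Lcal^2(\Rn,\CC^K)$. Hence $\mathcal{E}$ is the identity operator on $\Lcal^2$.

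The final step is mechanical: applying $\mathcal{E}=I$ to the definition $u_Z(x,t)=\mathcal{E}[(1+|x|^2)^{-s/2}u(x,t)]$ yields $u_Z(x,t)=(1+|x|^2)^{-s/2}u(x,t)$, and substituting into \eqref{eqspacetimeeststrprop} produces
\begin{equation*}
\int\limits_\RR\int\limits_{\Rn}(1+|x|^2)^{-s}|u(x,t)|^2\,dx\,dt=\int\limits_\RR\int\limits_{\Rn}|u_Z(x,t)|^2\,dx\,dt\leq C\|u_0\|_0^2,
\end{equation*}
which is exactly \eqref{eqspacetimeestunifprop}. The only nontrivial point in the argument is the capacity claim in step two, but this is a known fact from the theory cited in Remark \ref{rembigZ}; everything else is a direct specialization of Theorem \ref{thmspacetimeeststrprop}.
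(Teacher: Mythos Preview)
Your approach is the same as the paper's: reduce to Theorem~\ref{thmspacetimeeststrprop} by checking that in the uniformly propagative case the projection $\mathcal{E}$ is the identity. The paper's proof is a single line: ``In this case by definition $Z=\emptyset$ so that $u_Z(x,t)=(1+|x|^2)^{-s/2}u(x,t)$.'' You are in fact more careful than the paper here, since you observe that even when $Z=\emptyset$ one still has $\Zbar=\{0\}$ by definition, and therefore one must check that $C^\infty_0(\Rn\setminus\{0\},\CC^K)$ is dense in $\Hcal^s(\Rn,\CC^K)$ --- a point the paper passes over in silence. Your capacity argument handles this, and as you implicitly note, it suffices to prove the estimate for $s$ close to $\tfrac12$ (where $s<n/2$ holds) since the weighted norm on the left of \eqref{eqspacetimeestunifprop} is monotone decreasing in $s$.
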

\begin{proof} In this case by definition $Z=\emptyset$ so that $u_Z(x,t)=(1+|x|^2)^{-\frac{s}{2}}  u (x, t)$ in Theorem ~\ref{thmspacetimeeststrprop}.
\end{proof}
\begin{rem}\label{remifconjecture} Note that if Conjecture ~\ref{conjecdense} is shown to be true, then $\Upsilon_Z=\Lcal^2(\Rn, \CC^K)$ and $u_Z=(1+|x|^2)^{-\frac{s}{2}}  u (x, t)$ for any strongly propagative system.
\end{rem}
   The result in Theorem ~\ref{thmspacetimeeststrprop} involved the ``projected'' function $u_Z$ because we let the initial function $u_0$ be any function in $\Lcal^2(\Rn, \mathbb{C}^K).$ However, if we restrict the support of $u_0$ away from $\Zbar$ we can get an improved estimate , but with a constant that  depends on the support of $u_0$ as follows.
   \begin{thm}\label{thmspacetimeeststrprop1} Assume that $L_{0,hom}$ is strongly propagative.  Let $\,s>\frac12, $ and let $u (x, t)$ be the
solution to ~\eqref{eqstpropspacetime}-\eqref{initialstprop}-\eqref{equoperpkernelstprop}. Let $\mathfrak{K}\Subset \Rn\setminus \Zbar$ be a compact set
and assume further that  supp$[\widehat u_0] \subseteq \mathfrak{K}.$

Then there exists a constant $C=C_{s, n,\mathfrak{K}}>0$  such that
\be  \int\limits_{\RR}\int\limits_{\Rn} (1+|x|^2)^{-s} | u (x, t) |^2 dx  d t \leq \\
 C  \| u_0 \|^2_0.
\ee

\end{thm}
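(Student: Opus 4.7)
The plan is to adapt the duality argument from the proof of Theorem \ref{thmspacetimeeststrprop}, exploiting the fact that the unitary group $e^{-itL_{0,hom}}$ acts in Fourier space as pointwise multiplication by $\exp(-itM_{0,hom}(\xi))$. Consequently $\mathrm{supp}\,\widehat{u}(\cdot,t)\subseteq\mathfrak{K}$ for every $t\in\RR$, which will let us bypass the projection $\mathcal{E}$ that caused the weaker conclusion of Theorem \ref{thmspacetimeeststrprop} and instead estimate the full weighted solution $(1+|x|^2)^{-s/2}u(x,t)$.

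First I would fix a smooth scalar cutoff $\chi\in C^\infty_0(\Rn\setminus\Zbar)$ with $\chi\equiv 1$ on $\mathfrak{K}$, and denote by $P_\chi$ the Fourier multiplier with symbol $\chi$. Since $\chi\equiv 1$ on the spatial Fourier support of $u(\cdot,t)$ uniformly in $t$, we have $u=P_\chi u$. Given any test function $w\in C^\infty_0(\RR^{n+1},\CC^K)$, I would transfer the multiplier onto $w$ via $[u,w]=[P_\chi u,w]=[u,P_\chi w]$. The function $P_\chi w(\cdot,t)$ has spatial Fourier support in $\mathrm{supp}\,\chi\Subset\Rn\setminus\Zbar$, and hence $P_\chi w(\cdot,t)\in \Upsilon^s_Z$ for almost every $t$ (one verifies that compact Fourier support away from $\Zbar$ together with $\widehat{P_\chi w}(\cdot,t)\in\Hcalth$ suffices for approximation in $\Hcalth$ by $C^\infty_0(\Rn\setminus\Zbar,\CC^K)$ functions via standard mollification restricted to $\Rn\setminus\Zbar$).

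Next I would invoke the intermediate inequality \eqref{equwestUpsilon} from the proof of Theorem \ref{thmspacetimeeststrprop}, applied with $P_\chi w$ in place of $w$, to obtain
\[
|[u,w]|=|[u,P_\chi w]|\leq (2\pi)^{1/2}C\,\|u_0\|_0\Bigl(\int_\RR \|P_\chi w(\cdot,t)\|_{0,s}^2\,dt\Bigr)^{1/2}.
\]
To close the argument I would invoke the standard boundedness of $P_\chi$ on $\Lcal^{2,s}(\Rn,\CC^K)$: since $\mathcal{F}^{-1}\chi$ is Schwartz, the Peetre-type inequality $(1+|x|^2)^{s/2}\leq C_s(1+|y|^2)^{s/2}(1+|x-y|^2)^{s/2}$ combined with Young's convolution inequality yields $\|P_\chi w(\cdot,t)\|_{0,s}\leq C_\mathfrak{K}\|w(\cdot,t)\|_{0,s}$. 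Then choosing $w(x,t)=(1+|x|^2)^{-s/2}f(x,t)$ with $f\in C^\infty_0(\RR^{n+1},\CC^K)$, so that $\|w(\cdot,t)\|_{0,s}=\|f(\cdot,t)\|_0$, and applying the standard duality in $\Lcal^2(\RR^{n+1},\CC^K)$ delivers the stated estimate, with the constant $C_{s,n,\mathfrak{K}}$ absorbing the dependence of $\chi$ on $\mathfrak{K}$.

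The main (mild) obstacle is verifying the boundedness of $P_\chi$ on the weighted space $\Lcal^{2,s}$; this is classical but deserves explicit mention since the paper does not develop weighted-Sobolev multiplier theorems. A secondary point is the claim that $P_\chi w(\cdot,t)\in\Upsilon^s_Z$; the key input is that $\mathrm{supp}\,\chi$ has positive distance from $\Zbar$, which makes the approximation by $C^\infty_0(\Rn\setminus\Zbar,\CC^K)$ straightforward. All other ingredients are immediate consequences of the proof of Theorem \ref{thmspacetimeeststrprop}.
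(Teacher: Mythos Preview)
Your proposal is correct and follows essentially the same route as the paper: both introduce a cutoff $\chi\in C^\infty_0(\Rn\setminus\Zbar)$ with $\chi\equiv 1$ near $\mathfrak{K}$, replace the test function $w$ by $w_1=P_\chi w$ (so that $[u,w]=[u,w_1]$), apply the intermediate estimate \eqref{equwestUpsilon} to $w_1$, and finish using $\|w_1(\cdot,t)\|_{0,s}\leq C_{\mathfrak K}\|w(\cdot,t)\|_{0,s}$. You are simply more explicit than the paper about why $P_\chi w(\cdot,t)\in\Upsilon^s_Z$ and why $P_\chi$ is bounded on $\Lcal^{2,s}$.
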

\begin{proof} We repeat the proof of Theorem ~\ref{thmspacetimeeststrprop}. However instead of  ~\eqref{eqwxtinUpsilons} we take any $\widehat{w} (\xi,t) \in C^\infty_0((\RR^{n+1},\CC^K).$ Now let $\chi(\xi)\in C^\infty_0(\Rn\setminus\Zbar)$ so that $\chi\equiv 1$ on a compact neighborhood of
$\mathfrak{K}.$ Define $\widehat{w_1}(\xi,t)=\chi(\xi)\widehat{w}(\xi,t)$ and let $w_1(x,t)$ be the inverse Fourier transform. Clearly $w_1(x,t)\in \Lcal^2(\RR_t,\Upsilon^s_Z)$ and with the notation of ~\eqref{eqw1xtinUpsilon} we have
 $$[u,w_1]=[u,w].$$
    The proof now proceeds as before and is completed by noting that with a constant $C=C_{s, n,\mathfrak{K}}>0$
      $$\|w_1(\cdot,t)\|_{0,s}\leq C\|w(\cdot,t)\|_{0,s},\quad t\in\RR.$$
\end{proof}

\end{document}